\crefname{algocf}{Algorithm}{Algorithms}
\Crefname{algocf}{Algorithm}{Algorithms}
\crefname{algocfline}{line}{lines}
\Crefname{algocfline}{line}{lines}
\crefname{crefAlgorithm}{Algorithm}{Algorithms}
\newcommand{\N}{\mathbb{N}}
\newcommand{\R}{\ensuremath{\mathcal{R}}}
\renewcommand{\O}{\ensuremath{\mathcal{O}}}
\newcommand{\Q}{\ensuremath{\mathbb{Q}}}
\renewcommand{\P}{{\ensuremath{\mathsf P}}}
\newcommand{\T}{{\cal T}}
\newcommand{\G}{\ensuremath{\mathcal{G}}}
\newcommand{\NP}{\ensuremath{\mathsf{NP}}}
\newcommand{\XP}{\ensuremath{\mathsf{XP}}}
\newcommand{\FPT}{\ensuremath{\mathsf{FPT}}}
\newcommand{\W}{\ensuremath{\mathsf{W}}}
\renewcommand{\P}{\ensuremath{\mathcal{P}}}
\newcommand{\cP}{\ensuremath{\mathsf{P}}}
\newcommand\tef[1]{\text{\ref{#1}}}
\tikzstyle{vertex}=[inner sep=2pt,draw,circle, minimum size=8pt]
\tikzstyle{setedge} = [draw,line width=3]
\tikzstyle{nonedge} = [draw,dotted]
\tikzstyle{posedge} = [draw,dashed]
\tikzstyle{edge} = [-]
\tikzstyle{path} = [decoration={snake, amplitude=0.5mm}, decorate]
\newcommand{\Titel}{Computing Hamiltonian Paths with Partial Order Restrictions}
\title{\Titel}
\author{Jesse Beisegel}{Institute of Mathematics, Brandenburg University of Technology, Cottbus, Germany}{jesse.beisegel@b-tu.de}{https://orcid.org/0000-0002-8760-0169}{}
\author{Fabienne Ratajczak}{Institute of Mathematics, Brandenburg University of Technology, Cottbus, Germany}{fabienne.ratajczak@b-tu.de}{https://orcid.org/0000-0002-5823-1771}{}
\author{Robert Scheffler}{Institute of Mathematics, Brandenburg University of Technology, Cottbus, Germany}{robert.scheffler@b-tu.de}{https://orcid.org/0000-0001-6007-4202}{}
\authorrunning{J. Beisegel, F. Ratajczak, and R. Scheffler} 
\newtheorem{problem1}{Problem}
\crefname{algocfline}{line}{lines}
\Crefname{algocfline}{line}{lines}
\crefname{algocf}{Algorithm}{Algorithms}
\crefname{algocf}{Algorithm}{Algorithms}
\keywords{Hamiltonian path, partial order, outerplanar graph, partial search order, TSP-PC, parameterized complexity}
\begin{document}

\maketitle

\begin{abstract}
When solving the Hamiltonian path problem it seems natural to be given additional precedence constraints for the order in which the vertices are visited. For example one could decide whether a Hamiltonian path exists for a fixed starting point, or that some vertices are visited before another vertex. We consider the problem of finding a Hamiltonian path that observes all precedence constraints given in a partial order on the vertex set. We show that this problem is \NP-complete even if restricted to complete bipartite graphs and posets of height~2. In contrast, for posets of width~$k$ there is a known $\O(k^2 n^k)$ algorithm for arbitrary graphs with $n$ vertices. We show that it is unlikely that the running time of this algorithm can be improved significantly, i.e., there is no $f(k) n^{o(k)}$ time algorithm under the assumption of the Exponential Time Hypothesis. Furthermore, for the class of outerplanar graphs, we give an $\O(n^2)$ algorithm for arbitrary posets.
\end{abstract}

\section{Introduction}\label{sec:intro}

The Traveling Salesman Problem (TSP) is a classical graph theoretical problem with a wide range of applications. For some of these it is necessary to add additional precedence constraints to the vertices which ensure that some vertices are visited before others in a tour. For example in Pick-up and Delivery Problems~\cite{parragh2008survey,parragh2008survey2} or the Dial-a-Ride problem~\cite{psaraftis1980dynamic}, goods or people have to be picked up before they can be brought to their destination.

The Traveling Salesman Problem with Precedence Constraints (TSP-PC) is a generalization of TSP where the precedence constraints are implemented by a partial order. Here, the goal is to find a shortest tour with a fixed starting vertex $s$ and precedence constraints of the form $s < v < w$, i.e., vertex $v$ needs to be visited before vertex $w$~\cite{ahmed2001travelling,bianco1994exact}.\footnote{Note that the authors fix a starting vertex to use a partial order on a cyclically ordered tour.} Similarly, the Sequential Ordering Problem (SOP), also known as Minimum Setup Scheduling Problem, is a generalization of the path variant of TSP: Given a complete digraph $D_n=(V, A_n)$ with costs $c_{ij}$ for all $(i,j) \in A_n$ and a transitively closed acyclic digraph $P=(V, R)$, find a topological ordering of $P$ such that the resulting Hamiltonian path in $D_n$ has minimum cost. Note that if $P$ has an empty arc-set this problem is equivalent to the path variant of TSP. This problem has been studied, e.g., in~\cite{ascheuer1993cutting,colbourn1985minimizing,escudero1988inexact,escudero1988implementation}. All of these problems are clearly \NP-complete and research in these topics has mainly been focused on heuristic algorithms and integer-programming approaches.

Both the TSP-PC and the SOP are defined over complete graphs with an additional cost function and the computational complexity of these problems arises from the structure of that function. Typical algorithmic approaches to simplify these \NP-complete problems use structures of the weight function, such as metric distance measures (as for example in Christofides' Algorithm~\cite{christofides2022worst} for cycles or in~\cite{traub2021reducing} for paths). While the Hamiltonian Cycle Problem and the Hamiltonian Path Problem can be modeled as special instances of the TSP using suitable weight functions, they are computationally complex due to the graph's edge-set and are, thus, treated as independent problems. Here, a useful approach has been to restrict the problem to a particular class of input graphs with a special structure of the edges. For example, for interval graphs~\cite{keil1985finding} and graphs of bounded treewidth~\cite{courcelle1990monadic} it has been shown that both problems can be solved in polynomial time. Furthermore, for graphs of bounded bandwidth it has been shown that it is possible to find a minimum Hamiltonian cycle for any given cost function~\cite{lawler1985traveling}. Note that for any class that includes the complete graphs (such as interval graphs), however, finding a minimum weight Hamiltonian cycle is at least as hard as TSP.

In this paper, we will study the problem of finding (minimum) Hamiltonian paths with precedence constraints. As a Hamiltonian path implies two different linear orderings, we fix one of these orderings and call it an ordered Hamiltonian path. The focus on the structure of the edge set (as opposed to TSP) merits the following definition.

\begin{problem1}{Partially Ordered Hamiltonian Path Problem (POHPP)}
\begin{description}
\item[\textbf{Instance:}] A graph $G$, a partial order $\pi$ on the vertex set of $G$.
\item[\textbf{Question:}]
Is there an ordered Hamiltonian path $(v_1, \dots, v_n)$ in $G$ such that for all $i, j \in \{1,\dots,n\}$ it holds that if $(v_i,v_j) \in \pi$, then $i \leq j$?
 \end{description}
\end{problem1}

If, in addition, we are given a cost function $c: E \rightarrow \mathbb{Q}$ the Minimum Partially Ordered Hamiltonian Path Problem (MinPOHPP) asks for an ordered Hamiltonian path $\P$ with minimum cost such that the linear ordering of $\P$ is a linear extension of $\pi$.

The POHPP is clearly a generalization of the Hamiltonian path problem, as the given partial order can be the trivial one. Therefore, in its most general form it is \NP-complete. However, by restricting to particular graph classes or special partial orders it is possible to find polynomial-time algorithms.

\subparagraph{Related Work}

Some special cases of these problems have already been studied in the literature. One such example is the Hamiltonian path problem with one fixed endpoint which can easily be described using a partial order. For the class of interval graphs it is known that the Hamiltonian path problem can be solved in polynomial time~\cite{keil1985finding}. If we fix one endpoint of the Hamiltonian path on an interval graph, the problem is still known to be solvable in polynomial time~\cite{asdre2010fixed,li2017linear}.

However, the computational complexity of deciding whether there is a Hamiltonian path between two fixed vertices of an interval graph is still unknown. This problem can be solved efficiently on proper interval graphs~\cite{asdre2010polynomial}, distance-hereditary graphs~\cite{hsieh2004efficient} and rectangular grid graphs~\cite{itai1982hamilton}. For weighted complete graphs there exist approximation schemes for maximum-weight Hamiltonian paths with two fixed endpoints~\cite{monnot2005approximation}. In~\cite{anily1999approximation}, the authors study another special case, the ordered cluster traveling salesman path problem, where the path has to follow an ordered partition of the vertex set and has to travel through each set of the partition consecutively.

The $k$-fixed-endpoint path cover problem forms a generalization of the Hamiltonian path problem with fixed endpoints. Given a graph $G$ and a set $S$ containing $k$ vertices of $G$, the task is to find a minimum path cover where each vertex of $S$ is an end-vertex of one of the paths in the path cover. A path cover is a set of vertex-disjoint paths which together contain all vertices of the graph. For some graph classes, such as proper interval graphs~\cite{asdre2010polynomial,mertzios2010optimal}, bipartite distance-hereditary graphs~\cite{yeh1998path}, cographs~\cite{asdre2008linear}, trees~\cite{baker2013k-fixed}, threshold graphs~\cite{baker2013k-fixed} and block graphs~\cite{baker2013k-fixed}, polynomial-time algorithms are known. 

Note that any Hamiltonian path of a graph forms a \emph{Depth First Search} (DFS) ordering of the graph. Furthermore, the graph search \emph{Lexicographic Depth First Search} (LDFS)~\cite{corneil2008unified} can be used to solve the Hamiltonian path problem on cocomparability graphs in linear time~\cite{corneil2013ldfs,kohler2014linear}. Similar to Hamiltonian paths, several researchers have considered the question whether there are orderings of particular searches, among them DFS and LDFS, that end in a given vertex (see, e.g.,~\cite{beisegel2019end,charbit2014influence,corneil2010end,rong2022graph}). More recently, this \emph{End-Vertex Problem} was generalized to the \emph{Partial Search Order Problem} which asks whether for a given partial order $\pi$ on a graph's vertex set there is a search ordering of the graph that is a linear~extension~of~$\pi$~\cite{scheffler2022linearizing}.

Similar problems with precedence constraints on edges have also been considered in the literature. One example is the Chinese postman problem with precedence constraints on the edges. Here, the edge-set is divided into two disjoint subsets and the edges of one set have to be visited before the edges of the other~\cite{dror1987postman}.

\subparagraph{Our Contribution} We show that the Partially Ordered Hamiltonian Path Problem is \NP-complete for partial orders of height~2 both on complete bipartite graphs and on complete split graphs, i.e., on classes of graphs where the Hamiltonian path problem is trivial. We also use this result to show that the Partial Search Order Problem is \NP-hard for DFS and LDFS on complete bipartite graphs, while the End-Vertex Problem is trivial on that graph class. This answers the question raised in \cite{rong2022polynomial,scheffler2022linearizing} whether there are examples where the Partial Search Order Problem is hard but the End-Vertex Problem is easy.

The (Min)POHPP as well as the TSP-PC can be solved in $\O(k^2 n^k)$ time for arbitrary graphs with $n$ vertices if we restrict ourselves to partial orders of width $k$~\cite{colbourn1985minimizing}. We complement this result by showing that the (Min)POHPP and the TSP-PC are $\W[1]$-hard if they are parameterized by the width of the partial order. Furthermore, we show that it is unlikely that the running time of the algorithm given in~\cite{colbourn1985minimizing} can be improved significantly since there is no $f(k) n^{o(k)}$-time algorithm for (Min)POHPP or TSP-PC for any computable function $f$ assuming that the Exponential Time Hypothesis is true. In contrast, we present an $\FPT$ algorithm for another parameter, the \emph{distance to linear order}. This parameter measures the number of vertices that have to be deleted from $G$ to make the partial order a linear order. Lastly, we show that on the class of outerplanar graphs the (Min)POHPP can be solved in $\O(n^2)$ time for arbitrary partial orders.

\section{Preliminaries}

\subsection{Partial Orders} Given a set $X$, a \emph{(binary) relation} $\R$ on $X$ is a subset of the set $X^2 = \{(x,y)~|~x,y \in X\}$. The set $X$ is called the \emph{ground set of \R}. The \emph{reflexive and transitive closure} of a relation $\R$ is the smallest relation $\R'$ such that $\R \subseteq \R'$ and $\R'$ is reflexive and transitive. A \emph{partial order} $ \pi $ on a set $X$ is a reflexive, antisymmetric and transitive relation on $X$. The tuple $(X, \pi)$ is then called a \emph{partially ordered set}. We also denote $(x,y) \in \pi$ by $x \prec_\pi y$ if $x \neq y$. A partial order $\pi$ is said to be \emph{trivial} if for all $(x,y) \in \pi$ it holds that $x = y$, i.e., $\pi$ is made up only of reflexive tuples. 
A \emph{minimal element} of a partial order $\pi$ on $X$ is an element $x \in X$ for which there is no element $y \in X$ with $y \prec_\pi x$.

A \emph{chain} of a partial order $\pi$ on a set $X$ is a set of elements $\{x_1,\ldots,x_k\} \subseteq X$ such that $x_1 \prec_\pi x_2 \prec_\pi \ldots \prec_\pi x_k$. The \emph{height} of $\pi$ is the number of elements of the largest chain of $\pi$. An \emph{antichain} of $\pi$ is a set of elements $\{x_1,\ldots,x_k\} \subseteq X$ such that $x_i \not\prec_\pi x_j$ for any $i, j \in \{1, \ldots, k\}$. The \emph{width} of $\pi$ is the number of elements of the largest antichain of $\pi$.

\begin{theorem}[Dilworth~\cite{dilworth1987decomposition}]\label{thm:dilworth}
Any partially ordered set $(X,\pi)$ whose partial order $\pi$ has width $k$ can be partitioned into $k$ disjoint chains.
\end{theorem}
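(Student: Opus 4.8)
The statement asserts that a poset of width $k$ can be covered by $k$ disjoint chains, and my plan would be to prove the (easy) matching lower bound for context and then concentrate on the real content, the construction of $k$ chains. For the lower bound, note that if $A\subseteq X$ is an antichain with $|A|=k$, then in \emph{any} partition of $X$ into chains the elements of $A$ must lie in pairwise distinct chains, since a chain meets an antichain in at most one element; hence every chain partition uses at least $k$ chains, so the number $k$ in the theorem is best possible and the whole task reduces to exhibiting \emph{some} partition into exactly $k$ chains. Since the ground set here is the vertex set of a graph, $X$ is finite, so I would induct on $|X|$, the case $X=\0$ being vacuous.

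For the inductive step I would use the short exchange argument of Galvin. Fix a maximal element $a$ of $\pi$ (dual to the minimal elements defined above) and set $X'=X\setminus\{a\}$; its width is either $k-1$ or $k$. If it is $k-1$, the inductive hypothesis gives $k-1$ chains covering $X'$, and adjoining the singleton chain $\{a\}$ finishes the proof. The interesting case is when $X'$ still has width $k$: induction then yields a partition of $X'$ into chains $C_1,\dots,C_k$, and because these are exactly $k$ chains, every antichain of $X'$ of size $k$ meets each $C_i$ in exactly one element. For each $i$ I define $x_i$ to be the largest element of $C_i$ that belongs to \emph{some} maximum antichain of $X'$, and the crux is to show that $A^\ast=\{x_1,\dots,x_k\}$ is itself an antichain (hence a maximum one).

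The step I expect to be the main obstacle is precisely this claim that $A^\ast$ is an antichain, which needs an exchange argument rather than a direct computation. Given $x_i\in A_i$ and $x_j\in A_j$ for maximum antichains $A_i,A_j$ with $i\neq j$, one looks at the element of $A_j$ lying on $C_i$; by the maximality built into the definition of $x_i$ this element sits at or below $x_i$ on $C_i$, and combining this with the incomparabilities inside $A_j$ (and the symmetric observation on $C_j$) forces $x_i$ and $x_j$ to be incomparable. Granting that $A^\ast$ is a maximum antichain, I would finish as follows: since $a$ is maximal, $A^\ast\cup\{a\}$ cannot be an antichain (that would give width $k+1$), so $a$ is comparable to some $x_j$, and maximality of $a$ yields $x_j\prec_\pi a$. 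Then $K=\{a\}\cup\{\,y\in C_j:(y,x_j)\in\pi\,\}$ is a chain, and $X\setminus K$ has width at most $k-1$: any size-$k$ antichain $B$ of $X\setminus K$ would be a maximum antichain of $X'$, hence meet $C_j$ in a single element $b_j$, and by the definition of $x_j$ we would have $(b_j,x_j)\in\pi$, forcing $b_j\in K$ and contradicting $B\cap K=\0$. Applying the inductive hypothesis to $X\setminus K$ produces $k-1$ chains, and together with $K$ these give the desired $k$ chains of $X$.

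As a sanity check and an alternative route, I would keep in mind the reduction to König's theorem: form the bipartite graph on two copies $\{x^-:x\in X\}$ and $\{x^+:x\in X\}$ with an edge $x^-y^+$ whenever $x\prec_\pi y$, so that matchings correspond to sets of ``successor'' links and a matching of size $\nu$ yields a chain partition into $|X|-\nu$ chains; König's theorem then identifies the minimum number of chains with the maximum antichain. I would present the inductive proof as the primary argument, since it is self-contained, and reserve the matching formulation as a cross-check, noting that it relies on König's theorem as an external black box.
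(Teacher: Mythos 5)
Your proof is correct, but there is no proof in the paper to compare it against: the paper states Dilworth's theorem as a classical result with a citation and uses it as a black box, namely to obtain the decomposition of $(V(G),\pi)$ into $k$ disjoint chains at the start of Algorithm~\ref{algo:width} (computed there via the algorithm of~\cite{felsner2003recognition}). What you have reconstructed is Galvin's short inductive proof, and it is sound. The lower bound is right (a chain meets an antichain in at most one element), and the exchange step you flag as the crux works exactly as you indicate: if $z$ is the element of the maximum antichain $A_j$ lying on $C_i$, then $z \preceq x_i$ by the defining maximality of $x_i$, so $x_i \prec_\pi x_j$ would place the comparable pair $z, x_j$ inside the antichain $A_j$; the case $x_j \prec_\pi x_i$ is symmetric via $A_i \cap C_j$. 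Your verification that $X \setminus K$ has width at most $k-1$ is also correct, since any size-$k$ antichain of $X\setminus K$ would be a maximum antichain of $X'$ and hence meet $C_j$ in an element $b_j \preceq x_j$, i.e., an element of $K$. Two cosmetic points only: the inductive hypothesis applied to $X\setminus K$ yields $w \le k-1$ chains rather than exactly $k-1$, but your own lower bound closes this (a cover by $w+1$ chains of a width-$k$ poset forces $w+1 \ge k$); and in the easy case you implicitly use that deleting one element decreases the width by at most one, which deserves a sentence (an antichain of $X$ loses at most the element $a$). The K\H{o}nig-matching formulation is a legitimate alternative route and a good cross-check, though as you note it outsources the combinatorial core to K\H{o}nig's theorem; the inductive argument you present as primary is the self-contained choice.
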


A \emph{linear ordering} of a finite set $X$ is a bijection $\sigma: X \rightarrow \{1,2,\dots,|X|\}$. We will often refer to linear orderings simply as orderings. Furthermore, we will denote an ordering by a tuple $(x_1, \ldots, x_n)$ which means that $\sigma(x_i) = i$. Given two elements $x$ and $y$ in $X$, we say that $x$ is \emph{to the left} (resp. \emph{to the right}) of $y$ if $\sigma(x)<\sigma(y)$ (resp. $\sigma(x)>\sigma(y)$) and we denote this by $x \prec_{\sigma}y$ (resp.  $x \succ_{\sigma}y$). 

A \emph{linear extension} of a partial order $\pi$ is a linear ordering $\sigma$ of $X$ that fulfills all conditions of $\pi$, i.e., if $x \prec_\pi y$, then $x \prec_\sigma y$. 
The \emph{dimension}\index{dimension of a partial order|textit} of a partial order $\pi$ on a set $X$ is the smallest number $\ell$ for which there is a set $\{\sigma_1, \ldots, \sigma_\ell\}$ of $\ell$ linear extensions of $\pi$ such that $\pi$ is equal to the intersection of these linear extensions, i.e., $x \prec_\pi y$ if and only if $x \prec_{\sigma_i} y$ for all $i \in \{1, \ldots, \ell\}$.

\subsection{Graphs} All the graphs considered here are simple, finite, non-empty, undirected and connected. Given a graph $G$, we denote by $V(G)$ its \emph{set of vertices} and by $E(G)$ its \emph{set of edges}. 
A path $P $ of $G$ is a non-empty subgraph of $G$ with $V(P) = \{v_1, \ldots , v_k\}$ and $E(P) = \{v_1v_2 , \ldots , v_{k-1}v_k\}$, where $v_1, \ldots, v_k$ are all distinct. We say that a path $P$ is \emph{Hamiltonian} if $V(P) = V(G)$. For further basic graph theoretical notation we refer to~\cite{diestel2017graph}.

A \emph{vertex ordering} of $G$ is a linear ordering of the vertex set $ V(G)$. An \emph{ordered path} is a tuple $\P = (P, \sigma)$ such that $P$ is a path and $\sigma$ is a linear ordering on the vertex set of $P$ where consecutive vertices of $\sigma$ are adjacent in $P$. We sometimes denote the ordering of an ordered path $\P = (P, \sigma)$ as $\lambda(\P) := \sigma$. Note that any path of $G$ induces at most two different ordered paths.

A graph is a \emph{split graph} if its vertex set can be partitioned into a clique and an independent set. A graph is a \emph{complete split graph} if there is such a partition where every vertex of the independent set is adjacent to all the vertices of the clique. A graph $G$ is a \emph{bipartite graph} if its vertex set can be partitioned into two independent sets $A$ and $B$. Furthermore, a bipartite graph is \emph{complete bipartite} if every vertex of $A$ is adjacent to every vertex of $B$. A bipartite graph is \emph{balanced} if $|A| = |B|$.

There is a close relation between bipartite and split graphs. Given a split graph $G$ with the vertex partition in the set $C$ of clique vertices and the set $I$ of independent vertices, the graph having the same vertex set and only containing the edges of $G$ between $C$ and $I$ is a bipartite graph. Vice versa, completing one of the partition sets of a bipartite graph to a clique yields a split graph.

A graph is called \emph{planar} if it has a crossing-free embedding in the plane, and together with this embedding it is called a \emph{plane graph}. For a plane graph $G$ we call the regions of $\mathbb{R}^2 \setminus G$ the \emph{faces} of $G$. Every plane graph has exactly one unbounded face which is called the \emph{outer face}. A graph is called \emph{outerplanar} if it has a crossing-free embedding such that all of the vertices belong to the outer face and such an embedding is also called \emph{outerplanar}.

For a connected graph $G$, a vertex $v \in V(G)$ is a \emph{cut vertex} of $G$ if $G - v$ is not connected. If a graph has no cut vertex, then it is called \emph{2-connected}. The \emph{blocks} of a graph are its inclusion maximal 2-connected subgraphs. The \emph{block-cut tree} $\mathcal{T}$ of $G$ is the bipartite graph that contains a vertex for every cut vertex of $G$ and a vertex for every block of $G$ and the vertex of block $B$ is adjacent to the vertex of a cut vertex $v$ in $\mathcal{T}$ if $B$ contains $v$. It is easy to see that the block-cut tree of a connected graph is in fact a tree.

\section{NP-completeness of POHPP}

It is clear that the POHPP is \NP-complete on general graphs and arbitrary partial orders, as with a trivial partial order this problem is equivalent to the \NP-complete Hamiltonian path problem~\cite{garey2002computers}. Similarly, for any graph class in which the Hamiltonian path problem is \NP-complete the POHPP will also be hard. This leaves two types of assumptions we can use in order to make this problem tractable: We can restrict the problem to a graph class in which the Hamiltonian path problem is solvable in polynomial time, or we can restrict the partial orders such that the trivial partial order is forbidden in some way.

In this section we will show that even restricting to the class of complete bipartite graphs -- for which the Hamiltonian path problem is trivial -- is not sufficient to make POHPP tractable. While this does not prove that restriction to another graph class cannot be successful (as we will see in Section~\ref{outer}), it shows that the partial order plays an important role in the complexity of this problem.

For non-empty disjoint sets $A$ and $B$ with $|A| = |B|$ and a partial order $\pi$ on $A \dot{\cup} B$ we define a linear extension $(x_1, \ldots, x_n)$ of $\pi$ for which $x_i \in A$ if and only if $i$ is odd as an \emph{alternating linear extension}. This definition gives rise to the following decision problem.

\begin{problem1}{Alternating Linear Extension Problem}
\begin{description}
\item[\textbf{Instance:}] Two non-empty disjoint sets $A$ and $B$ with $|A| = |B|$, partial order $\pi$ on $A \dot{\cup} B$.
\item[\textbf{Question:}] 
Is there an alternating linear extension of $\pi$?
 \end{description}
\end{problem1}

In the following, we will show that the Alternating Linear Extension Problem is \NP-complete even if we restrict the problem to partial orders $\pi$ for which $u \prec_\pi v$ implies that $u \in A$ and $v \in B$ for all $u,v \in A \dot{\cup} B$. We say that such a partial order is \emph{oriented from $A$ to $B$}. Note that these partial orders are of height at most 2 and that any partial order of height~2 is oriented from the set of minima to the set of non-minimal elements.

\begin{lemma}\label{lem:right}
Let $A$ and $B$ be two non-empty disjoint sets with $|A| = |B| = n$ and a partial order $\pi$ on $A \dot{\cup} B$ that is oriented from $A$ to $B$. There exists an alternating linear extension of $\pi$ if and only if there exist linear orderings  $\sigma_A$ of the elements of $A$ and $\sigma_B$  of the elements of $B$ such that for all $u \prec_\pi v$ it holds that $\sigma_A(u) \leq \sigma_B(v)$. We call this the \emph{right-successor property}.
\end{lemma}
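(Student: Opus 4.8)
The plan is to prove both implications by exhibiting an explicit correspondence between an alternating linear extension and the pair of orderings $(\sigma_A, \sigma_B)$, exploiting the fact that in an alternating extension the position of each element is completely determined by its rank within $A$ or within $B$. Throughout I will use that, since $\pi$ is oriented from $A$ to $B$, the only non-reflexive comparabilities are of the form $a \prec_\pi b$ with $a \in A$ and $b \in B$; in particular there are no constraints among the elements of $A$, none among those of $B$, and none directed from $B$ to $A$. Hence interleaving \emph{any} ordering of $A$ with \emph{any} ordering of $B$ in the alternating pattern automatically respects every within-part constraint, and the only condition that can fail is an $A$-to-$B$ comparability.

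For the forward direction, I would start from an alternating linear extension $(a_1, b_1, a_2, b_2, \ldots, a_n, b_n)$, where $a_i \in A$ occupies position $2i-1$ and $b_i \in B$ occupies position $2i$, and define $\sigma_A(a_i) := i$ and $\sigma_B(b_i) := i$. These are bijections precisely because the extension alternates. Now take any $u \prec_\pi v$; by orientation $u = a_i$ and $v = b_j$ for some $i,j$. Since a linear extension places $u$ before $v$, we have $2i-1 < 2j$, which for integers is equivalent to $i \le j$, i.e.\ $\sigma_A(u) \le \sigma_B(v)$. This is exactly the right-successor property.

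For the converse, suppose $\sigma_A$ and $\sigma_B$ satisfy the right-successor property. Writing $a_i := \sigma_A^{-1}(i)$ and $b_i := \sigma_B^{-1}(i)$, I would form the tuple $(a_1, b_1, \ldots, a_n, b_n)$, which is an alternating ordering of $A \dot{\cup} B$ by construction. It remains only to verify that it is a linear extension, and by the observation above the only comparabilities to check are $u \prec_\pi v$ with $u = a_i \in A$ and $v = b_j \in B$. Here $\sigma_A(u) = i$ and $\sigma_B(v) = j$, so the hypothesis gives $i \le j$, whence the position $2i-1$ of $u$ satisfies $2i-1 \le 2j-1 < 2j$; thus $u$ precedes $v$ and the tuple is an alternating linear extension.

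The only delicate point, which I would state carefully, is the equivalence between the strict precedence $2i-1 < 2j$ of positions in the sequence and the non-strict inequality $i \le j$ of the right-successor property: the off-by-one offset between the odd position of an $A$-element and the even position of a $B$-element is precisely what turns the strict ordering constraint into the non-strict ``$\le$''. Beyond setting up this correspondence correctly, there is no substantial obstacle, since both constructions are routine and use only that $\pi$ has no comparabilities other than the $A$-to-$B$ ones.
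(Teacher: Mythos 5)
Your proof is correct and follows essentially the same route as the paper's: both directions rest on the position correspondence $a \mapsto 2\sigma_A(a)-1$, $b \mapsto 2\sigma_B(b)$ (equivalently $\sigma_A(u) = \lceil \tau(u)/2 \rceil$, $\sigma_B(v) = \tau(v)/2$ in the reverse direction), and your careful handling of the off-by-one turning the strict positional inequality $2i-1 < 2j$ into the non-strict $i \le j$ matches the paper's use of the ceiling. The only cosmetic difference is that you verify the right-successor property directly where the paper argues by contradiction.
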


\begin{proof}
Suppose such orderings $\sigma_A$ and $\sigma_B$ exist. Then, we can find an alternating linear extension $\tau$ as follows: For all elements $u$ in $A$ we set $\tau(u)=2 \cdot \sigma_A(u)-1$ and for all elements $v$ of $B$ we set $\tau(v)=2 \cdot \sigma_B(v)$. Since $|A|=|B|$, the resulting ordering $\tau$ is alternating. Now, suppose there is a pair of elements $u \in A$ and $v \in B$ with $u \prec_{\pi}v$ and $v \prec_{\tau} u$. The latter implies that $\sigma_A(u) > \sigma_B(v)$ which is a contradiction to the right-successor property.   

To show the second direction of the equivalence we suppose that an  alternating linear extension $\tau$ is given. We consider for all $u \in A$ the ordering $\sigma_A$ with $\sigma_A(u)=\lceil \frac{\tau(u)}{2}\rceil$ and for all $v \in B$ the ordering $ \sigma_B(v)=\frac{\tau(v)}{2}$. Since $\tau$ is alternating and every vertex of the graph $G$ appears only once in $\tau$, it is easy to see that $\sigma_A$ and $\sigma_B$ are bijective with $\sigma_A: A \rightarrow \{1,...,n\}$ and $\sigma_B: B \rightarrow \{1,...,n\}$. For all $u \prec_{\pi} v$ it holds that $u \prec_{\tau} v$. This implies $\lceil \frac{\tau(u)}{2}\rceil \leq\frac{\tau(v)}{2} $ and thus $\sigma_A(u) \leq \sigma_B(v)$.
\end{proof}

Let $\pi$ be partial order on $A \dot{\cup} B$ with $A = \{a_1, \dots, a_n\}$ and $B = \{b_1, \dots, b_n\}$ such that $\pi$ is oriented from $A$ to $B$. We can define an associated matrix $M^{\pi} \in \{0,1\}^{n \times n}$ in the following way: We index the \emph{rows} of $M^{\pi}$ with the elements of $A$ and the \emph{columns} with the elements of $B$. Now, we set $M^{\pi}_{ij} = 1$ if and only if $a_i \prec_{\pi} b_j$.

A matrix $M \in \{0,1\}^{n \times n}$ is called \emph{lower triangular} if all the entries above the main diagonal are zero, i.e., $M_{ij} = 0$ for $i<j$. Similarly, a matrix $M$ is called \emph{upper triangular} if all the entries below the main diagonal are zero, i.e., $M_{ij} = 0$ for $i>j$. Note that any upper triangular matrix can be transformed into a lower triangular matrix and vice versa by applying a permutation to the rows and columns. We say that $M$ is \emph{triangular} if it is either lower or upper triangular.

\begin{lemma}\label{lem:triang}
Let $\pi$ be a partial order on $A \dot{\cup} B$ that is oriented from $A$ to $B$. There exists an alternating linear extension of $\pi$ if and only if there exist permutation matrices $R$ and $Q$ such that $RM^{\pi}Q$ is (upper) triangular.
\end{lemma}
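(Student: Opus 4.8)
The plan is to interpret the row and column permutations of $M^{\pi}$ as the two orderings $\sigma_A$ and $\sigma_B$ appearing in \cref{lem:right}, and to recognize the triangular shape as a matrix reformulation of the right-successor property. This reduces the statement to a direct application of \cref{lem:right}.

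First I would fix the correspondence between permutation matrices and orderings. Multiplying $M^{\pi}$ on the left by a permutation matrix $R$ reorders its rows, i.e.\ the elements of $A$, and multiplying on the right by a permutation matrix $Q$ reorders its columns, i.e.\ the elements of $B$. Hence choosing $R$ and $Q$ amounts to choosing linear orderings $\sigma_A$ of $A$ and $\sigma_B$ of $B$, where $\sigma_A(u)$ is the index of the row carrying $u$ in $RM^{\pi}Q$ and $\sigma_B(v)$ is the index of the column carrying $v$. Because $R$ and $Q$ are permutation matrices, the induced maps $\sigma_A$ and $\sigma_B$ are genuine bijections onto $\{1,\dots,n\}$. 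Under this identification the entry of $RM^{\pi}Q$ in the row carrying $u \in A$ and the column carrying $v \in B$ equals the corresponding entry of $M^{\pi}$, which is $1$ precisely when $u \prec_\pi v$.

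Next I would translate the triangular condition. The matrix $RM^{\pi}Q$ is upper triangular exactly when all of its $1$-entries lie on or above the main diagonal, that is, whenever the entry in the row carrying $u$ and the column carrying $v$ is $1$, we must have $\sigma_A(u) \le \sigma_B(v)$. By the previous paragraph this says precisely: for all $u \prec_\pi v$ it holds that $\sigma_A(u) \le \sigma_B(v)$, which is exactly the right-successor property. Consequently, a pair $(R,Q)$ making $RM^{\pi}Q$ upper triangular exists if and only if orderings $\sigma_A$ and $\sigma_B$ with the right-successor property exist, and by \cref{lem:right} this in turn is equivalent to the existence of an alternating linear extension of $\pi$.

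The argument is essentially bookkeeping, and I expect the only mild subtlety to be consistency of conventions, namely verifying that left and right multiplication act on rows and columns as claimed so that the $1$-positions track the relation $\prec_\pi$ correctly. The parenthetical restriction to \emph{upper} triangular form is harmless: since we already range over all row and column permutations, an upper triangular representative can always be converted into a lower triangular one by a further permutation of rows and columns, as noted before the statement, so fixing either convention yields the same existence claim.
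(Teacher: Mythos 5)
Your proposal is correct and follows essentially the same route as the paper's proof: both identify the row and column permutations with the orderings $\sigma_A$ and $\sigma_B$ of \cref{lem:right} and observe that upper triangularity of the permuted matrix is exactly the right-successor property. Your single equivalence chain merely compresses the paper's two explicit directions, and your remark on converting lower to upper triangular form matches the paper's note preceding the lemma.
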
 

\begin{proof}
If there exists an alternating linear extension of $\pi$, then by \cref{lem:right} there exist orders $\sigma_A$ of the vertices of $A$ and $\sigma_B$ of the vertices of $B$ that fulfill the right-successor property. In this case we can permute the rows of $M^{\pi}$ such that row $i$ represents element $\sigma_A^{-1}(i)$ for all $i \in \{1, \ldots , n\}$. Similarly, we permute the columns of $M^{\pi}$ such that column $j$ represents element $\sigma_B^{-1}(j)$ for all $j \in \{1, \ldots , n\}$ to get $M^{'}$. Then by definition of $M^{\pi}$ we see that $M^{'}_{ij} = 0$ for $i > j$ as $\sigma_A$ and $\sigma_B$ fulfill the right-successor property. Therefore, $M^{'}$ is an upper triangular matrix.

To show the second direction, we suppose that $A$ and $B$ have been ordered such that the matrix $M^{\pi}$ is given in upper triangular form. The permutation of $A$ given by the order of the rows and the permutation of $B$ given by the order of the columns can be regarded as the orders $\sigma_A$ and $\sigma_B$. Now, $a_i \prec_{\pi} b_j$ implies that $M^{\pi}_{ij} = 1$. As $M^{\pi}$ is in upper triangular form, this means that $i \leq j$ and, thus, that $\sigma_A(a_i) \leq \sigma_B(b_j)$. Therefore, $\sigma_A$ and $\sigma_B$ fulfill the right-successor property and by Lemma~\ref{lem:right} there exists an alternating linear extension of $\pi$.
\end{proof}

The problem of deciding for a given square $\{0,1\}$-matrix $M$ whether there exist permutation matrices $R$ and $Q$ such that $RMQ$ is triangular was shown to be \NP-complete by Fertin et al.~\cite{fertin2015obtaining} and, independently, by Gerbner et al.~\cite{gerbner2016topological}. Interestingly, the problem that Gerbner et al.~use in their reduction -- constrained topological sorting of directed acyclic graphs -- is related to the Alternating Linear Extension Problem. Combining the hardness of the matrix triangularization problem with \cref{lem:triang} implies the following.

\begin{theorem}\label{lem:extnp}
The Alternating Linear Extension Problem is $\NP$-complete for sets $A$ and $B$ even if the partial order is oriented from $A$ to $B$.
\end{theorem}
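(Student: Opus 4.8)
The plan is to establish membership in $\NP$ directly and then to obtain $\NP$-hardness by a polynomial-time reduction from the matrix triangularization problem, using \cref{lem:triang} as the bridge. Membership in $\NP$ is immediate: an alternating linear extension is a certificate of size $\O(n)$, and given such a tuple $(x_1, \dots, x_{2n})$ one can check in polynomial time both that $x_i \in A$ holds exactly when $i$ is odd and that every pair $(u,v) \in \pi$ appears in the correct relative order.

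For hardness, I would take as input a square matrix $M \in \{0,1\}^{n \times n}$ and build an instance of the Alternating Linear Extension Problem as follows. Set $A = \{a_1, \dots, a_n\}$ and $B = \{b_1, \dots, b_n\}$, interpreting the rows of $M$ as the elements of $A$ and the columns as the elements of $B$, and define $\pi$ to be the reflexive closure of the relation that sets $a_i \prec_\pi b_j$ exactly when $M_{ij} = 1$. The first routine point to verify is that $\pi$ is genuinely a partial order oriented from $A$ to $B$: antisymmetry and transitivity hold trivially because every non-reflexive pair goes from an element of $A$ to an element of $B$, so there are neither $2$-cycles nor chains of length greater than two. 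By construction the associated matrix of this instance satisfies $M^\pi = M$, and the whole construction is clearly computable in polynomial time.

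Correctness then follows from \cref{lem:triang}: the order $\pi$ admits an alternating linear extension if and only if there are permutation matrices $R$ and $Q$ with $RM^\pi Q = RMQ$ upper triangular. The only genuine subtlety, and the main thing to get right, is that the source problem asks for $RMQ$ to be \emph{triangular} (upper \emph{or} lower), whereas \cref{lem:triang} speaks of the upper triangular case. This gap is closed by the observation already recorded in the preliminaries that reversing the row and column orders turns a lower triangular matrix into an upper triangular one; hence $M$ can be made triangular by row and column permutations if and only if it can be made upper triangular, and the two formulations of the source problem coincide. Combining this equivalence with the established $\NP$-hardness of matrix triangularization yields the claim, the remaining details being routine bookkeeping on the correspondence $M \leftrightarrow M^\pi$.
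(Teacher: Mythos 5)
Your proposal is correct and follows exactly the route the paper takes: the paper proves this theorem simply by combining the \NP-hardness of matrix triangularization (Fertin et al., Gerbner et al.) with \cref{lem:triang}, which is precisely your reduction. You merely spell out the details the paper leaves implicit (the explicit construction of $\pi$ from $M$ with $M^\pi = M$, \NP{} membership, and the upper-vs-lower triangular point already settled in the preliminaries), all of which are handled correctly.
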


Using this result, we can finally show that the POHPP is \NP-complete on complete bipartite graphs.

\begin{theorem}\label{thm:hamnp}
The Partially Ordered Hamiltonian Path Problem is \NP-complete for balanced complete bipartite graphs $G = (A \dot\cup B, E)$ and partial orders that are oriented from $A$ to $B$.
\end{theorem}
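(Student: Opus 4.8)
The plan is to reduce the Alternating Linear Extension Problem, shown $\NP$-complete in \cref{lem:extnp}, to the POHPP on balanced complete bipartite graphs. Membership in $\NP$ is immediate, since an ordered Hamiltonian path is a polynomial-size certificate whose validity (adjacency of consecutive vertices together with compatibility with $\pi$) can be checked in polynomial time. For the hardness, given an instance $(A,B,\pi)$ of the Alternating Linear Extension Problem with $|A|=|B|=n$ and $\pi$ oriented from $A$ to $B$, I would simply take $G$ to be the balanced complete bipartite graph on $A \dot\cup B$ and keep the same partial order $\pi$; this transformation is clearly polynomial. It then remains to prove that $\pi$ admits an alternating linear extension if and only if $G$ has an ordered Hamiltonian path that respects $\pi$.

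One direction is direct. If $\tau$ is an alternating linear extension of $\pi$, then $\tau$ places the elements of $A$ and $B$ in alternating positions, so any two consecutive vertices lie in opposite parts and are therefore adjacent in the complete bipartite graph $G$. Thus $\tau$ is the ordering of an ordered Hamiltonian path, and since $\tau$ extends $\pi$ it respects every precedence constraint.

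The converse is where the only real subtlety arises, and it is the step I expect to require the most care. Since $G$ is complete bipartite with no edges inside $A$ or inside $B$, every Hamiltonian path alternates between the two parts, but it may begin in $A$ \emph{or} in $B$, whereas an alternating linear extension is by definition required to start in $A$. Rather than arguing that no $\pi$-respecting path can start in $B$ (which need not hold, as a $B$-element with no $A$-predecessor can be minimal), I would route the argument through the right-successor property of \cref{lem:right}. Given an ordered Hamiltonian path $(v_1,\dots,v_{2n})$ respecting $\pi$, I define $\sigma_A$ and $\sigma_B$ by recording, separately for each part, the order in which its vertices are encountered along the path (the $i$-th vertex of $A$ met receives $\sigma_A$-value $i$, and likewise for $B$); these are bijections onto $\{1,\dots,n\}$. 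Whenever $u \prec_\pi v$ — which forces $u \in A$, $v \in B$, and $u$ before $v$ on the path — a one-line parity computation on the positions gives $\sigma_A(u) \le \sigma_B(v)$ in both cases: if the path starts in $A$ then $u,v$ sit at positions $2i-1 < 2j$, and if it starts in $B$ at positions $2i < 2j-1$, and each inequality yields $i \le j$. Hence $\sigma_A,\sigma_B$ satisfy the right-successor property, and \cref{lem:right} produces the desired alternating linear extension. The crux is precisely this uniform treatment of the two starting sides, which the intermediary \cref{lem:right} reduces to the elementary parity check above.
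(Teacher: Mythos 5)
Your proposal is correct, and your reduction itself---take $G$ to be the balanced complete bipartite graph on $A \dot\cup B$ and keep $\pi$ unchanged---is exactly the paper's, as is the easy direction from an alternating linear extension to a Hamiltonian path. Where you diverge is the converse. The paper argues by path surgery: if the $\pi$-respecting Hamiltonian path starts in $B$, it must end in some $a^* \in A$ (the path alternates and $|A|=|B|$), and since $\pi$ is oriented from $A$ to $B$, the vertex $a^*$ is a minimal element of $\pi$, so pulling $a^*$ to the front yields another Hamiltonian path (complete bipartiteness supplies the new first edge) whose ordering is an alternating linear extension. You instead leave the path untouched: you read off $\sigma_A$ and $\sigma_B$ from the order in which each side's vertices appear along the path, verify the right-successor property $\sigma_A(u) \le \sigma_B(v)$ by the parity computation ($2i-1 < 2j$ in one case, $2i < 2j-1$ in the other, each forcing $i \le j$), and invoke \cref{lem:right}. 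Both arguments are sound and of comparable length; the paper's is self-contained at the level of the path and makes the role of $a^*$'s minimality vivid, while yours treats the two possible starting sides uniformly and reuses \cref{lem:right}, which the paper proved en route to \cref{lem:extnp} but does not itself deploy in this proof. Your parenthetical observation that a $\pi$-respecting path genuinely can start in $B$ (a $B$-element need not have a predecessor, hence can be minimal) is also correct, and is precisely the circumstance the paper's $a^*$-relocation step exists to handle.
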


\begin{proof}
We reduce the Alternating Linear Extension Problem to the Partially Ordered Hamiltonian Path Problem. Let $G=(A \dot{\cup} B,E)$ be a balanced complete bipartite graph with $|A| = |B| = n$ and let $\pi$ be a partial order $A \dot{\cup} B$ that is oriented from $A$ to $B$. Suppose there exists an ordered Hamiltonian path $\P$ such that $\lambda(\P)$ is a linear extension of $\pi$. As $G$ is bipartite, this path must alternate between vertices of $A$ and vertices of $B$. If $ \P$ begins in a vertex of $A$, then $\lambda(\P)$ forms an alternating linear extension of $\pi$. If $\P$ starts in a vertex of $B$, then it must end in a vertex $a^* \in A$, as $\P$ is alternating and $|A|=|B|$. In this case, the path $\P'$ constructed by pulling $a^*$ to the beginning of $\P $ is a Hamiltonian path and $\lambda(\P')$ forms an alternating linear extension of $\pi$. This holds as $\pi$ is oriented from $A$ to $B$ and, thus, vertex $a^*$ is a minimal element of $\pi$.

Suppose $\pi$ has an alternating linear extension $\tau = (v_1, \ldots , v_{2n})$. As $G=(A \dot{\cup} B,E)$ is complete bipartite and $\tau$ is alternating, every $v_iv_{i+1}$ forms an edge in $G$ for $i \in \{1, \ldots, 2n-1\}$. Therefore, $\P$ with $\lambda(\P) = \tau$ forms an ordered Hamiltonian path such that $\lambda(\P)$ is a linear extension of $\pi$.
\end{proof}

Note that the POHPP stays \NP-complete even if one set of the bipartition of the complete bipartite graph has one vertex more than the other. As we have seen in the proof of \cref{thm:hamnp}, we can always assume that a proper Hamiltonian path starts in $A$ and ends in $B$. If we add one vertex to $A$ and force it to be to the right of all other vertices, then there is solution for the POHPP of the new instance if and only if there is such a solution for the old instance. We can use this observation to show that the POHPP is also \NP-complete on complete split graphs. Consider those complete split graphs with independent set $I$ and clique $C$ for which $|I| -1 = |C|$. Obviously, any Hamiltonian path must alternate between $I$ and $C$. Thus, finding a Hamiltonian path in such a graph is equivalent to finding one in a complete bipartite graph, as none of the edges between vertices in $C$ can be used.
Therefore, POHPP is \NP-complete on complete split graphs as well.

\begin{corollary}
The Partially Ordered Hamiltonian Path Problem is \NP\-/complete for complete split graphs.
\end{corollary}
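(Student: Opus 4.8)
The plan is to establish \NP-membership trivially and to prove \NP-hardness by a reduction from the Partially Ordered Hamiltonian Path Problem on balanced complete bipartite graphs, whose hardness is \cref{thm:hamnp}. Membership in \NP\ is clear, since an ordered Hamiltonian path is a certificate of linear size that can be checked against the edge set and against every relation of $\pi$ in polynomial time. For the hardness, the guiding idea is that a complete split graph in which the independent set is exactly one vertex larger than the clique behaves, as far as Hamiltonian paths are concerned, like a complete bipartite graph, because the clique edges can never be used.

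Concretely, I would start from an instance of the balanced complete bipartite problem: a graph $G = (A \dot{\cup} B, E)$ with $|A| = |B| = n$ and a partial order $\pi$ oriented from $A$ to $B$. From this I build a complete split graph $G'$ by introducing one fresh vertex $a^*$, taking the independent set to be $I := A \cup \{a^*\}$ and the clique to be $C := B$ (that is, we complete the part $B$ into a clique and join every vertex of $I$ to every vertex of $C$). Then $|I| = n+1 = |C| + 1$. The partial order $\pi'$ on $V(G')$ is obtained from $\pi$ by adding the relations $x \prec_{\pi'} a^*$ for every other vertex $x$, so that $a^*$ becomes the unique maximum and is therefore forced into the last position of every linear extension, hence to be the final vertex of any admissible ordered Hamiltonian path.

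The heart of the argument -- and the step I expect to be the main obstacle -- is the following counting observation: in any complete split graph with $|I| = |C| + 1$, every Hamiltonian path must strictly alternate between $I$ and $C$ and therefore uses no edge inside the clique $C$. Indeed, since $I$ is independent, no two of its $|C|+1$ vertices can be consecutive on the path, so between each pair of consecutive $I$-vertices there must lie at least one vertex of $C$; with only $|C|$ clique vertices available this forces the pattern $I, C, I, C, \dots, I$ with exactly one clique vertex as each separator, and in particular no two clique vertices are ever adjacent on the path. Consequently the Hamiltonian paths of $G'$ are exactly the Hamiltonian paths of the underlying complete bipartite graph $K_{n+1,n}$ with parts $I$ and $C$.

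It then remains to match these alternating paths with alternating linear extensions of $\pi$. If $G'$ admits an ordered Hamiltonian path $\P$ with $\lambda(\P)$ a linear extension of $\pi'$, then by the alternation it has the form $I, C, \dots, C, I$ and, since $a^*$ is the unique maximum, $a^*$ is its last vertex; deleting $a^*$ leaves an alternating ordering of $A \dot{\cup} B$ that extends $\pi$, i.e.\ an alternating linear extension, so by \cref{thm:hamnp} the original instance is a yes-instance. Conversely, given an alternating linear extension $\tau = (v_1, \dots, v_{2n})$ of $\pi$, appending $a^*$ yields $(v_1, \dots, v_{2n}, a^*)$, which alternates between $I$ and $C$, uses only $I$-$C$ edges present in $G'$, and respects $\pi'$ because $\tau$ respects $\pi$ and $a^*$ is placed last; hence it is an admissible ordered Hamiltonian path of $G'$. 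Since the construction is polynomial, this establishes the \NP-hardness and completes the proof.
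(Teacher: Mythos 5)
Your proof is correct and follows essentially the same route as the paper: the paper likewise adds one extra vertex to the independent side forced to be last, completes $B$ into a clique, and observes that $|I| = |C|+1$ forces every Hamiltonian path to alternate between $I$ and $C$, so no clique edge can be used and the problem coincides with the one on complete bipartite graphs from \cref{thm:hamnp}. Your write-up merely makes explicit the counting argument for the forced alternation, which the paper states as obvious.
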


The last two results imply that the POHPP is also \NP-hard for some other graph classes for which the regular Hamiltonian path problem can be solved in polynomial time. For example, it is known that the Hamiltonian path problem can even be solved dynamically in polynomial time on chain graphs, as well as on threshold graphs~\cite{beisegel2023fully}. Here, the dynamic algorithm gives an update on the existence of a Hamiltonian path in constant time after addition or deletion of an edge. However, as chain graphs contain the complete bipartite graphs and threshold graphs contain the complete split graphs, the POHPP is \NP-complete on both of these classes. The same holds for the class of interval graphs for which Hamiltonian path can be solved in linear time~\cite{keil1985finding} and which contain the threshold graphs.

\Cref{thm:hamnp} can also be used to answer an open question on a problem concerning graph search orderings introduced by Scheffler~\cite{scheffler2022linearizing}.

\begin{problem1}{Partial Search Order Problem (PSOP) of graph search $\mathcal{A}$ }
\begin{description}
\item[\textbf{Instance:}] A graph $G$, a partial order $\pi$ on $V(G)$.
\item[\textbf{Question:}]
Is there a search ordering of $G$ constructed by $\mathcal{A}$ that is a linear extension of $\pi$.
 \end{description}
\end{problem1}

This problem is a generalization of the \emph{End-Vertex Problem} introduced by Corneil et al.~\cite{corneil2010end}, where one has to decide whether there is a search ordering of $G$ that ends in a given vertex. Scheffler~\cite{scheffler2022linearizing} as well as Rong et al.~\cite{rong2022polynomial} asked for a combination of graph search and graph class, where the End-Vertex Problem is solvable in polynomial time, but the PSOP is \NP-hard. Using \cref{thm:hamnp}, we can present such a combination. We consider the well-known graph search DFS and balanced complete bipartite graphs. Because of the symmetry of the graph, every vertex can be the end-vertex of some DFS ordering. However, the PSOP is \NP-hard as the following theorem shows.

\begin{theorem}\label{thm:npc-dfs}
The Partial Search Order Problem of DFS is \NP-hard for balanced complete bipartite graphs $G = (A \dot\cup B, E)$ and partial orders that are oriented from $A$ to $B$.
\end{theorem}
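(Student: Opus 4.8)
The plan is to give an identity reduction from the Partially Ordered Hamiltonian Path Problem on balanced complete bipartite graphs, which is \NP-complete by \cref{thm:hamnp}. Concretely, I would take the very same instance $(G,\pi)$, with $G = (A \dot{\cup} B, E)$ balanced complete bipartite and $\pi$ oriented from $A$ to $B$, and argue that $(G,\pi)$ admits an ordered Hamiltonian path whose ordering is a linear extension of $\pi$ if and only if it admits a DFS ordering that is a linear extension of $\pi$. The whole argument then rests on a structural characterization of the DFS orderings of a balanced complete bipartite graph, so no gadget construction is needed.

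The central claim I would establish is that, on a balanced complete bipartite graph, the DFS orderings are exactly the orderings of its ordered Hamiltonian paths. One inclusion is the general fact already noted in the introduction, that every Hamiltonian path ordering is a DFS ordering; on $G$ this can be seen directly, since for any same-side pair $u \prec_\sigma v$ the vertex immediately preceding $v$ lies on the opposite side, is therefore adjacent to $v$, and lies strictly between $u$ and $v$, so the DFS parent condition is satisfied. For the reverse inclusion I would argue that a depth-first search of $G$ cannot backtrack before every vertex has been visited: a vertex of $A$ has all of $B$ as its neighbourhood, and vice versa, so the currently active vertex still has an unvisited neighbour as long as the opposite side is not yet fully visited. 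A short induction on the visit order then shows that the search strictly alternates between $A$ and $B$, that each newly visited vertex is adjacent to its predecessor, and that no side is exhausted before the final step; hence the visit order is an alternating Hamiltonian path.

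With this characterization in hand the conclusion is immediate. Since $G$ is complete bipartite, every ordered Hamiltonian path of $G$ alternates between $A$ and $B$, so the set of ordered Hamiltonian paths of $G$ coincides exactly with the set of DFS orderings of $G$. Consequently a linear extension of $\pi$ of the first kind exists if and only if one of the second kind does, the reduction is the identity on instances, and it is trivially polynomial. \NP-hardness of the Partial Search Order Problem of DFS then follows from \cref{thm:hamnp}. I expect the main obstacle to be the reverse inclusion of the characterization, namely ruling out that backtracking ever produces a visit order in which two consecutive vertices lie on the same side; the balancedness of $G$ together with the fact that each vertex's neighbourhood is an entire part of the bipartition are exactly what make the no-early-backtracking argument go through, and I would be careful to phrase the induction so that it covers both possible starting sides.
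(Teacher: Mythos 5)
Your proposal is correct and matches the paper's own proof in all essentials: the paper also uses the identity reduction from \cref{thm:hamnp}, arguing that on a balanced complete bipartite graph a DFS never backtracks before all vertices are visited (so every DFS ordering alternates and induces a Hamiltonian path) and, conversely, that the ordering of any ordered Hamiltonian path is a DFS ordering. Your version merely spells out the no-early-backtracking induction in more detail than the paper does.
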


\begin{proof}
Let $G = (A\dot{\cup} B, E)$ be a complete bipartite graph with $|A| = |B|$. Let $\sigma$ be a DFS ordering of $G$. DFS always visits a neighbor of the last visited vertex $v$ as long as there is still some unvisited neighbor of $v$. Therefore, $\sigma$ alternates between the sets $A$ and $B$ and, thus, $\sigma$ induces a Hamiltonian path. On the other hand, the ordering of any ordered Hamiltonian path of $G$ is a DFS ordering of $G$. Thus, if we are given a partial order $\pi$ on $V(G)$, then there is a DFS ordering of $G$ that is a linear extension of $\pi$ if and only if there is an ordered Hamiltonian path $\P$ of $G$ such that $\lambda(\P)$ is a linear extension of $\pi$. Due to \cref{thm:hamnp}, the PSOP is \NP-hard for DFS on complete bipartite graphs and partial orders oriented from one side of the bipartition to the other.
\end{proof}

We can also extend this result to the special variant LDFS which was introduced by Corneil and Krueger~\cite{corneil2008unified}. On complete bipartite graphs every DFS ordering is also an LDFS ordering (see \cite{krnc2021graphs} and {\cite[Theorem~1.2.74]{scheffler2023ready}}). Therefore, \cref{thm:npc-dfs} implies that the PSOP is also \NP-hard for LDFS on this graph class.

\begin{corollary}
The Partial Search Order Problem of LDFS is \NP-hard for balanced complete bipartite graphs $G = (A \dot\cup B, E)$ and partial orders that are oriented from $A$ to $B$.
\end{corollary}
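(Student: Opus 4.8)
The plan is to derive the result directly from the hardness already established for DFS in \cref{thm:npc-dfs}, by exploiting the fact that on complete bipartite graphs the DFS orderings and the LDFS orderings coincide. Since LDFS is a refinement of DFS, every LDFS ordering of an arbitrary graph is in particular a DFS ordering; the only nontrivial direction is therefore that on a complete bipartite graph every DFS ordering is also an LDFS ordering. This latter inclusion is exactly what the cited results (\cite{krnc2021graphs} and \cite[Theorem~1.2.74]{scheffler2023ready}) provide, so that on the graph class in question the two families of orderings are identical.

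First I would record this equality of ordering families explicitly: for a complete bipartite graph $G$, a vertex ordering $\sigma$ of $V(G)$ is a DFS ordering if and only if it is an LDFS ordering. Next I would observe that this makes the PSOP for DFS and the PSOP for LDFS logically equivalent on every instance drawn from this class: given a partial order $\pi$ on $V(G)$, there is a DFS ordering of $G$ that is a linear extension of $\pi$ precisely when there is an LDFS ordering of $G$ that is a linear extension of $\pi$.

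Consequently, the reduction constructed in the proof of \cref{thm:npc-dfs} -- which outputs balanced complete bipartite graphs $G = (A \dot\cup B, E)$ together with partial orders oriented from $A$ to $B$ -- can be reused verbatim as a reduction to the PSOP for LDFS. A \emph{yes}-instance of the Alternating Linear Extension Problem is mapped to an instance admitting a DFS, and hence LDFS, ordering that is a linear extension of $\pi$, and a \emph{no}-instance to one admitting neither. Since this reduction is polynomial and the Alternating Linear Extension Problem is \NP-complete even under the orientation restriction (\cref{lem:extnp}), the PSOP for LDFS is \NP-hard on balanced complete bipartite graphs with partial orders oriented from $A$ to $B$.

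The only substantive ingredient is the coincidence of DFS and LDFS orderings on complete bipartite graphs, which we take from the literature; given this structural fact, the hardness transfer is immediate and the argument presents no further obstacle. If one wished to make the proof self-contained, the main work would instead lie in verifying that the lexicographic tie-breaking rule of LDFS never excludes any of the alternating DFS orderings on a complete bipartite graph, which follows from the uniform adjacency between $A$ and $B$.
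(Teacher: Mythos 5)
Your proposal is correct and matches the paper's argument: both derive the result from \cref{thm:npc-dfs} via the cited fact that on complete bipartite graphs every DFS ordering is an LDFS ordering, combined with the general inclusion that every LDFS ordering is a DFS ordering. You merely spell out the latter direction explicitly, which the paper leaves implicit.
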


\section{Partial Orders of Bounded Width}

As we have seen in the last section, the POHPP is \NP-complete for partial orders even if the height of the partial order is 2. Here, we will consider partial orders of fixed width. We will see that, unless $\cP = \NP$, the (Min)POHPP is not \NP-complete for any fixed width. If a partial order has width 1, then it is a linear ordering. In this case, the POHPP simply asks whether a given linear vertex ordering induces a Hamiltonian path in the graph. This problem can be solved straightforwardly in $\O(n + m)$ time on a graph with $n$ vertices and $m$ edges. In 1985, Colbourn and Pulleyblank~\cite{colbourn1985minimizing} formulated an idea for an algorithm that solves the Minimum Setup Scheduling Problem for partial orders of width $k$ in $\O(k^2 n^k)$ time. The Minimum Setup Scheduling Problem is in essence the same as a MinPOHPP for complete graphs (or the path variant of TSP-PC) showing that MinPOHPP is in $\XP$ for the width as parameter. So far, it has remained open whether this algorithm can be improved to an $\FPT$ algorithm. We show that this is unlikely by proving that the (Min)POHPP is \W[1]-hard if it is parameterized by the width of the partial order. Our proof yields an even stronger bound on the running time, as it shows that there is no $f(k) n^{o(k)}$-time algorithm for (Min)POHPP assuming that the Exponential Time Hypothesis is true. 

Before we prove the \W[1]-hardness result, we present a complete pseudo-code as well as a thorough analysis both of the correctness and the running time of the $\O(k^2 n^k)$ algorithm since Colbourn and Pulleyblank~\cite{colbourn1985minimizing} only give a brief sketch of the ideas underlying the algorithm.

\subsection{XP algorithm}

The procedure that solves the MinPOHPP for partial orders of fixed width $k$ is described in \cref{algo:width}. First we compute a minimal chain partition of the given partial order $\pi$ (see~\cite{felsner2003recognition}). If $\pi$ has width $k$, then by Dilworth's Theorem (see \cref{thm:dilworth}), we get a partition of the vertex set into $k$ disjoint chains $(C_1, \ldots, C_k)$ of $\pi$. Every chain $C_i$ is an ordered set $(v^i_1, \ldots, v^i_{\ell_i})$ of vertices of $G$ such that $v^i_1 \prec_\pi \dots \prec_\pi v^i_{\ell_i}$. We encode the $j$-th vertex of $C_i$ with $C_i[j]$, i.e., $C_i[j] = v_j^i$. 

Our algorithm works iteratively and computes for particular vertex sets $A$ the minimum cost path that consists of all the vertices of $A$ and does not contradict the restrictions given by the partial order $\pi$ (if such a path exists). A crucial part of this process is to decide whether a particular vertex is minimal in $\pi$ restricted to the vertices not in the set $A$. For this purpose, we introduce a variable $\xi_i(v)$ for any $v \in V(G)$ and any $i \in \{1,\ldots,k\}$ (see line~\ref{line:width-xi}). The variable $\xi_i(v)$ contains the maximal index $j$ for which $C_i[j] \prec_\pi v$, i.e., the first $j$ elements of chain $C_i$ are smaller than $v$ in $\pi$ and all other vertices of $C_i$ are not smaller than $v$ in $\pi$.

Our algorithm represents every ordered path $\P$ of $G$ whose order $\lambda(\P)$ is a prefix of a linear extension of $\pi$ by a tuple $(x_1, \ldots, x_k, \omega) \in \N_0^{k+1}$ with $x_i \leq |C_i|$ and $1 \leq \omega \leq k$. The integer $x_i$ is the index of the rightmost vertex in chain $C_i$ that is part of $\P$ or $0$ if no vertex of $C_i$ is part of $\P$. Note that all the vertices that are to the left of this vertex in $C_i$ must also be part of $\P$ as otherwise $\P$ could not be expanded to a linear extension of $\pi$. The integer $\omega$ gives the index of the chain containing the rightmost vertex of $\lambda(\P)$. We define the \emph{weight} of tuple $(x_1, \ldots, x_k, \omega)$ as the sum $\sum_{i=1}^k x_i$.

Our approach uses a vector $M$ that has one entry from $\Q \cup \{\infty\}$ for any of those tuples. If there is a suitable ordered path for such a tuple, then the respective entry of $M$ contains the minimum cost of such a path. Otherwise, it is set to $\infty$. We compute the entries of $M$ inductively using dynamic programming. We start with the tuples of weight one, i.e., only the first vertex of the ordered path is fixed (see line~\ref{line:width-start1}). Since any minimal element of $\pi$ is the first vertex of its respective chain, we only have to check which of these first vertices of the chains is a minimal element of $\pi$. Such a vertex $v$ is minimal if $\xi_i(v) = 0$ for all $1 \leq i \leq k$.

If we have computed the entries of $M$ for any tuple of weight $\ell - 1$, then we can also compute the entries for the tuples of weight $\ell$. Let $A = (x_1, \ldots, x_k, \omega)$ be such a tuple of weight $\ell$. Let $v$ be the $x_\omega$-th entry of $C_\omega$. To compute the entry of $M$ for tuple $A$, we have to check whether there is an ordered path $\P$ in $G$ such that $\P$ contains all the vertices up till the $x_i$-th entry for every chain $C_i$ and $\P$ ends in $v$. Furthermore, $\lambda(\P)$ has to be a prefix of a linear extension of $\pi$. If this is the case, then the ordered path $\P'$ that is constructed from $\P$ by deleting $v$ is represented by a tuple $A' = (y_1, \ldots, y_k, \psi)$ of weight $\ell - 1$ such that $y_\omega = x_\omega -1$ and $y_j = x_j$ for all $j \neq \omega$. Thus, it is sufficient to check for all these tuples whether their entry in $M$ is smaller than $\infty$. If so, we have to check whether the $y_\psi$-th vertex $u$ of $C_\psi$, i.e., the last vertex of the respective path is adjacent to $v$. If this holds, we check whether the entry of $M$ for $A'$ plus the cost of edge $uv$ is smaller than the current entry of $M$ for the tuple $A$ (see lines \ref{line:width15} and \ref{line:width1}).

\begin{algorithm2e}[t]
\small
    \KwIn{Connected graph $G$ with $n$ vertices, partial order $\pi$ on $V(G)$ of width $k$, cost function $c : E(G) \to \Q$} 
    \KwOut{Minimum cost of an ordered Hamiltonian path $\P$ of $G$ where $\lambda(\P)$ is a linear extension of $\pi$, or $\infty$ if no such path exists}
    \Begin{
		$(C_1, \ldots, C_k) \leftarrow$ decomposition of $(V(G), \pi)$ into $k$ disjoint chains\;
		\ForEach{$v \in V(G)$ and $i \in \{1,\ldots,k\}$} {
		    $\xi_i(v) \leftarrow \max\{j~|~C_i[j] \prec_\pi v$ or $j = 0\}$\;\label{line:width-xi}
		}  
		$S \leftarrow \{(x_1, \ldots, x_k, \omega)~|~0 \leq x_i \leq |C_i|,~1 \leq \omega \leq k,~x_\omega > 0\}$\;
		\ForEach{$(x_1,\ldots,x_k, \omega) \in S$}{
		  $v \leftarrow C_\omega[1]$\;
		  \uIf{$\sum_{i=1}^k x_i = 1$ and $\xi_i(v) = 0$ for all $1 \leq i \leq k$}{\label{line:width-start1}
		    $M(x_1,\ldots,x_k, \omega) \leftarrow 0$\;
		  }
		  \lElse{$M(x_1,\ldots,x_k, \omega) \leftarrow \infty$}\label{line:width-start2}
		}
		
		\For{$\ell \leftarrow 2$ \KwTo $n$}{
		  \ForEach{$(x_1,\ldots,x_k, \omega) \in S$ with $\sum_{i=1}^{k} x_i = \ell$}{
		    $v \leftarrow C_\omega[x_\omega]$\;
		    \If{$\xi_i(v) \leq x_i~\forall i \in \{1,\ldots,k\}$}{\label{line:width2}
				\For{$\psi \leftarrow 1$ \KwTo $k$}{
				 \lIf{$\psi = \omega$}{$u \leftarrow C_\psi[x_\psi - 1]$}
				 \lElse{$u \leftarrow C_\psi[x_\psi]$}
				 \If{$uv \in E(G)$}{\label{line:width3}
				 $c_{\text{new}} \leftarrow M(x_1, \ldots, x_{\omega-1}, x_\omega - 1, x_{\omega + 1}, \ldots, x_k, \psi) + c(uv)$\;\label{line:width15}
				 $M(x_1,\ldots,x_k, \omega) \leftarrow \min\{M(x_1,\ldots,x_k, \omega),~c_{\text{new}}$\}\;\label{line:width1}
				 }
				}
			}
		  }
		 }
	
		\Return $\min_{\omega \in \{1, \ldots, k\}} M(|C_1|,\ldots,|C_k|,\omega)$\;
	}
    \caption{MinPOHPP for fixed width}\label{algo:width}
\end{algorithm2e}

\begin{theorem}\label{thm:width}
Given a graph $G$ with $n$ vertices and a partial order $\pi$ on $V(G)$ of width $k \geq 2$, \cref{algo:width} solves the MinPOHPP in $\O(\min\{k^2 n^{k}, k^2 2^n\})$ time.
\end{theorem}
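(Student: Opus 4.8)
The plan is to establish the theorem in two parts: correctness of the dynamic program, proved by induction on the weight $\ell = \sum_{i=1}^k x_i$ of the tuples, and the running-time bound, obtained by counting the entries of $M$ and the work per entry. The invariant I would prove is that, upon termination, for every tuple $(x_1, \ldots, x_k, \omega) \in S$ the entry $M(x_1, \ldots, x_k, \omega)$ equals the minimum cost of an ordered path $\P$ in $G$ whose vertex set is exactly the \emph{prefix set} $\{C_i[j] : 1 \le j \le x_i,\ 1 \le i \le k\}$, which ends in $C_\omega[x_\omega]$, and whose order $\lambda(\P)$ is a prefix of some linear extension of $\pi$ (and $\infty$ if no such path exists). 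Two preparatory facts drive the argument. First, a vertex ordering of a set $W$ is a prefix of a linear extension of $\pi$ if and only if $W$ is a down-set (order ideal) of $\pi$ and the ordering extends $\pi$ restricted to $W$; the converse direction holds because the complement of a down-set is an up-set that may be appended in any extension order. Second, since the $\pi$-predecessors of $v = C_\omega[x_\omega]$ inside chain $C_i$ are precisely $C_i[1], \ldots, C_i[\xi_i(v)]$, the test $\xi_i(v) \le x_i$ for all $i$ in line~\ref{line:width2} states exactly that all predecessors of $v$ lie in the prefix set, and $\xi_\omega(v) = x_\omega - 1$ ensures that deleting $v$ again yields a prefix set, the one with $x_\omega$ decreased by one.

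For the induction, the base case $\ell = 1$ is handled in lines~\ref{line:width-start1} and~\ref{line:width-start2}: the one-vertex path on $C_\omega[1]$ is valid exactly when $C_\omega[1]$ is minimal, i.e.\ $\xi_i(C_\omega[1]) = 0$ for all $i$, and then has cost $0$. In the inductive step for a tuple $A = (x_1, \ldots, x_k, \omega)$ of weight $\ell$ with $v = C_\omega[x_\omega]$, I would prove both inequalities. If the test in line~\ref{line:width2} fails, then $v$ has a predecessor outside the prefix set, so no valid path ends in $v$ and $\infty$ is correct. If it holds, then any valid $\P$ for $A$ is obtained by appending $v$, across the edge $uv$, to a valid path $\P'$ ending in its second-to-last vertex $u$, and $\P'$ corresponds to the predecessor tuple $A' = (x_1, \ldots, x_\omega - 1, \ldots, x_k, \psi)$ where $\psi$ is the chain of $u$; conversely, since the prefix set of $A$ is a down-set in which $v$ is maximal, appending $v$ to any valid $\P'$ ending in a neighbour of $v$ produces a valid $\P$ for $A$. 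As the loop over $\psi$ enumerates exactly the possible chains of $u$ (with $u = C_\psi[x_\psi - 1]$ when $\psi = \omega$ and $u = C_\psi[x_\psi]$ otherwise), checks adjacency in line~\ref{line:width3}, and minimises over the inductively correct entries $M(A')$ in lines~\ref{line:width15} and~\ref{line:width1}, the stored value for $A$ is the claimed minimum. The returned value $\min_{\omega} M(|C_1|, \ldots, |C_k|, \omega)$ is then the minimum cost of an ordered Hamiltonian path whose order extends $\pi$.

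For the running time, the number of index tuples $(x_1, \ldots, x_k)$ is $\prod_{i=1}^k (|C_i| + 1)$, each contributing at most $k$ choices of $\omega$, so $|S| \le k \prod_{i=1}^k (|C_i| + 1)$; the work per entry is $\O(k)$ for the test of line~\ref{line:width2} together with the loop over $\psi$, which also dominates the initialisation and the preprocessing of the chain decomposition and of the values $\xi_i(v)$. Hence the total running time is $\O\bigl(k^2 \prod_{i=1}^k (|C_i|+1)\bigr)$, and it remains to bound $\prod_{i=1}^k(|C_i|+1)$ by $\min\{n^k, 2^n\}$. From $m+1 \le 2^m$ for $m \ge 0$ I get $\prod_i (|C_i|+1) \le 2^{\sum_i |C_i|} = 2^n$, while from $|C_i| \ge 1$, hence $|C_i|+1 \le 2|C_i|$, together with the AM--GM bound $\prod_i |C_i| \le (n/k)^k$, I get $\prod_i(|C_i|+1) \le (2n/k)^k \le n^k$, where the last inequality uses the hypothesis $k \ge 2$.

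The step I expect to be most delicate is maintaining the invariant that the prefix set of a reachable tuple is always a down-set of $\pi$, since the algorithm tests the predecessor condition only for the current last vertex: one must argue that the top element $C_i[x_i]$ of each other chain had its own predecessor condition verified earlier, at the moment it was the last vertex of a subpath. This is precisely what the induction supplies, but it forces the invariant to be phrased for the vertex set rather than only for the cost, and requires checking that the decremented tuple $A'$ is a genuine prefix set rather than an arbitrary index assignment.
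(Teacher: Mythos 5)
Your proof is correct and follows essentially the same route as the paper's: the same three-condition invariant on the entries of $M$ (prefix set determined by $(x_1,\ldots,x_k)$, prefix of a linear extension, last vertex $C_\omega[x_\omega]$), the same induction on the tuple weight with both directions of the decomposition argument, and the same $\O(k)$ work-per-entry accounting. The only minor variation is the tuple count: you bound $|S| \le k\prod_{i=1}^k(|C_i|+1)$ and then use $m+1 \le 2^m$ and AM--GM together with $k \ge 2$ to get $\min\{n^k, 2^n\}$, whereas the paper identifies each tuple with its set of chain-top vertices (a vertex subset of size at most $k$) and bounds the number of such sets by $\sum_{i=1}^k \binom{n}{i} \le \min\{n^k, 2^n\}$ --- both yield the claimed $\O(\min\{k^2 n^k, k^2 2^n\})$ bound.
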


\begin{proof}
We prove the following claim. For any tuple $(x_1, \ldots, x_k, \omega) \in S$, the respective $M$-value is the minimum cost of an ordered path $\P$ of $G$ fulfilling the following properties (or $\infty$ if no such path exists):

\begin{enumerate}[(i)]
  \item $\P$ contains the $j$-th element of $C_i$ if and only if $j \leq x_i$,\label{cond:width1}
  \item $\lambda(\P)$ is a prefix of a linear extension of $\pi$,\label{cond:width2}
  \item the last element of $\lambda(\P)$ is the $x_\omega$-th element of $C_\omega$.\label{cond:width3}
\end{enumerate} 

We prove this claim inductively. First we consider the tuples $A = (x_1, \ldots, x_k, \omega)$ with $\sum_{i=1}^k x_i= 1$, i.e., the respective ordered path contains exactly one element. The $M$-value of such a tuple is set to 0 if and only if the first element of $C_\omega$ is a minimal element of $\pi$, otherwise it is set to $\infty$ (see lines~\ref{line:width-start1}--\ref{line:width-start2} of \cref{algo:width}). Therefore, an ordered path $\P$ fulfilling the Conditions~\ref{cond:width1}--\ref{cond:width3} exists if and only if $M[A] < \infty$. If this is the case, then the minimal cost of such a path is $0 = M[A]$.

Now assume that the claim holds for all tuples $(x_1, \ldots, x_k, \omega) \in S$ with $\sum_{i=1}^k x_i = \ell - 1$. Let $A = (x_1, \ldots, x_k, \omega)$ be a tuple with $\sum_{i=1}^k x_i = \ell$ and let $v$ be the $x_\omega$-th element of $C_\omega$. We will now show that the $M$-value of $A$ is the minimal cost of an ordered path $\P$ fulfilling Conditions~\ref{cond:width1}--\ref{cond:width3} (if such a path exists).

First assume that the $M$-value of $A$ is $< \infty$. Then there is a tuple $A' = (y_1, \ldots, y_k, \psi) \allowbreak= (x_1, \ldots, x_{\omega-1},\allowbreak x_\omega - 1, x_{\omega + 1}, \ldots, x_k, \psi)$ whose $M$-value is $< \infty$ (see lines~\ref{line:width15} and~\ref{line:width1} of \cref{algo:width}). Thus, by induction there is an ordered path $\P$ of $G$ fulfilling the Conditions~\ref{cond:width1}--\ref{cond:width3} for $A'$. The last element of $\lambda(\P)$ is the $y_\psi$-th element $u$ of $C_\psi$. Due to line~\ref{line:width3}, $u$ is adjacent to $v$. Furthermore, $\xi_i(v) \leq x_i$ for all $i \in \{1,\ldots,k\}$, i.e., all the vertices of $C_i$ that are smaller than $v$ in $\pi$ are elements of $\P$. Therefore, we can add $v$ at the end of $\lambda(P)$ and get a path fulfilling the Conditions~\ref{cond:width1}--\ref{cond:width3} for the tuple $A$.

Now, assume that there is an ordered path fulfilling the Conditions~\ref{cond:width1}--\ref{cond:width3} for tuple $A$. Let $\P$ be the path with minimal cost. The last element of this path is $v$. We delete $v$ from $\P$ and get the ordered path $\P'$. Let $\psi$ be the index of the chain containing the last vertex $u$ of $\P'$. The edge $uv$ is in $G$ and the path $\P'$ fulfills Conditions~\ref{cond:width1}--\ref{cond:width3} for the tuple $A' = (x_1, \ldots, x_{\omega-1}, x_\omega - 1, x_{\omega + 1}, \ldots, x_k, \psi)$. Furthermore, $\P'$ must be the minimum cost path fulfilling these conditions since, otherwise, we could replace $\P'$ in $\P$ with the minimum cost path and would improve the cost of $\P$. Due to the induction hypothesis, the entry of $M$ for $A'$ contains the cost of $\P'$. Since $\P$ fulfills Condition~\ref{cond:width2}, none of the vertices not contained in $\P'$ is smaller than $v$ in $\pi$. Hence, $\xi_i(v) \leq x_i$ for all $i \in \{1,\ldots,k\}$. Therefore, the algorithm has reached line~\ref{line:width1} and has set $M(x_1, \ldots, x_k, \omega)$ to the cost of $\P$. This proves the correctness of the algorithm.

Finally, we prove the running time bound. Throughout the algorithm we use the adjacency matrix of the graph containing also the costs of the edges. This matrix can be computed in $\O(n^2)$ time. A minimal chain partition of a partially ordered set of width $k$ can be found in $\O(kn^2)$ time (see~\cite{felsner2003recognition}). We encode the chains with arrays to allow random access. The $\xi_i$-values can be computed by iterating through the partial order once, which needs $\O(n^2)$ time. We can bound the number of tuples in the set $S$ in the following way. The first $k$ entries of these tuples can be represented by a non-empty subset of the graph's vertex set of size at most $k$ (the maximal chosen elements of the chains if there is one). The number of these sets is bounded by $\sum_{i=1}^k \binom{n}{k} \leq \min\{n^k, 2^n\}$. The last entry comes from the set $\{1,\ldots,k\}$. This leads to an upper bound of the number of tuples of $\min\{kn^k, k2^n\}$. For each of those tuples, we have to check whether the conditions given in lines~\ref{line:width3}--\ref{line:width1} hold true for some $\psi$. This can be done in $\O(k)$ time since we are using an adjacency matrix. Thus, the total running time is $\O(\min\{k^2n^k,k^22^n\})$.
\end{proof}

By storing the intermediate path for every tuple with $M$-value $< \infty$, we can easily modify \cref{algo:width} in such a way that it not only outputs the minimum cost, but also computes the minimum cost Hamiltonian path if some exists.

If we consider the TSP-PC as introduced in \Cref{sec:intro}, it is easy to see that any instance can be transformed into an instance of the MinPOHPP by adding a  vertex whose neighborhood is equal to the neighborhood of the starting vertex and forcing this new vertex to be the last vertex of the Hamiltonian path. As this operation does not change the width of the partial order, we can use \cref{algo:width} to solve this problem with the same running time.

\begin{corollary}
The TSP-PC on $n$ vertices for a partial order $\pi$ of width $k \geq 2$ can be solved in $\O(\min\{k^2 n^{k}, k^2 2^n\})$ time using \cref{algo:width}.
\end{corollary}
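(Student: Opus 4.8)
The plan is to reduce the TSP-PC to the MinPOHPP via the vertex-splitting construction sketched in the paragraph preceding the statement, and then to invoke \cref{thm:width}. First I would make the reduction precise. Given a TSP-PC instance on a graph $G$ with cost function $c$ and a partial order $\pi$ whose fixed start vertex $s$ is the (unique) minimum of $\pi$, I construct a graph $G'$ by adding a single new vertex $s'$ whose neighborhood equals that of $s$, and I set $c(s'x) = c(sx)$ for every neighbor $x$ of $s$. I then extend $\pi$ to a partial order $\pi'$ on $V(G) \cup \{s'\}$ by declaring $x \prec_{\pi'} s'$ for all $x \in V(G)$, so that $s'$ becomes the unique maximal element while $s$ remains the unique minimal element.

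Next I would establish the correspondence between feasible tours of the TSP-PC instance and feasible ordered Hamiltonian paths of the MinPOHPP instance $(G', \pi', c)$. Since $s$ is the minimum, a feasible tour can be written as $(s, u_2, \ldots, u_n, s)$; mapping it to the ordered path $(s, u_2, \ldots, u_n, s')$ yields a valid path in $G'$ whose order is a linear extension of $\pi'$, because $u_n s' \in E(G')$ exactly when $u_n s \in E(G)$. Conversely, any ordered Hamiltonian path of $G'$ respecting $\pi'$ must begin with $s$ and end with $s'$, and deleting $s'$ and reconnecting to $s$ recovers a feasible tour. The point to check carefully here is that this correspondence is cost-preserving: the only edge that differs between a tour and its associated path is the final one, $u_n s'$ versus $u_n s$, and these have equal cost by construction, so the minimum tour cost and the minimum feasible-path cost coincide.

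The step I expect to be the main obstacle — really the only nontrivial verification — is that the width is unchanged, since the running-time bound depends on it. Because $s'$ is comparable to every other element of $\pi'$, it lies in no antichain of size at least two; hence every antichain of $\pi'$ is either an antichain of $\pi$ or the singleton $\{s'\}$, and the width of $\pi'$ equals $\max\{k,1\} = k$ (using $k \geq 2$). I would then apply \cref{thm:width} to $(G', \pi', c)$, an instance on $n+1$ vertices of width $k$, which computes exactly the minimum feasible-path cost and hence the minimum tour cost. The stated bound follows because $k \leq n$ (the width cannot exceed the number of original vertices), so $(n+1)^k \leq n^k(1+1/n)^k = \O(n^k)$ and $2^{n+1} = \O(2^n)$, giving $\O(\min\{k^2(n+1)^k, k^2 2^{n+1}\}) = \O(\min\{k^2 n^k, k^2 2^n\})$.
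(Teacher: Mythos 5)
Your proposal is correct and follows exactly the paper's approach: the paper sketches precisely this reduction (add a copy $s'$ of the start vertex's neighborhood, force it to be the last vertex, observe the width is unchanged, and apply \cref{thm:width}), and your write-up merely fills in the routine details of cost preservation, the tour--path bijection, and the $(n+1)^k = \O(n^k)$ asymptotics. No gaps.
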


\subsection{W[1]-hardness}

While \cref{thm:width} shows that \cref{algo:width} has polynomial running time for a fixed $k$, the factor $n^k$ prevents this from being an $\FPT$ algorithm. In the following, we will prove that the POHPP parameterized by the width of the poset is $\mathsf{W}[1]$-hard, by reducing the Multicolored Clique Problem to the POHPP. Using this reduction, we show that no algorithm exists which has a significantly better running time than \cref{algo:width} unless the Exponential Time Hypothesis (ETH) fails. The ETH, formulated by Impagliazzo et al.~\cite{impagliazzo2001which}, states that there is an $\epsilon > 0$ such that 3-SAT cannot be solved in $2^{\epsilon n}$ time for formulas with $n$ variables.

\begin{problem1}{Multicolored Clique Problem (MCP)}
\begin{description}
\item[\textbf{Instance:}] A graph $G$ with a proper coloring by $k$ colors.
\item[\textbf{Question:}]
Is there a clique $C$ in $G$ such that $C$ contains exactly one vertex of each color?
 \end{description}
\end{problem1}

The MCP was shown to be $\mathsf{W}[1]$-hard by Pietrzak~\cite{zbMATH02092872} and independently by Fellows et al.~\cite{fellows2009param}. In \cite{cygan2015param, lokshtanov2011lower} the authors show the following result.

\begin{theorem}[Cygan et al.~\cite{cygan2015param}, Lokshtanov et al.~\cite{lokshtanov2011lower}]\label{thm:lower}
Assuming the Exponential Time Hypothesis, there is no $f(k) n^{o(k)}$ time algorithm for the Multicolored Clique Problem for any computable function~$f$.
\end{theorem}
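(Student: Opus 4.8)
The plan is to obtain this bound as a consequence of the Exponential Time Hypothesis through the standard chain that passes from 3-SAT, via the Sparsification Lemma, to a block-wise reduction producing multicolored-clique instances directly. Since the statement is attributed to~\cite{cygan2015param,lokshtanov2011lower}, I would present it as a self-contained argument rather than route it through an intermediate $k$-Clique lower bound. First I would recall that ETH together with the Sparsification Lemma of Impagliazzo et al.~\cite{impagliazzo2001which} rules out any $2^{o(n+m)}$-time algorithm for 3-SAT on formulas with $n$ variables and $m$ clauses. Applying a standard linear-size \NP-hardness reduction from 3-SAT to 3-Coloring then shows that 3-Coloring on a graph with $N$ vertices admits no $2^{o(N)}$-time algorithm under ETH; this is the concrete lower bound I want to violate.

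Next I would set up the block reduction from 3-Coloring to the Multicolored Clique Problem, parameterized by a value $k$ that we are free to choose. Given a 3-Coloring instance $H$ on $N$ vertices, partition $V(H)$ into $k$ groups $V_1, \dots, V_k$ of size $N/k$ each. For every group $V_i$ we enumerate all proper 3-colorings of $H[V_i]$ -- there are at most $3^{N/k}$ of them -- and create one vertex of color $i$ per such coloring. Two vertices of distinct colors $i$ and $j$ are made adjacent exactly when their partial colorings together properly color all edges of $H$ between $V_i$ and $V_j$. The resulting graph $G$ has $n \le k \cdot 3^{N/k}$ vertices, is properly $k$-colored by construction, and contains a multicolored clique if and only if the chosen $k$ partial colorings glue into a global proper 3-coloring of $H$, i.e., if and only if $H$ is 3-colorable (every edge of $H$ lies either inside one group, where it is handled by the enumerated proper colorings, or between two groups, where it is handled by the adjacency condition).

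Finally, I would run the hypothetical algorithm and tune $k$. Suppose the MCP could be solved in $f(k)\,n^{g(k)}$ time with $g(k)=o(k)$. On the instance above this is $f(k)\,(k\cdot 3^{N/k})^{g(k)} = f(k)\,k^{g(k)}\, 2^{(\log_2 3)\,\frac{g(k)}{k}\,N}$. The coefficient of $N$ in the exponent of the dominant factor is $(\log_2 3)\,g(k)/k$, which tends to $0$ as $k\to\infty$ because $g(k)=o(k)$. Hence, for any target $\eps>0$ we may fix a constant $k=k(\eps)$ making this coefficient below $\eps$; the prefactor $f(k)\,k^{g(k)}$ is then an absolute constant, and composing the polynomial-time reduction with the algorithm yields a $2^{\eps N + O(1)}$-time procedure for 3-Coloring for every $\eps>0$. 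This contradicts the $2^{o(N)}$ lower bound, and hence ETH.

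The main obstacle is the final calibration step: one must ensure that $k$ is chosen as a fixed constant depending only on $\eps$ (so that $f(k)$ and $k^{g(k)}$ are genuine constants absorbed into the running time) while the reduction's blow-up $3^{N/k}$ remains single-exponential in $N$, so that the composed bound really falls below every prescribed exponential base. Making the interplay between the $o(k)$ in the exponent and the $3^{N/k}$ size blow-up precise -- and in particular verifying that the reduction is polynomial-time for each fixed $k$ so that it does not itself dominate the bound -- is where the argument needs care; the remaining correctness check for the reduction is routine.
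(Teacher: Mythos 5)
This theorem is imported by the paper as a black box: the paper gives no proof of it and simply cites Cygan et al.~\cite{cygan2015param} and Lokshtanov et al.~\cite{lokshtanov2011lower}. Your proposal is a correct reconstruction of exactly the argument in those cited sources (it is essentially Theorem~14.21 in the book of Cygan et al.): sparsification~\cite{impagliazzo2001which} plus a linear-size reduction to rule out $2^{o(N)}$ algorithms for 3-Coloring, then the grouping reduction with one vertex per proper 3-coloring of each block $V_i$ and size $n \le k \cdot 3^{N/k}$, then the calibration choosing a constant $k$ so that $(\log_2 3)\,g(k)/k$ drops below the fixed $\delta > 0$ supplied by ETH. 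Two small wording points worth tightening: the reduction is \emph{not} polynomial-time in $N$ (for fixed $k$ it takes time polynomial in its output size, i.e.\ $2^{O(N/k)}$, which is harmless for the same reason the main term is); and the contradiction should be phrased against the concrete form of the lower bound --- there exists a fixed $\delta>0$ such that no $2^{\delta N}$-time algorithm exists --- since producing one algorithm per $\varepsilon$ does not literally yield a single $2^{o(N)}$-time algorithm, but taking $\varepsilon < \delta$ closes the argument.
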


For an instance $G$ of the MCP we can assume that all color classes are of the same size $q$ (as adding isolated vertices does not change the existence or otherwise of a multicolored clique) and we denote these vertices by $v^{i}_{1}, \ldots , v^{i}_{q}$ for each color class $i \in \{1, \ldots , k\}$.

We form an instance $G'$ for the POHPP as follows (see \cref{width:fig1}). The graph $G'$ contains the following vertices.

\begin{itemize}
  \item vertices $s$, $t$, and $z$,
  \item vertex set $Y = \{y_1, \ldots y_{q \cdot (k+1) \cdot k}\}$, 
  \item for every $i \in \{1,\ldots,k\}$ there is:
    \begin{itemize}
      \item a vertex $s^i$,
      \item a set $X^i = \{x^i_1, \ldots, x^i_{(k+1)\cdot q}\}$,
      \item a set $W^i = \{w^i_{p,\ell}~|~p \in \{1,\ldots,q\} \text{ and } \ell \in \{0,\ldots,k\}\}$ where the sets $U^i_p = \{w^i_{p,\ell}~|~\ell \in \{0,\ldots,k\}\}$ represent $v^i_p$ for all $p \in \{1, \ldots, q\}$.
    \end{itemize}
  \item Furthermore, we have a vertex $s^{k+1}$.
\end{itemize}

The graph $G'$ has the following edges:

\begin{enumerate}[(E1)]
  \item edge $ss^1$,\label{1}
  \item edge $zs^{k+1}$,
  \item all edges within $Y$ and all edges between $Y$ and $\bigcup_i W^i$. 
  \item for all $i \in \{1,\ldots, k\}$ all the edges within $X^i$ and all edges from any element of $X^i$ to $s^i$ and all the vertices of $W^i$.
  \item for all $i \in \{1,\ldots, k\}$ the edges $w^i_{p,0} s^{i+1}$ for any $p \in \{1,\ldots,q\}$,
  \item for all $p \in \{1,\ldots, q\}$ the edges $zw^1_{p,1}$,
  \item for all $p,r \in \{1,\ldots,q\}, i, j \in \{1, \ldots, k\}$ with $i \neq j$ and $v^i_pv^j_r \in E(G)$ the edge $w^i_{p,j}w^{j}_{r,i}$,\label{e7}
  \item for all $p,r \in \{1,\ldots,q\}, i, j \in \{1, \ldots, k\}$ with $i < j$ and $v^i_pv^j_r \in E(G)$ the edge~$w^i_{p,j-1}w^{j}_{r,i}$,\label{e8}
  \item for all $p,r \in \{1,\ldots,q\}, i \in \{1, \ldots, k-1\}$ with $v^i_pv^{i+1}_r \in E(G)$ the edge $w^i_{p,k}w^{i+1}_{r,i+1}$,\label{e9}
  \item for all $p \in \{1,\ldots,q\}$ the edge $w_{p,k}^kt$,\label{e10}
  \item for all $y_i \in Y$ the edge $ty_i$.
\end{enumerate}

The partial order $\pi$ is defined as the reflexive and transitive closure of the following tuples:

\begin{enumerate}[(P1)]
  \item $s \prec_\pi a$ for all $a \in V(G') \setminus \{s\}$,
  \item $s^i \prec_\pi a$ for all $a \in W^i$,
  \item $w^i_{p,j} \prec_\pi w^i_{r,\ell}$ if $p < r$ or $p = r$ and $j < \ell$,\label{p3}
  \item $x^i_j \prec_\pi x^i_\ell$ if $j < \ell$,
  \item $x^i_j \prec_\pi s^{i+1}$ for all $i \in \{1, \ldots, k\}$ and all $x^i_j \in X^i$,
  \item $y_j \prec_\pi y_\ell$ if $j < \ell$,
  \item $s^{k+1} \prec_\pi z$,
  \item $z \prec_\pi t$,
  \item $t \prec_\pi y_j$ for all $y_j \in Y$.
\end{enumerate}

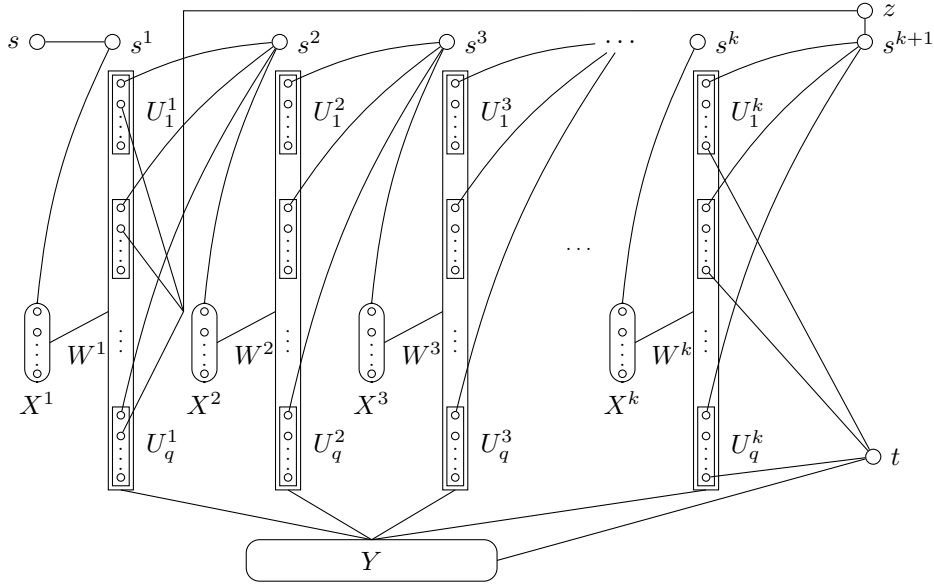
\begin{figure}
	\centering
	\begin{tikzpicture}
	[vertex/.style={inner sep=2pt,draw,circle,fill=white},minivertex/.style={inner sep=2pt,draw,circle,fill=white}
	noedge/.style={dashed},
	scale=0.55
	]
	\node[vertex,label={180:$s$}] (s) at (1.2,15) {};
	\node[vertex,label={0:$t$}] (t) at (22,5) {};
	
	\node[vertex,label={0:$s^1$}] (s1) at (3,15) {};
	\node[vertex,label={0:$s^2$}] (s2) at (7,15) {};	
	\node[vertex,label={0:$s^3$}] (s3) at (11,15) {};
	\node[vertex,label={0:$s^k$}] (sk) at (17,15) {};
	\node[vertex,label={0:$s^{k+1}$}] (sk1) at (21.8,15) {};
	\node[vertex,label={0:$z$}] (z) at (21.8,15.75) {};
	
	\node (s4) at (16,15) {$\dots$};

	\draw[](s)--(s1);

    \node at (3.2,14.2) {\footnotesize $w^1_{1,0}$};
    \node at (3.2,13.5) {\footnotesize $w^1_{1,1}$};
    \node at (3.2,12.5) {\footnotesize $w^1_{1,k}$};

    \node at (3.2,11.2) {\footnotesize $w^1_{2,0}$};
    \node at (3.2,10.5) {\footnotesize $w^1_{2,1}$};
    \node at (3.2,9.5) {\footnotesize $w^1_{2,k}$};

    \node at (3.2,6.2) {\footnotesize $w^1_{q,0}$};
    \node at (3.2,5.5) {\footnotesize $w^1_{q,1}$};
    \node at (3.2,4.5) {\footnotesize $w^1_{q,k}$};

    \node at (7.2,14.2) {\footnotesize $w^2_{1,0}$};
    \node at (7.2,13.5) {\footnotesize $w^2_{1,1}$};
    \node at (7.2,12.5) {\footnotesize $w^2_{1,k}$};

    \node at (7.2,11.2) {\footnotesize $w^2_{2,0}$};
    \node at (7.2,10.5) {\footnotesize $w^2_{2,1}$};
    \node at (7.2,9.5) {\footnotesize $w^2_{2,k}$};

    \node at (7.2,6.2) {\footnotesize $w^2_{q,0}$};
    \node at (7.2,5.5) {\footnotesize $w^2_{q,1}$};
    \node at (7.2,4.5) {\footnotesize $w^2_{q,k}$};

    \node at (11.2,14.2) {\footnotesize $w^3_{1,0}$};
    \node at (11.2,13.5) {\footnotesize $w^3_{1,1}$};
    \node at (11.2,12.5) {\footnotesize $w^3_{1,k}$};

    \node at (11.2,11.2) {\footnotesize $w^3_{2,0}$};
    \node at (11.2,10.5) {\footnotesize $w^3_{2,1}$};
    \node at (11.2,9.5) {\footnotesize $w^3_{2,k}$};

    \node at (11.2,6.2) {\footnotesize $w^3_{q,0}$};
    \node at (11.2,5.5) {\footnotesize $w^3_{q,1}$};
    \node at (11.2,4.5) {\footnotesize $w^3_{q,k}$};

    \node at (17.2,14.2) {\footnotesize $w^k_{1,0}$};
    \node at (17.2,13.5) {\footnotesize $w^k_{1,1}$};
    \node at (17.2,12.5) {\footnotesize $w^k_{1,k}$};

    \node at (17.2,11.2) {\footnotesize $w^k_{2,0}$};
    \node at (17.2,10.5) {\footnotesize $w^k_{2,1}$};
    \node at (17.2,9.5) {\footnotesize $w^k_{2,k}$};

    \node at (17.2,6.2) {\footnotesize $w^k_{q,0}$};
    \node at (17.2,5.5) {\footnotesize $w^k_{q,1}$};
    \node at (17.2,4.5) {\footnotesize $w^k_{q,k}$};

	\foreach \i/\j/\l in {0/1/2,1/2/3,2/3/4,3.5/k/k1}
	{
	
		\draw[fill = white]
			      			(3.7+4*\i,4.2) rectangle (4.3+4*\i,14.3);
	      \draw[fill = white]
	      			(3.8+4*\i,12.3) rectangle (4.2+4*\i,14.2);
	      	\node[vertex, inner sep=1pt] (a1\j) at (4+4*\i,14) {};
	      	\node[vertex, inner sep=1pt] (a2\j) at (4+4*\i,13.5) {};
	      	\node[vertex, inner sep=1pt] (ak\j) at (4+4*\i,12.5) {};
	      	\path (a2\j) -- (ak\j) node [font= \scriptsize, midway,sloped] {$\dots$};

	      	\draw[fill = white]
	      				(3.8+4*\i,9.3) rectangle (4.2+4*\i,11.2);
	      		\node[vertex, inner sep=1pt] (b1\j) at (4+4*\i,11) {};
	      		\node[vertex, inner sep=1pt] (b2\j) at (4+4*\i,10.5) {};
	      		\node[vertex, inner sep=1pt] (bk\j) at (4+4*\i,9.5) {};
	      		\path (b2\j) -- (bk\j) node [font= \scriptsize, midway,sloped] {$\dots$};

	      		\path (4+4*\i,9.3) -- (4+4*\i,6.2) node [font= \scriptsize, midway,sloped] {$\dots$};

	      		\draw[fill = white]
	      				(3.8+4*\i,4.3) rectangle (4.2+4*\i,6.2);
	      		\node[vertex, inner sep=1pt] (c1\j) at (4+4*\i,6) {};
	      		\node[vertex, inner sep=1pt] (c2\j) at (4+4*\i,5.5) {};
	      		\node[vertex, inner sep=1pt] (ck\j) at (4+4*\i,4.5) {};
	      		\path (c2\j) -- (ck\j) node [font= \scriptsize, midway,sloped] {$\dots$}; 
	      		
	      		\node (U1) at (5+4*\i,13.5) {$U_1^{\j}$};
	      		\node (Uq) at (5+4*\i,5.25) {$U_q^{\j}$};
	      		\node (Wi) at (3.2+4*\i,7.5) {$W^{\j}$};
	      	
	      		\foreach \x in {a,b,c}{
	      				      			\draw (\x1\j) to [bend left = 10] (s\l);
	      				      		}
	      		
	      		\node (Xi) at (2+4*\i,6.3) {$X^{\j}$};
	      		\draw[rounded corners=5pt, fill = white]
	      			      			(1.7+4*\i,6.8) rectangle (2.3+4*\i,8.7);
				\node[vertex, inner sep=1pt] (d1\j) at (2+4*\i,8.5) {};
				\node[vertex, inner sep=1pt] (d2\j) at (2+4*\i,8) {};
				\node[vertex, inner sep=1pt] (dk\j) at (2+4*\i,7) {};
				\path (d2\j) -- (dk\j) node [font= \scriptsize, midway,sloped] {$\dots$};
				\draw (2.3+4*\i,7.75) -- (3.7+4*\i,8.5);
				\draw (2+4*\i,8.7) to [bend left= 10] (s\j);
	      
	}
	
		\foreach \x in {a,b,c}{
		      			\draw (\x kk) -- (t);
		      		}

	\path (12,10) -- (18,10) node [font= \scriptsize, midway,sloped] {$\dots$};
	
	\draw[rounded corners=5pt, fill = white]
		(7,2) rectangle (13,3);
	\node (Y) at (10,2.5) {$Y$};
	\draw (t) -- (13,2.5);
	\foreach \x in {4,8,12,18}{
	  \draw (10,3) -- (\x,4.2);
	}
	
	\draw (sk1) -- (z) -- (5.5,15.75) -- (5.5,8.5);
	\draw (5.5,8.5) -- (a21);
	\draw (5.5,8.5) -- (b21);
	\draw (5.5,8.5) -- (c21);
	
	\end{tikzpicture}
	\caption{Overview of the graph $G'$. A box with rounded corners represents a clique, a box with square corners represents an independent set. An edge to a box symbolizes edges to all vertices contained in it. Note that the edges between the $W^i$ have been omitted for the sake of clarity. A visualization of these edges can be found in \cref{width:fig2}.}\label{width:fig1}
\end{figure}

\begin{figure}
	\centering
\begin{minipage}{0.45\textwidth}
    	\begin{tikzpicture}
	[vertex/.style={inner sep=2pt,draw,circle,fill=white},minivertex/.style={inner sep=2pt,draw,circle,fill=white}
	noedge/.style={dashed},
	scale=0.55
	]
	\foreach \i\j in {0/1,2/2,4/3,6/4}{
		\foreach \k\l in {6/1,4/2,2/3,0/4}{
		\node[vertex] (w\j\l) at (\i,\k) {};
	}
	
	}
	\draw[blue, thick] (w11) -- (w21) -- (w12) -- (w31) to [bend left =20] (w13) to [bend right =30] (w41) to [bend left =20] (w14) -- (w22) to [color = red] (w32);
	
	\draw[green!60!black, thick] (w22) to  (w32) to [bend left = 20] (w23) to [bend right = 10](w42) to [bend left = 30] (w24) to [bend right = 20] (w33);
	
	\draw[red!30!white, thick] (w33) to  (w43) to (w34) to (w44);

	\node (l11) at (-1,6) {\scriptsize$w^{1}_{p_1,1}$};
	\node (l14) at (-1,0) {\scriptsize$w^{1}_{p_1,4}$};
	
	\path (l11) -- (l14) node [font= \scriptsize, midway,sloped] {$\dots$}; 
	
	\node (l41) at (7,6) {\scriptsize$w^{4}_{p_4,1}$};
	\node (l44) at (7,0) {\scriptsize$w^{4}_{p_4,4}$};
	
	\node (l21) at (2,-0.75) {\scriptsize$w^{2}_{p_2,4}$};
	\node (l24) at (2,6.75) {\scriptsize$w^{2}_{p_2,1}$};
		
	\node (l31) at (4,-0.75) {\scriptsize$w^{3}_{p_3,4}$};
	\node (l34) at (4,6.75) {\scriptsize$w^{3}_{p_3,1}$};
	
	\path (l41) -- (l44) node [font= \scriptsize, midway,sloped] {$\dots$}; 
	
	\end{tikzpicture}
\end{minipage}
 \begin{minipage}{0.45\textwidth}
    	\begin{tikzpicture}
	[vertex/.style={inner sep=2pt,draw,circle,fill=white},minivertex/.style={inner sep=2pt,draw,circle,fill=white}
	noedge/.style={dashed},
	scale=0.55
	]
	\foreach \i\j in {0/1,2/2,4/3,6/4}{
		\foreach \k\l in {6/1,4/2,2/3,0/4}{
		\node[vertex] (w\j\l) at (\i,\k) {};
	}
	
	}

    \draw[blue,thick] (w12) -- (w21);
    \draw[blue,thick] (w13) to [bend right = 20] (w31);
    \draw[blue,thick] (w14) to [bend right = 20] (w41);
    \draw[blue,thick] (w23) to [bend right = 20] (w32);
    \draw[blue,thick] (w24) to [bend right = 30] (w42);
    \draw[blue,thick] (w34) -- (w43);

    \draw[red!30!white, thick] (w14) -- (w22);
    \draw[red!30!white, thick] (w24) to [bend right = 20] (w33);
    \draw[red!30!white, thick] (w34) -- (w44);
    
    \draw[green!60!black, thick] (w11) -- (w21);
    \draw[green!60!black, thick] (w12) -- (w31);
    \draw[green!60!black, thick] (w13) to [bend right = 30] (w41);
    \draw[green!60!black, thick] (w22) -- (w32);
    \draw[green!60!black, thick] (w23) to [bend right = 10] (w42);
    \draw[green!60!black, thick] (w33) -- (w43);
 
    \node (l11) at (-1,6) {\scriptsize$w^{1}_{p_1,1}$};
	\node (l14) at (-1,0) {\scriptsize$w^{1}_{p_1,4}$};
	
	\path (l11) -- (l14) node [font= \scriptsize, midway,sloped] {$\dots$}; 
	
	\node (l41) at (7,6) {\scriptsize$w^{4}_{p_4,1}$};
	\node (l44) at (7,0) {\scriptsize$w^{4}_{p_4,4}$};
	
	\node (l21) at (2,-0.75) {\scriptsize$w^{2}_{p_2,4}$};
	\node (l24) at (2,6.75) {\scriptsize$w^{2}_{p_2,1}$};
		
	\node (l31) at (4,-0.75) {\scriptsize$w^{3}_{p_3,4}$};
	\node (l34) at (4,6.75) {\scriptsize$w^{3}_{p_3,1}$};
	
	\path (l41) -- (l44) node [font= \scriptsize, midway,sloped] {$\dots$}; 
	
	\end{tikzpicture}
\end{minipage}

	\caption{On the left: Closeup of $G'$ for $k=4$ and the 4-clique $\{v^{1}_{p_{1}},v^{2}_{p_{2}},v^{3}_{p_{3}},v^{4}_{p_{4}}\}$. Each column $i$ represents the vertices $U^i_{p_i} \setminus \{w^i_{p_i,0}\} = \{w^i_{p_i,\ell}~|~\ell \in \{1,\ldots,4\}\}$. Note that the edge set of the induced subgraph of $G'$ belonging to this clique forms a path from $w^{1}_{p_1,1}$ to $w^{4}_{p_4,4}$. The blue segment of the path checks whether $v^1_{p_1}$ is adjacent to all other vertices, the green segment checks whether $v^2_{p_2}$ is adjacent to both $v^3_{p_3}$ and $v^4_{p_4}$, and the pink segment checks whether $v^3_{p_3}$ is adjacent to $v^4_{p_4}$.
    \\
    On the right: Here we denote the different types of edges. Blue edges are derived from Rule~\ref{e7}. Green edges are derived from Rule~\ref{e8}. Pink edges are derived from Rule~\ref{e9}.}\label{width:fig2}
\end{figure}

\begin{lemma}\label{width:lemma-width}
The width of the partial order $\pi$ is $k + 1$.
\end{lemma}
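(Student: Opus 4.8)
The plan is to pin down the width from both sides: exhibit an antichain of size $k+1$ to get width $\ge k+1$, and a partition of $V(G')$ into $k+1$ chains to get width $\le k+1$ (the latter needs only the elementary observation that an antichain meets each chain of a partition in at most one element, so no antichain can exceed the number of chains; Dilworth is not required for this direction). The whole argument reduces to determining, for each type of vertex, its up-set and down-set in $\pi$, which is straightforward because each generating tuple (P1)--(P9) involves vertices of very restricted types.

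The key structural observation I would establish first is that the sets $W^i$ are \emph{upward closed among themselves}: the only generating tuple in which a vertex $w^i_{p,\ell}$ appears on the left-hand side is (P3), and (P3) relates $w^i_{p,\ell}$ only to other elements of the same $W^i$. Hence the up-set of any $w^i_{p,\ell}$ is contained in $W^i$. The same bookkeeping shows that a vertex $y_j\in Y$ has its up-set contained in $Y$, since only (P6) has a $y$ on the left. With this in hand the antichain is immediate: pick one vertex from each $W^i$, say $w^i_{1,0}$ for $i\in\{1,\dots,k\}$, together with $y_1$. No two of the $w^i_{1,0}$ are comparable because the up-set of each stays inside its own $W^i$; and $y_1$ is incomparable to every $w^i_{1,0}$ for the same reason, as neither up-set leaves $Y$ resp.\ $W^i$. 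This yields an antichain of size $k+1$, so the width is at least $k+1$.

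For the upper bound I would write down an explicit chain partition into exactly $k+1$ chains. The ``spine'' chains are obtained by threading each $X^i$ through $s^{i+1}$ into $W^{i+1}$: using (P4) inside $X^i$, then (P5) to reach $s^{i+1}$, then (P2) and (P3) to run through all of $W^{i+1}$, the set $X^i\cup\{s^{i+1}\}\cup W^{i+1}$ is totally ordered. Concretely I would take as chains $\{s\}\cup\{s^1\}\cup W^1$ (using (P1), (P2), (P3)); the chains $X^i\cup\{s^{i+1}\}\cup W^{i+1}$ for $i\in\{1,\dots,k-1\}$; and finally $X^k\cup\{s^{k+1}\}\cup\{z,t\}\cup Y$, where (P5), (P7), (P8), (P9), (P6) chain $X^k\prec_\pi s^{k+1}\prec_\pi z\prec_\pi t\prec_\pi y_1\prec_\pi\cdots$. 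A direct check confirms every vertex of $G'$ lies in exactly one of these $k+1$ chains, so the width is at most $k+1$, and together with the lower bound it equals $k+1$.

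The only real care needed is clerical rather than mathematical: I must verify the up-set/down-set claims by scanning all of (P1)--(P9) to be sure no transitive consequence creates an unexpected comparability, in particular that $s^{i+1}\prec_\pi w^{i+1}_{\cdot}$ does not propagate back to relate $W^i$ with $W^{i+1}$ and that the $W^i$ and $Y$ are genuinely mutually incomparable, and I must get the index shift in the spine chains exactly right so that the chains are pairwise disjoint and cover every $s^i$, $X^i$, $W^i$, as well as $s,z,t$ and $Y$. Once the up-sets are computed, both the incomparability of the antichain and the totality of each chain follow by transitivity with no further computation.
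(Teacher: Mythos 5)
Your proposal is correct and follows essentially the same route as the paper: the chain partition you exhibit ($\{s,s^1\}\cup W^1$, the sets $X^i\cup\{s^{i+1}\}\cup W^{i+1}$ for $i\in\{1,\dots,k-1\}$, and $X^k\cup\{s^{k+1},z,t\}\cup Y$) is exactly the paper's, and your up-set bookkeeping is accurate. The only difference is cosmetic: the paper's lower bound uses the antichain $\{s^1,\dots,s^{k+1}\}$, which is immediate from the definition of $\pi$, whereas your antichain $\{w^1_{1,0},\dots,w^k_{1,0},y_1\}$ is equally valid but needs the slightly longer verification that the up-sets of the $w^i_{p,\ell}$ and of the $y_j$ stay within $W^i$ and $Y$, respectively.
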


\begin{proof}
It follows directly from the definition of $\pi$ that $\{s^1, \ldots, s^{k+1}\}$ forms an antichain. Thus, the width of $\pi$ is at least $k + 1$.

For every $i \in \{1,\ldots, k-1\}$, the set $X^i \cup \{s^{i+1}\} \cup W^{i+1}$ forms a chain. Furthermore, the sets $\{s,s^1\} \cup W^1$ and $X^k \cup \{s^{k+1}, z, t\} \cup Y$ form chains. Thus, we can partition the set $V(G')$ into $k + 1$ chains. As every antichain contains at most one element of each of these chains, the width of $\pi$ is at most $k + 1$. 
\end{proof}

The following lemma will be helpful for both directions of the reduction proof.

\begin{lemma}\label{width:lem0}
Let $p_1, \ldots , p_k$ be indices from $\{1,\dots,q\}$. Let $S = U^1_{p_1} \cup \dots \cup U^k_{p_k} \cup \{t\} \setminus \{w^1_{p_1,0}, \dots, w^k_{p_k,0}\}$. The graph $G'[S]$ contains a path from $w^1_{p_1,1}$ to $t$ if and only if the set $\{v^1_{p_1}, \dots, v^k_{p_k}\}$ forms a clique in $G$. If such a path exists, then it fulfills the conditions of $\pi$ restricted to $S$.
\end{lemma}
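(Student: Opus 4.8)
The plan is to determine both $\pi$ restricted to $S$ and the edge set of $G'[S]$ exactly, and then to show that in the ``all pairs adjacent'' case $G'[S]$ is a single path. First I would observe that among the defining tuples of $\pi$ only (P3) can relate two vertices of $S$, and that within a fixed column $i$ every vertex of $S$ has first index $p_i$; hence $\pi$ restricted to $S$ is the disjoint union of the $k$ chains $w^i_{p_i,1}\prec_\pi\cdots\prec_\pi w^i_{p_i,k}$ together with the element $t$, which is incomparable to every $w$-vertex. Next I would identify the surviving edges: since $X^i$, $Y$, $z$, the vertices $s^i,s^{k+1}$ and all $w^i_{p_i,0}$ lie outside $S$, the only edges of $G'[S]$ are those of (E7)--(E10), and each edge of (E7)--(E9) is present exactly when the corresponding pair $v^i_{p_i}v^j_{p_j}$ is an edge of $G$, while the single edge $w^k_{p_k,k}t$ from (E10) is unconditional.

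The core of the argument is a local adjacency computation. For a fixed vertex $w^i_{p_i,\ell}$ I would read off from (E7)--(E10) all of its potential neighbors inside $S$: e.g.\ (E7) contributes $w^\ell_{p_\ell,i}$ when $\ell\neq i$, and the two orientations of (E8) contribute $w^{\ell+1}_{p_{\ell+1},i}$ and $w^\ell_{p_\ell,i-1}$ under the index constraints $i\le\ell\le k-1$ and $\ell<i$ respectively, with (E9) and (E10) handled analogously. A short case distinction according to whether $\ell>i$, $\ell<i$ or $\ell=i$ then shows that, when every pair $\{v^i_{p_i},v^j_{p_j}\}$ with $i\neq j$ is an edge of $G$, each vertex of $S$ has exactly two neighbors, the sole exceptions being $w^1_{p_1,1}$ and $t$, which have exactly one. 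I would combine this degree bound with the explicit Hamiltonian path $\Pi$ from $w^1_{p_1,1}$ to $t$ suggested by \cref{width:fig2}, written blockwise as $w^i_{p_i,i},w^{i+1}_{p_{i+1},i},w^i_{p_i,i+1},\dots,w^k_{p_k,i},w^i_{p_i,k}$ for $i=1,\dots,k-1$, each block joined to the next by the (E9)-edge $w^i_{p_i,k}w^{i+1}_{p_{i+1},i+1}$ and finishing with $w^k_{p_k,k}t$. Since $\Pi$ is a spanning subgraph of $G'[S]$ whose internal vertices already realise the degree~$2$ bound, this forces $G'[S]=\Pi$ in the all-adjacent case.

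Both directions of the equivalence then follow. For arbitrary $p_1,\dots,p_k$ the graph $G'[S]$ is obtained from $\Pi$ by deleting precisely the (E7)--(E9) edges whose defining pair is a non-edge of $G$, so $G'[S]\subseteq\Pi$ always. If $\{v^1_{p_1},\dots,v^k_{p_k}\}$ is a clique, nothing is deleted and $G'[S]=\Pi$ contains a $w^1_{p_1,1}$--$t$ path. Conversely, any $w^1_{p_1,1}$--$t$ path in $G'[S]\subseteq\Pi$ must be $\Pi$ itself, so every edge of $\Pi$ survives; as the $\binom{k}{2}$ edges of type (E7) on $\Pi$ are $w^i_{p_i,\ell}w^\ell_{p_\ell,i}$ and their defining pairs exhaust all $\{i,\ell\}$ with $i\neq\ell$, every pair $v^i_{p_i}v^j_{p_j}$ must be an edge of $G$ and the set is a clique. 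This argument also shows that any such path equals $\Pi$, hence is Hamiltonian on $S$; and since $\Pi$ visits each column $i$ in increasing order of $\ell$ (column $i$ occurs at level $\ell<i$ in block $\ell$ and at level $\ell\ge i$ in block $i$, the blocks being traversed in order $1,\dots,k$) and ends at the incomparable element $t$, the order $\lambda(\Pi)$ respects every chain of $\pi$ restricted to $S$. I expect the local adjacency case analysis of the second paragraph to be the main obstacle: pinning down the index bounds in (E7)--(E9) so that the degrees come out to be exactly two is the delicate part, while the two directions of the equivalence and the verification against $\pi$ are then routine.
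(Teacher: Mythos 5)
Your proof is correct and takes essentially the same route as the paper's: the same explicit Hamiltonian path on $S$ built from the rules (E7)--(E10), the same enumeration of possible incidences under these rules to bound every degree in $G'[S]$ by two (with $w^1_{p_1,1}$ and $t$ of degree one), and the same conclusion that $G'[S]$ is always a subgraph of that path, so that a missing clique edge disconnects $w^1_{p_1,1}$ from $t$ while the per-column increasing levels guarantee compatibility with (P3). Your explicit statement that $G'[S]\subseteq\Pi$ for arbitrary $p_1,\dots,p_k$ just makes precise the structural fact that the paper's terser converse step relies on implicitly.
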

\begin{proof}
Suppose the set $\{v^1_{p_1}, \dots, v^k_{p_k}\}$ forms a clique. Then $G'$ contains the following path $P$ from $w^1_{p_1,1}$ to $t$, due to the construction of $G'$ (see \Cref{width:fig2}): $w^1_{p_1,1} - \tef{e8} - w^2_{p_2,1} - \tef{e7} - w^1_{p_1,2} - \tef{e8} - w^3_{p_3,1} - \tef{e7} -  w^1_{p_1,3} - \dots - w^k_{p_k,1} - \tef{e7} - w^1_{p_1,k} - \tef{e9} - w^2_{p_2,2} - \tef{e8} - w^3_{p_3,2} - \tef{e7} - w^2_{p_2,3} - \dots - w^{k-1}_{p_{k-1},k-1} - \tef{e8} - w^k_{p_k,k-1} - \tef{e7} - w^{k-1}_{p_{k-1},k} - \tef{e9} - w^k_{p_k,k} - \tef{e10} - t$. This path fulfills the constraints of $\pi$ restricted to $S$, i.e., the constraints given in \ref{p3}. Next, we show that $G'[S]$ only consists of the path $P$, as every vertex of $S\setminus \{w^1_{p_1,1},t\}$ has degree at most~2 in $G'[S]$ and both $t$ and $w^1_{p_1,1}$ have degree~1. 
First observe that every vertex in $S$ has at most one incident edge in $G'[S]$ for each rule \ref{e7}--\ref{e10}, since fixing $i$ and $j$ also fixes $p$ and $r$ to $p_i$ and $p_j$, respectively. \\
If a vertex $w^a_{b,c}$ has an edge derived from \ref{e10}, then $a=c=k$ and, thus, $w^a_{b,c}$ has no edge derived from \ref{e7} and \ref{e8}. 
If a vertex $w^a_{b,c}$ has an edge derived from \ref{e7}, then $a \neq c$. 
If it also has an edge derived from \ref{e9}, then $c = k$. 
Thus, the vertex cannot have an edge derived from \ref{e8} since both $j-1$ and $i$ are smaller than $k$ in \ref{e8}. Therefore, no vertex can have edges derived from \ref{e7}, \ref{e8}, and \ref{e9} at the same time and, hence, its degree is at most~2.

If some edge $v^i_{p_i}v^j_{p_j} \notin E(G)$ for some $i,j \in \{1, \ldots ,k\}$, then the edge $w^i_{p_i,j}w^j_{p_j,i}$ is not contained in $E(G')$ and, thus, there is no path from $w^1_{p_1,1}$ to $t$ in $G'[S]$.
\end{proof}

Note that due to $\pi$, in any ordered Hamiltonian path extending $\pi$ the vertex $s^{k+1}$ must always be chosen before $t$. Thus, given such an ordered Hamiltonian path we can separate it into three phases: the \emph{selection phase}, i.e., all vertices up to and including $z$, the \emph{validation phase}, i.e., all vertices from $z$ until $t$ and the \emph{cleanup phase} for the rest of the remaining vertices. The following lemma clarifies the naming of these phases.

\begin{lemma}\label{width:lem1}
If there is a clique in $G$ that contains exactly one vertex of each color, then there is an ordered Hamiltonian path $\P$ in $G'$ such that $\lambda(\P)$ extends $\pi$.
\end{lemma}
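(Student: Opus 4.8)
The plan is to turn the given multicolored clique $\{v^1_{p_1},\dots,v^k_{p_k}\}$ into an explicit ordered Hamiltonian path of $G'$ built from the three phases named above, and then to verify the two things that make it a solution: that consecutive vertices are adjacent in $G'$, and that the induced vertex ordering is a linear extension of $\pi$. The selection of the $p_i$ will be encoded by \emph{where} the path leaves each colour block.

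First I would build the \emph{selection phase}. It starts at the forced global minimum $s$ and moves to $s^1$ via edge (E1). For each colour $i$ in turn, having arrived at $s^i$, I traverse a block consisting of $X^i$ together with the $W^i$-vertices $\{w^i_{p,\ell} : p < p_i\} \cup \{w^i_{p_i,0}\}$, and I leave the block through the edge $w^i_{p_i,0}s^{i+1}$ from rule (E5). Inside the block I visit $X^i$ in index order and the chosen $W^i$-vertices in the lexicographic order of $(p,\ell)$, interleaving them so that no two (mutually non-adjacent) $W^i$-vertices are consecutive. This is possible because $X^i$ is a clique adjacent (by (E4)) to all of $W^i$, because $|X^i| = (k+1)q$ strictly exceeds the number $(p_i-1)(k+1)+1$ of selected $W^i$-vertices, and because $w^i_{p_i,0}$ is the lexicographically largest selected vertex, so the block can indeed end at it. After colour $k$ the path reaches $s^{k+1}$ and then $z$ via (E2).

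Second, the \emph{validation phase} is handed to me by \cref{width:lem0}: from $z$ I take the edge $zw^1_{p_1,1}$ (rule (E6)) and then follow the path from $w^1_{p_1,1}$ to $t$ inside $G'[S]$, where $S = U^1_{p_1}\cup\dots\cup U^k_{p_k}\cup\{t\}\setminus\{w^1_{p_1,0},\dots,w^k_{p_k,0}\}$. Since the selected vertices form a clique, this path exists, respects $\pi|_S$, and uses exactly the vertices $w^i_{p_i,\ell}$ with $\ell\in\{1,\dots,k\}$, ending with the edge of (E10) into $t$. Finally, the \emph{cleanup phase} visits $Y$ together with the still-unused vertices, namely all $w^i_{p,\ell}$ with $p>p_i$. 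As $Y$ is a clique adjacent to every $W$-vertex by (E3), and $|Y|=q(k+1)k$ strictly exceeds the total number $\sum_i (q-p_i)(k+1)$ of leftover vertices, I visit $Y$ in index order starting with the edge $ty_1$ (rule (E11)), insert at most one leftover vertex between consecutive $y$'s, and order the insertions colour by colour so that each colour's leftovers appear lexicographically.

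The remaining work is routine verification. Adjacency holds rule by rule: (E1), (E4), (E5) inside the selection phase, (E2) and (E6) at the two hand-offs, \cref{width:lem0} with (E10) in the validation phase, and (E3) with (E11) in the cleanup phase; and every vertex of $G'$ is used exactly once. Checking that the ordering extends $\pi$ is a pass through (P1)--(P9): $s$ is first; each $s^i$ precedes all of $W^i$; the $X^i$ and $Y$ are in index order; and the relations $X^i\prec_\pi s^{i+1}$ and $s^{k+1}\prec_\pi z\prec_\pi t\prec_\pi y_j$ hold by the phase ordering. I expect the main obstacle to be precisely the global consistency of (P3), the lexicographic chain on each $W^i$: one must confirm that splitting $W^i$ across the three phases keeps it lexicographically sorted, i.e.\ that the selection part ($p<p_i$, together with $w^i_{p_i,0}$) precedes the validation part ($p=p_i$, $\ell\ge 1$), which precedes the cleanup part ($p>p_i$). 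This is exactly the property that both forces the exit of colour $i$ to occur at $w^i_{p_i,0}$ (thereby ``selecting'' $p_i$) and pushes the $p>p_i$ vertices into the cleanup phase, where the size of $Y$ is what guarantees there is room to absorb them.
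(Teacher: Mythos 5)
Your proposal is correct and matches the paper's proof essentially step for step: the same three-phase construction (selection via interleaving $X^i$ with the $W^i$-vertices up to the exit $w^i_{p_i,0}$, validation handed off to \cref{width:lem0} through the edge $zw^1_{p_1,1}$, cleanup by alternating the leftover $W$-vertices with $Y$). Your explicit cardinality checks ($|X^i|$ and $|Y|$ exceeding the number of vertices to be absorbed) and the verification that the lexicographic constraints \ref{p3} are preserved across the phase split are details the paper leaves implicit, but the argument is the same.
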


\begin{proof}
Let the set $\mathcal{C} :=\{v^1_{p_1}, v^2_{p_2}, \ldots , v^k_{p_k}\}$ be a clique of $G'$ that contains one vertex $v^i_{p_i}$ for each color $i \in \{1, \ldots , k\}$. We can find a Hamiltonian path $\P$ in $G'$ such that $\lambda(\P)$ extends $\pi$ in the following way. Beginning in $s$ we move to $s^1$ and from there to $X_1$. Alternating between vertices from $W_1$ and $X_1$ in the way $x^1_1 - w^1_{1,0} - x^1_2 - w^1_{1,1} - \ldots - w^1_{p_1-1,k}$ we work through all $U^1_{i}$ until we reach $U^1_{p_1}$ which represents the first vertex of the clique $\mathcal{C}$, i.e. $v^1_{p_1}$.

Before visiting $w^1_{p_1,0}$ we visit all remaining vertices of $X_1$. From $w^1_{p_1,0}$ we move to $s^2$ and repeat the same procedure as for the first color. This is iterated, until we reach $w^k_{p_k,0}$ and from there we move to $s^{k+1}$ and then to $z$. Note that until now we have followed all the rules given by the partial order $\pi$. This concludes the \emph{selection phase}, in which the vertices of the corresponding clique are chosen.

From $z$ we go back to $w^1_{p_1,1}$. Now the \emph{validation phase} starts. Due to \cref{width:lem0}, there is a path from $w^1_{p_1,1}$ to $t$ that only uses vertices in the sets $U^i_{p_i}$ and follows the rules of $\pi$.

Finally, in the \emph{cleanup phase} we need to visit all the remaining vertices of the $W_i$ and $Y$. To this end, we move from $t$ to $y_1$. Then, by alternating between vertices of the $W_i$ and $Y$ -- while observing the restrictions of $\pi$ -- and finally using all remaining vertices in $Y$ we conclude the Hamiltonian path of $G'$.
\end{proof}

\begin{lemma}\label{width:lem3}
Let $\P = (P, \lambda)$ be an ordered Hamiltonian path in $G'$ such that $\lambda$ extends $\pi$. Then, there exist indices $p_1, \ldots, p_k \in \{1,\ldots,q\}$ such that the prefix $\P'$ of $\P$ ending in $s^{k+1}$ fulfills the following properties:
\begin{enumerate}
  \item $\P'$ starts in $s$ and contains none of the vertices in $Y \cup \{t, z\}$,\label{item:lem3-0}
  \item for all $i \in \{1,\dots,k\}$ it holds:
  \begin{enumerate}
    \item vertex $s^i$ as well as the vertices of $X^i$ and $\bigcup_{r=1}^{p_i-1}U^i_{r}$ are part of $\P'$, \label{item:lem3-1}
    \item $U^i_{p_i} \cap \P' = \{w^{i}_{p_i,0}\}$,\label{item:lem3-2}
    \item none of the vertices of $U^i_{r}$ with $r > p_i$ are part of $\P'$.\label{item:lem3-3}
  \end{enumerate}
\end{enumerate}
\end{lemma}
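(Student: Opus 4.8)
The plan is to peel the prefix $\P'$ apart colour class by colour class, using the partial order to fix the coarse structure and the adjacencies of $G'$ to fix the fine structure. First I would dispose of Property~\ref{item:lem3-0}: since $(P7)$--$(P9)$ force $s^{k+1}\prec_\pi z\prec_\pi t\prec_\pi y_j$ for every $y_j\in Y$, none of $z$, $t$, or the vertices of $Y$ can precede $s^{k+1}$ in any linear extension, so they lie outside the prefix $\P'$; and as $s\prec_\pi a$ for all $a$ by $(P1)$, the path starts in $s$.

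Next I would build the backbone. I claim the first vertex of each colour class $\{s^i\}\cup X^i\cup W^i$ to occur on $\P$ is $s^i$: a vertex of $W^i$ cannot come before $s^i$ by $(P2)$, and $X^i$ is only adjacent to vertices of its own class (to $s^i$, within $X^i$, and to $W^i$, by $(E4)$), so $X^i$ cannot be entered from outside. Since $s^i$'s only outside neighbour is $w^{i-1}_{p,0}$ for $i\ge 2$ (by $(E5)$), the predecessor of $s^i$ on $\P$ is some $w^{i-1}_{p,0}$, which I take to \emph{define} $p_{i-1}$ (and $p_k$ via the predecessor of $s^{k+1}$). This yields $s^{i-1}\preceq_\lambda w^{i-1}_{p,0}\prec_\lambda s^i$, hence $s^1\prec_\lambda\dots\prec_\lambda s^{k+1}$. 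Because $X^i\prec_\pi s^{i+1}\preceq_\lambda s^{k+1}$ by $(P5)$, all of $X^i$ and all the $s^i$ lie in $\P'$. Since $(P3)$ orders each $W^i$ lexicographically, the members of $W^i$ occur on $\P$ in that order, so $W^i\cap\P'$ is a lexicographic prefix; as $w^i_{p_i,0}$ is the predecessor of $s^{i+1}$ it lies in $\P'$, so $U^i_1,\dots,U^i_{p_i-1}$ and $w^i_{p_i,0}$ all lie in $\P'$, giving Property~\ref{item:lem3-1}.

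It remains to prove the hard half, that nothing of $W^i$ lexicographically above $w^i_{p_i,0}$ enters $\P'$, i.e.\ Properties~\ref{item:lem3-2} and~\ref{item:lem3-3}. I would reduce this to the single statement that \emph{no cross edge $(E7)$--$(E9)$ is traversed inside $\P'$}. Indeed, a member of $W^i$ above the cut has $\ell\ge 1$, its neighbours in $X^i$ are all used up (they precede $s^{i+1}$ and hence its own position), and its remaining neighbours $Y,z,t$ are unavailable in $\P'$, so both of its incident path edges would have to be cross edges. Conversely, if no cross edge is used then the only inter-class moves are $ss^1$ and the forward moves $w^i_{p,0}s^{i+1}$ of $(E5)$; the backward move $s^{i+1}\to w^i_{p,0}$ is impossible, since it would force the successor into $X^i$, contradicting $X^i\prec_\pi s^{i+1}$ by $(P5)$. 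Thus without cross edges the classes are traversed strictly in order, each segment between consecutive $s$'s is exactly one class, and the clean cut follows.

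So the crux is to rule out cross edges in $\P'$, and here I would argue by contradiction with the \emph{first} cross edge $r\to r'$ on $\P$. Up to that point the classes are entered in increasing order, so $r$ has the current, maximal colour $a$ while $r'$ has colour $b\neq a$; a forward jump $b>a$ is impossible, as it would place a vertex of $W^b$ before $s^b$, violating $(P2)$, so $b<a$ and $r'$ is a ``return''. As noted above, every returning vertex must continue along a further cross edge, so $r'$ sits inside a maximal run of consecutive cross edges whose two end vertices attach to the outside only through $X$-edges $(E4)$; since this run together with its two $X$-neighbours contains no $s$-vertex, it lies within a single segment and both end vertices share that segment's colour. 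The final contradiction I would extract from the lexicographic order $(P3)$: choosing an extremal vertex of the run (e.g.\ one of minimum colour), the shapes of $(E7)$--$(E9)$ force its two cross neighbours into a common class $W^c$ and to be lexicographically non-consecutive there, yet every $W^c$-vertex lexicographically between them also lies in $\P'$ and would have to be squeezed strictly between two path-consecutive vertices. The main obstacle is exactly this last step: one must analyse the forms of $(E7)$, $(E8)$, and $(E9)$ to guarantee such a monochromatic non-adjacent pair (the mixed-colour combinations, which mimic the validation path of \cref{width:lem0}, require the extremal choice to be made with care), and to handle the small-$k$ boundary cases separately.
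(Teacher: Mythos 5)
Your handling of Property~\ref{item:lem3-0} and of the backbone is correct and matches the paper's own proof: the paper likewise derives Property~\ref{item:lem3-0} directly from the partial order, shows that $s^1,\dots,s^{k+1}$ occur in ascending order, defines $p_i$ via the predecessor $w^i_{p_i,0}$ of $s^{i+1}$, and obtains Property~\ref{item:lem3-1} from the fact that \ref{p3} totally orders each $W^i$, so that $W^i\cap\P'$ is a lexicographic prefix. Your reduction of Properties~\ref{item:lem3-2} and~\ref{item:lem3-3} to the exclusion of cross edges \ref{e7}--\ref{e9} inside $\P'$ is also essentially sound, up to a small slip: above-cut vertices $w^i_{r,0}$ with $r>p_i$ have $\ell=0$, so you must pass to the \emph{first} above-cut vertex in $\lambda$, which by \ref{p3} is $w^i_{p_i,1}$ -- exactly the vertex the paper works with.

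The decisive step, however -- actually ruling out cross edges in $\P'$ -- is missing, and the route you sketch for it fails. Your extremal claim, that a minimum-colour vertex of a maximal cross-edge run has its two cross neighbours in a common class $W^c$ and lexicographically non-consecutive there, is refuted by the gadget itself: the validation path of \cref{width:lem0} contains the subpath $w^2_{p_2,1}$ -- $w^1_{p_1,2}$ -- $w^3_{p_3,1}$ (edges from \ref{e7} and \ref{e8}), in which the minimum-colour vertex $w^1_{p_1,2}$ has its cross neighbours in the two \emph{distinct} classes $W^2$ and $W^3$. Since $G'$ by design contains long cross-edge runs of precisely this shape, a run analysis cannot succeed on graph structure alone; what excludes such a run from $\P'$ is its interaction with $\pi$, and this is the part you leave open (and acknowledge as unresolved). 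The paper closes the argument with a different, much more local device: take the first vertex in $\lambda$ of any $W^j$ lying to the right of $w^j_{p_j,0}$, which by \ref{p3} must be $w^j_{p_j,1}$. For $j\ge 2$, every cross edge at $w^j_{p_j,1}$ leads to a $W^1$-vertex $w^1_{r,\ell}$ with $\ell\in\{j-1,j\}$ (its second subscript being $1$ forces the \ref{e7}/\ref{e8} partner into $W^1$, and \ref{e9} does not apply), so its predecessor would be an above-cut $W^1$-vertex occurring even earlier in $\lambda$, contradicting the minimal choice; for $j=1$, both predecessor and successor must be of the form $w^2_{\cdot,1}$, and \ref{p3} (the forced placement of $w^2_{r',0}$ before $w^2_{r',1}$, hence before $w^2_{r,1}$) gives the contradiction. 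Your proposal would still need exactly this kind of order-theoretic argument to terminate; as it stands, the heart of the lemma is unproven.
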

\begin{proof}
The first property follows directly from the choice of $\pi$ since $s$ has to be to the left of all other vertices and $s^{k+1}$ has to be to the left of $z$, $t$ and all vertices of $Y$. Now observe that, due to $\pi$, vertex $s^i$ has to be visited before any vertex of $W^i$. As the vertices of $X^i$ are only adjacent to $s^i$ and vertices of $W^i$, it also holds that $s^i$ is visited before all vertices of $X^i$. Furthermore, $\pi$ requires that all vertices of $X^i$ are visited before $s^{i+1}$. Therefore, we know that the vertices $s^1, \dots, s^{k+1}$ are visited in $\P'$ following the ascending order of their indices. 

Next we observe that the predecessor of $s^{i+1}$ with $1 \leq i \leq k$ has to be a vertex $w^{i}_{p_i, 0}$ for some $p_i \in \{1,\dots,q\}$. Furthermore, $\pi$ requires that all the vertices of $\bigcup_{r=1}^{p_i-1}U^i_{r}$ are to the left of $w^{i}_{p_i, 0}$ in $\P'$ and none of the other vertices of $W^i$ are to the left of $w^{i}_{p_i, 0}$. Therefore, Property \ref{item:lem3-1} is fulfilled for all $i \in \{1,\dots,k\}$. 

For Properties \ref{item:lem3-2} and \ref{item:lem3-3} it is sufficient to show that no vertex of $W^j$ is to the right of $w^{j}_{p_j, 0}$ in $\P'$. Assume for contradiction that there is such a vertex. Then consider the first vertex of any $W^j$ that is to the right of $w^{j}_{p_j, 0}$. Due to $\pi$, this vertex must be $w^{j}_{p_j, 1}$. First assume that $j = 1$. Then the predecessor of $w^{1}_{p_1, 1}$ can only be a vertex $w^{2}_{r, 1}$ for some $r \in \{1, \dots, q\}$. This follows from the fact that the only other neighbors of $w^{1}_{p_1, 1}$ that have not been visited before $w^{1}_{p_1, 1}$ are $z$ as well as the vertices in $Y$ and these vertices are not in $\P'$, due to Property~\ref{item:lem3-0}. Now let us consider the successor of $w^{1}_{p_1, 1}$ in $\P'$. This vertex cannot be $z$ since $z$ is not in $\P'$. Thus, it is a vertex $w^{2}_{r', 1}$ with $r' > r$, due to \ref{p3}. However, \ref{p3} also requires that $w^{2}_{r', 0}$ is to the left of $w^{2}_{r', 1}$ and, hence, $w^{2}_{r', 0}$ is to the left of $w^{2}_{r, 1}$; a contradiction to \ref{p3}. Therefore, $j = 1$ is not possible.

If $j > 1$, then the predecessor of $w^{j}_{p_j, 1}$ is some vertex $w^{1}_{r, \ell}$ with $\ell \neq 0$ as all other neighbors of $w^{j}_{p_j, 1}$ are part of $Y$. However, this contradicts the choice of $w^{j}_{p_j, 1}$ as then $w^{1}_{r, \ell}$ is to the right of $w^{1}_{p_1, 0}$ in $\P'$. This concludes the proof.
\end{proof}

\begin{lemma}\label{width:lem4}
Let $\P$ be an ordered Hamiltonian path in $G'$ such that $\lambda(\P)$ extends $\pi$. Let $p_1, \ldots , p_k$ be chosen as in \cref{width:lem3}. Let $\P'$ be the subpath of $\P$ between $s^{k+1}$ and $t$. The first inner vertex of $\P'$ is $z$, the second inner vertex is $w^1_{p_1,1}$ and all other inner vertices are elements of $\bigcup_{i=1}^k U^i_{p_i}$.
\end{lemma}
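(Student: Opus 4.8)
The plan is to establish the three assertions in the order stated, in each case combining the forced neighbourhoods coming from the edge rules with the precedence constraints of $\pi$ and the description of the prefix provided by \cref{width:lem3}. Two facts drive everything. First, at the moment $s^{k+1}$ has just been reached the still unvisited vertices are exactly $z$, $t$, all of $Y$, the sets $U^i_{p_i}\setminus\{w^i_{p_i,0}\}$, and the sets $U^i_r$ with $r>p_i$; this is immediate from \cref{width:lem3}. Second, a vertex $w^i_{p,0}$ of level $0$ carries no $W$--$W$ edge at all: a direct inspection of Rules~\ref{e7}--\ref{e10} shows that every $w$-endpoint occurring there has a positive level, so the only neighbours of $w^i_{p,0}$ are $s^i$, $s^{i+1}$, the vertices of $X^i$, and the vertices of $Y$.

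For the first inner vertex I would examine the successor of $s^{k+1}$. Its neighbours are $z$ and the vertices $w^k_{p,0}$; those with $p<p_k$ already lie in the prefix, and choosing one with $p>p_k$ is impossible, because Rule~\ref{p3} gives $w^k_{p_k,1}\prec_\pi w^k_{p,0}$ while $w^k_{p_k,1}$ is still unvisited, so this would violate that $\lambda(\P)$ extends $\pi$. Hence the successor is $z$. The successor of $z$ is in turn one of the vertices $w^1_{p,1}$, and the same index bookkeeping forces $p=p_1$: the choice $p<p_1$ is already visited, while $p>p_1$ is blocked by $w^1_{p_1,1}\prec_\pi w^1_{p,1}$ together with $w^1_{p_1,1}$ being unvisited. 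This yields that the second inner vertex is $w^1_{p_1,1}\in\bigcup_{i=1}^k U^i_{p_i}$.

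The heart of the argument, and the step I expect to be the main obstacle, is to show that no later inner vertex leaves $\bigcup_{i=1}^k U^i_{p_i}$. Since $t\prec_\pi y_j$ for every $y_j$ and $\P'$ terminates at $t$, no vertex of $Y$ can occur as an inner vertex, so the only possible escape vertices are those in some $U^j_r$ with $r>p_j$. I would take the first such vertex $w^j_{r,\ell}$ along $\P'$ and derive a contradiction according to its level. If $\ell=0$, then by the second fact above $w^j_{r,0}$ has no $W$-neighbour and all its remaining neighbours are already visited or lie in $Y$, so it cannot be preceded on $\P'$ by a vertex of $\bigcup_{i=1}^k U^i_{p_i}\cup\{z\}$, contradicting its choice as the first escape vertex. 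If $\ell\ge1$, then Rule~\ref{p3} gives $w^j_{r,0}\prec_\pi w^j_{r,\ell}$; but $w^j_{r,0}$ is neither in the prefix (as $r>p_j$) nor among the earlier inner vertices of $\P'$ (which all lie in $\bigcup_{i=1}^k U^i_{p_i}\cup\{z\}$ by the first-escape choice), so it is unvisited when $w^j_{r,\ell}$ is placed, again contradicting that $\lambda(\P)$ extends $\pi$. Thus every inner vertex beyond $z$ belongs to $\bigcup_{i=1}^k U^i_{p_i}$, completing the proof. The delicate point throughout is to keep precise track of which vertices are visited at each moment, so that the precedence constraints of Rule~\ref{p3} can be invoked to rule out the wrong indices.
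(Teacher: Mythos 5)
Your proof is correct and takes essentially the same route as the paper's: both use the inventory of visited/unvisited vertices from \cref{width:lem3}, the same two forcing steps (ruling out $w^k_{r,0}$ with $r>p_k$ via \ref{p3} to get $z$, then ruling out $w^1_{r,1}$ with $r\neq p_1$ to get $w^1_{p_1,1}$), and for the remaining inner vertices the same two facts -- that level-$0$ vertices $w^i_{r,0}$ have no edges into $W\cup\{z,t\}$ and that \ref{p3} forces $w^i_{r,0}$ to precede the rest of $U^i_r$ -- so your ``first escape vertex'' framing is merely a minimal-counterexample repackaging of the paper's direct argument that $w^i_{r,0}$, and hence all of $U^i_r$ with $r>p_i$, cannot lie on $\P'$. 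Two harmless inaccuracies that do not affect the argument: the visited candidates $w^k_{p,0}$ are those with $p\le p_k$, not $p<p_k$ (the vertex $w^k_{p_k,0}$ is in the prefix by \cref{width:lem3}), and by rule (E4) the vertex $w^i_{p,0}$ is adjacent to $X^i$, $s^{i+1}$, and $Y$ but not to $s^i$ (which would in any case already be visited).
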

\begin{proof}
First we consider the successor of $s^{k+1}$ in $\P$. There are two options: The first option is some vertex $w^k_{r,0}$. 
By \cref{width:lem3}, all of these vertices with $r \leq p_k$ have been visited before $s^{k+1}$ in $\P$. We observe that \cref{width:lem3} also implies that $w^k_{p_k,k}$ has not been visited before $s^{k+1}$ in $\P$. Due to the choice of $\pi$, none of the vertices $w^k_{r,0}$ with $r > p_k$ can be visited before $w^k_{p_k,k}$ and, thus, none of these vertices can be the successor of $s^{k+1}$ in $\P$. Therefore, the successor of $s^{k+1}$ has to be $z$. After $z$, we have to visit a vertex $w^1_{r,1}$. Again, due to \cref{width:lem3}, all these vertices with $r < p_1$ have been visited before $s^{k+1}$. For all the vertices $w^1_{r,1}$ with $r > p_1$, the vertex $w^1_{r,0}$ has not been visited so far and, thus, the partial order $\pi$ forbids them to be the successor of $z$. This implies that the successor of $z$ is $w^1_{p_1,1}$.

Next observe that, due to \cref{width:lem3}, the only vertices outside of the sets $U^i_{p_i}$ that have not already been visited before $s^{k+1}$ in $\P$ are the vertices of $Y$ and all the vertices in the sets $U^i_{r}$ with $r > p_i$. The vertices of $Y$ cannot be part of $\P'$ as they are forced to be visited after $t$ by the partial order $\pi$. In every set $U^i_{r}$, the vertex $w^i_{r,0}$ has to be visited first. However, this vertex is only adjacent to vertices in $X_i$ and to $s^{i+1}$ (which~have~been~visited~already~before~$s^{k+1}$) and to vertices in $Y$ which have to be visited after $t$. Thus, $w^i_{r,0}$ is not part of $\P'$ and, hence, also no other vertex of $U^i_{r}$ with $r > p_i$ is part of $\P'$. This completes the proof.
\end{proof}

\begin{lemma}\label{width:lem2}
If there is an ordered Hamiltonian path $\P$ in $G'$ such that $\lambda(\P)$ extends $\pi$, then $G$ has a clique that contains exactly one vertex of each color.
\end{lemma}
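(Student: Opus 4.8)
The plan is to reverse the reduction phase by phase, leaning entirely on the structural lemmas already in hand. Suppose we are given an ordered Hamiltonian path $\P$ of $G'$ with $\lambda(\P)$ a linear extension of $\pi$. First I would invoke \cref{width:lem3} to extract indices $p_1,\dots,p_k \in \{1,\dots,q\}$ governing the selection phase: the prefix of $\P$ ending in $s^{k+1}$ contains each $s^i$, all of $X^i$ and all of $\bigcup_{r < p_i} U^i_r$, meets $U^i_{p_i}$ in exactly the single vertex $w^i_{p_i,0}$, and avoids every $U^i_r$ with $r > p_i$. The candidate for the answer will be $\mathcal{C} := \{v^1_{p_1},\dots,v^k_{p_k}\}$, which by its very indexing contains exactly one vertex of each color $i \in \{1,\dots,k\}$, so the whole task reduces to certifying that $\mathcal{C}$ is a clique of $G$.

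Next I would analyze the validation phase, namely the subpath of $\P$ between $s^{k+1}$ and $t$, by applying \cref{width:lem4}. That lemma says this subpath reads $s^{k+1}, z, w^1_{p_1,1}, \dots, t$, where after $z$ every inner vertex lies in $\bigcup_{i=1}^k U^i_{p_i}$. Combining this with \cref{width:lem3} (each $w^i_{p_i,0}$ has already been consumed in the selection phase), the portion of this subpath running from $w^1_{p_1,1}$ to $t$ uses only vertices of $S = U^1_{p_1}\cup\dots\cup U^k_{p_k}\cup\{t\}\setminus\{w^1_{p_1,0},\dots,w^k_{p_k,0}\}$, the very set defined in \cref{width:lem0}. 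Hence the restriction of $\P$ to this segment is an honest path from $w^1_{p_1,1}$ to $t$ inside $G'[S]$.

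Finally, I would close the argument using the ``only if'' direction of \cref{width:lem0}: the existence of any path from $w^1_{p_1,1}$ to $t$ in $G'[S]$ forces $\{v^1_{p_1},\dots,v^k_{p_k}\}$ to form a clique in $G$ (equivalently, by the contrapositive stated at the end of that lemma, a missing edge $v^i_{p_i}v^j_{p_j}$ would destroy every such path). Thus $\mathcal{C}$ is precisely the sought multicolored clique, completing the reduction.

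Since \cref{width:lem0,width:lem3,width:lem4} already carry the conceptual weight, the only place demanding genuine care is the bookkeeping in the middle step: one must verify that the validation segment visits none of the $w^i_{p_i,0}$ (so that it stays within $S$) and really constitutes a simple path whose two endpoints are exactly $w^1_{p_1,1}$ and $t$. I expect this vertex-accounting --- reconciling which $U^i_{p_i}$-vertices are consumed during selection versus validation, and confirming that no relevant vertex is deferred to the cleanup phase after $t$ --- to be the main, albeit routine, obstacle; everything else is a direct citation of the preceding lemmas.
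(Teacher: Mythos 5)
Your proposal is correct and follows essentially the same route as the paper's proof: invoke \cref{width:lem3} to extract the indices $p_1,\dots,p_k$ defining $\mathcal{C}$, use \cref{width:lem4} to pin down the validation subpath from $w^1_{p_1,1}$ to $t$, and apply the ``only if'' direction of \cref{width:lem0} to conclude $\mathcal{C}$ is a clique. The bookkeeping point you flag --- that the validation segment avoids each $w^i_{p_i,0}$ and so lies in $G'[S]$ --- is handled exactly as you suggest, since those vertices were already consumed in the prefix and $\P$, being Hamiltonian, visits each vertex only once.
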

\begin{proof}
Let $\P$ be such an ordered Hamiltonian path in $G'$ such that $\lambda(\P)$ extends $\pi$. Then, by \Cref{width:lem3} the path $\P$ has selected some set of vertices $\mathcal{C} := \{v^{1}_{p_1}, \ldots , v^k_{p_k}\}$ in the selection phase. It remains to be shown that $\mathcal{C}$ is a $k$-color clique of $G$. As we have seen in \cref{width:lem4}, the subpath $\P'$ of $\P$ between $z$ and $t$ starts in $w^1_{p_1,1}$ and only contains elements of the $U^i_{p_i}$ as inner vertices. Due to \cref{width:lem0} such a path can only exist if $\mathcal{C}$ forms a clique of $G$.
\end{proof}

The main theorem of this section is a direct consequence of \Cref{width:lemma-width,width:lem1,width:lem2}, \Cref{thm:lower} as well as the \W[1]-hardness of the Multicolored Clique Problem~\cite{fellows2009param}.

\begin{theorem}
The (Min)POHPP parameterized by the width of the poset $k$ is $\mathsf{W}[1]$-hard. Furthermore, assuming the ETH, there is no $f(k) n^{o(k)}$-time algorithm for (Min)POHPP for any computable function $f$. 
\end{theorem}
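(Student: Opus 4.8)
The plan is to read the construction of $(G',\pi)$ from a Multicolored Clique instance, together with \Cref{width:lem1,width:lem2}, as a polynomial-time many-one reduction. Concretely, for an MCP instance $(G,k)$ whose $k$ color classes all have size $q$ (achievable by padding with isolated vertices), the pair $(G',\pi)$ is a yes-instance of POHPP exactly when $(G,k)$ is a yes-instance of MCP: \Cref{width:lem1} supplies the forward implication (a $k$-colored clique yields an ordered Hamiltonian path whose order extends $\pi$) and \Cref{width:lem2} supplies the converse. Since POHPP is merely the feasibility version of MinPOHPP — any algorithm for MinPOHPP decides POHPP by testing whether the optimum is finite — it suffices to establish both bounds for POHPP, and they then carry over verbatim to MinPOHPP.

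Next I would check that this is in fact a parameterized reduction. By \Cref{width:lemma-width} the width of $\pi$ is exactly $k+1$, so the target parameter is bounded by a function of the source parameter $k$, and the whole construction is plainly computable in polynomial time. Combined with the $\W[1]$-hardness of MCP~\cite{fellows2009param}, this immediately gives that POHPP, and hence MinPOHPP, parameterized by the poset width is $\W[1]$-hard.

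For the ETH lower bound I would track sizes carefully. With $n=|V(G)|=kq$, counting the vertex groups of $G'$ yields $N:=|V(G')|=\O(qk^2)=\O(nk)$, which is polynomial in $n$, while the new width is $k'=k+1=\Theta(k)$. Assuming for contradiction an algorithm solving POHPP in time $f(k')\,N^{o(k')}$, composing it with the (polynomial) reduction would decide MCP in time $f(k+1)\,(\O(nk))^{o(k+1)}$. Because $N$ is polynomially bounded in $n$ and $k'=\Theta(k)$, we have $(\O(nk))^{o(k')}=n^{o(k)}$ up to a factor depending only on $k$, which can be absorbed into a new computable function $f'$; thus MCP would be solvable in $f'(k)\,n^{o(k)}$ time, contradicting \Cref{thm:lower}. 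The only step requiring genuine care — and the main, if modest, obstacle — is precisely this bookkeeping: one must confirm that $|V(G')|$ is polynomial in $n$ and that the width $k+1$ is \emph{linearly} related to $k$, since a merely computable relationship would not let the $o(\cdot)$ bound transfer from MCP to POHPP.
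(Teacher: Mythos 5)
Your proposal is correct and follows exactly the paper's route: the paper likewise derives the theorem directly from \cref{width:lemma-width}, \cref{width:lem1}, \cref{width:lem2}, the \W[1]-hardness of MCP, and \cref{thm:lower}. Your additional bookkeeping (that $|V(G')| = \O(qk^2)$ is polynomial in $n$ and the width $k+1$ is linear in $k$, so the $n^{o(k)}$ bound transfers) is precisely the routine verification the paper leaves implicit, and it checks out.
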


Again we can give a reduction from MinPOHPP to TSP-PC by introducing a universal vertex to an instance of MinPOHPP and forcing it to be the last vertex.

\begin{corollary}
The TSP-PC parameterized by the width of the poset $k$ is $\mathsf{W}[1]$-hard. Furthermore, assuming the ETH, there is no $f(k) n^{o(k)}$-time algorithm for TSP-PC for any computable function $f$.
\end{corollary}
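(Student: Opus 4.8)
The plan is to give a polynomial-time parameterized reduction from the (Min)POHPP, whose $\W[1]$-hardness and ETH lower bound were established in the preceding theorem, to the TSP-PC, while keeping the width of the partial order unchanged. Given a MinPOHPP instance consisting of a graph $G$, a partial order $\pi$ on $V(G)$ of width $k$, and a cost function $c \colon E(G) \to \Q$, I would build the TSP-PC instance as follows. Let $H$ be the complete graph on $V(G) \cup \{u\}$ for a fresh universal vertex $u$, which will serve as the fixed starting vertex $s := u$ of the tour. Define costs $c'$ on $H$ by $c'(e) := c(e)$ for every $e \in E(G)$, $c'(uv) := 0$ for every $v \in V(G)$, and $c'(e) := M$ for every remaining edge $e$, where $M$ is chosen larger than the total absolute cost $\sum_{e \in E(G)} |c(e)|$ (and larger than any target bound $B$ in the decision version). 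Finally, let $\pi'$ be the reflexive and transitive closure of $\pi$ together with the relations $u \prec_{\pi'} v$ for all $v \in V(G)$, so that $u$ is the unique minimum (the start) and, cyclically, the vertex closing the induced Hamiltonian path, matching the informal description preceding the statement.

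I would then establish correctness through a cost correspondence. A TSP-PC tour starting at $s = u$ has the form $u, w_1, \dots, w_n, u$, and since both edges incident to $u$ cost $0$, its cost equals $\sum_{i=1}^{n-1} c'(w_i w_{i+1})$. Hence a tour of cost below $M$ exists if and only if $w_1, \dots, w_n$ is an ordered Hamiltonian path of $G$ using only edges of $E(G)$, and in that case the tour cost equals the path cost; the order $w_1, \dots, w_n$ extends $\pi$ precisely because $\pi'$ restricted to $V(G)$ equals $\pi$. Thus the minimum tour cost equals the minimum cost of an ordered Hamiltonian path of $G$ extending $\pi$, and both are infeasible simultaneously. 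For the width, a chain decomposition of $\pi$ into $k$ chains extends to one of $\pi'$ by prepending $u$ to a single chain, since $u$ lies below every element of $V(G)$; as $\pi'$ still contains an antichain of size $k$ inside $V(G)$, its width is exactly $k$.

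This is an $\FPT$ (indeed polynomial-time) reduction that adds a single vertex and preserves the parameter $k$ exactly, so the $\W[1]$-hardness of the (Min)POHPP transfers directly to the TSP-PC. Moreover, writing $N = n+1$ for the number of vertices of $H$, any $f(k)\,N^{o(k)}$-time algorithm for TSP-PC would, composed with the reduction, yield an $f(k)(n+1)^{o(k)} = f'(k)\,n^{o(k)}$-time algorithm for the MinPOHPP, contradicting the preceding theorem under the ETH. I expect the only delicate point to be the interface between the path structure of the MinPOHPP and the cyclic, fixed-start structure of the TSP-PC: one must check that the free choice of the endpoints of the Hamiltonian path is faithfully captured by the universal vertex, whose zero-cost edges let the tour enter and leave at arbitrary vertices, and that the tour can always be oriented so that the order it induces on $V(G)$ extends $\pi$.
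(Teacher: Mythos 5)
Your reduction is essentially the paper's: the paper proves this corollary in one sentence by introducing a universal vertex to the MinPOHPP instance and forcing it to close the tour, which is exactly your fresh start vertex $u$ with zero-cost edges and $u$ placed below everything in $\pi'$, preserving the width $k$. The only quibble is your choice of $M$: since costs may be negative, a tour using one $M$-edge costs at least $M - \sum_{e \in E(G)}|c(e)|$, which can still lie below $M$, so the stated biconditional needs, e.g., $M > 2\sum_{e \in E(G)}|c(e)| + |B| + 1$ --- or simply the observation that the hardness instance from the preceding theorem is unweighted, where genuine edges of cost $0$, non-edges of cost $1$, and threshold $0$ suffice.
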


\subsection{Another Parameter: Distance to Linear Order}

As shown above, it is unlikely that we can give an $\FPT$-algorithm for the POHPP if it is parameterized by the width of the partial order. However, we can ask whether this bound on the width can somehow be meaningfully strengthened to achieve an $\FPT$-algorithm.

For a given partial order $\pi$ on a set $X$, we say the \emph{distance of $\pi$ to linear order} is equal to $|X| - h(\pi)$ where $h(\pi)$ is the height of $\pi$. It is easy to see that a partial order with a distance to linear order of $k$ has width at most $k+1$. Conversely, partial orders of width~1 have distance to linear order $0$. However, even partial orders of width~$2$ could have a distance to linear order of $\frac{|X|}{2}$. 

If the distance to linear order of $\pi$ is bounded by $k$, \cref{algo:width} can be shown to solve the MinPOHPP in $\O(k^22^kn^2)$ time. With some slight adjustments, we can improve this time bound as follows.\footnote{Some aspects of this algorithm are similar to the approach used in~\cite{deineko2006traveling}.}

\begin{theorem}
Given a graph $G$ with $n$ vertices and a partial order $\pi$ on $V(G)$ of distance to linear order $k \geq 1$, we can solve the MinPOHPP in $\O(k^2 2^k n + m + |\pi|)$ time.
\end{theorem}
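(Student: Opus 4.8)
The plan is to exploit the structure of a partial order $\pi$ whose distance to linear order is $k$: by definition there is a chain $L$ (the longest chain) with $|L| = h(\pi) = n - k$, so the complement $D := V(G) \setminus L$ has size exactly $k$. The key observation is that the vertices of $L$ already have a fixed relative order in any linear extension, so the only freedom in building a Hamiltonian path is (a) \emph{which} subset of the $k$ vertices of $D$ has been placed so far, and (b) \emph{where} along the chain $L$ we currently are. I would therefore set up a dynamic program whose state records a subset $T \subseteq D$ of already-placed ``free'' vertices together with the number $i$ of chain-vertices placed so far; since the chain vertices must appear in their fixed order, knowing $i$ determines exactly which prefix $\{C_L[1], \ldots, C_L[i]\}$ of $L$ has been used. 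To allow cost updates I also store which vertex was placed last. This gives $\O(2^k \cdot n \cdot \text{(description of last vertex)})$ states, and the factor $k$ in the last-vertex coordinate is where the $k^2$ and the $2^k$ in the bound originate.

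\textbf{The recurrence and feasibility checks.}
First I would precompute, for each free vertex and each prefix length, whether appending it keeps the ordering a prefix of a linear extension of $\pi$ — this is the analogue of the $\xi_i$ feasibility test in \cref{algo:width}, namely that all $\pi$-predecessors of the candidate vertex have already been placed. For a state $(T, i, \text{last})$ the transition either appends the next chain vertex $C_L[i+1]$ or appends some free vertex $d \in D \setminus T$; in each case I check adjacency in $G$ (via the precomputed adjacency/cost matrix) and the prefix-feasibility condition, then relax the stored minimum cost by adding $c(\text{last}, \text{new})$. Restricting to this factored state space — rather than the full $n^{k}$ tuples of \cref{algo:width} — is exactly what converts the $\XP$ running time into an $\FPT$ one, and it is correct because every admissible path is uniquely reconstructed by its sequence of ``place a chain vertex'' and ``place a free vertex'' decisions.

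\textbf{Squeezing the running time and the main obstacle.}
The naive bound from the above is $\O(k^2 2^k n^2)$ (states $\times$ transitions). To reach the claimed $\O(k^2 2^k n + m + |\pi|)$ I would avoid paying a full $\O(n)$ per state for choosing the next chain vertex: since the chain index advances deterministically and each chain vertex has only one admissible position, the chain-appending transitions over an entire ``run'' along $L$ can be amortized, so that the expensive branching genuinely happens only at the $\O(2^k)$ subset-changes rather than at every one of the $n$ chain steps. The term $\O(m + |\pi|)$ comes from the preprocessing: building adjacency information costs $\O(m)$ and reading/transitively testing the order relations needed for the $\xi$-style feasibility table costs $\O(|\pi|)$. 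The main obstacle I anticipate is precisely this amortization argument — one must show rigorously that between two consecutive free-vertex insertions the long stretch of forced chain vertices contributes only $\O(n)$ work \emph{in total} across the whole table, rather than $\O(n)$ per subset; handling the interaction between a free vertex's $\pi$-constraints and the forced chain prefix (so that feasibility is checked in $\O(1)$ amortized time per step) is the delicate part that distinguishes the $\O(k^2 2^k n + m + |\pi|)$ bound from the easier $\O(k^2 2^k n^2)$ one.
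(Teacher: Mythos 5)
Your decomposition and dynamic program are essentially the paper's: a maximum chain $C$ (computable in $\O(|\pi|+n)$ time), the $k$ off-chain vertices $S = V(G)\setminus C$, and states $(Z, i, u)$ with $Z \subseteq S$, $i$ the chain-prefix length, and the last vertex $u$ restricted to $Z \cup \{C[i]\}$, giving $\O(k\,2^k n)$ states. However, your running-time analysis contains a genuine gap, and moreover the obstacle you flag as ``the delicate part'' is a phantom. There is no $\O(k^2 2^k n^2)$ naive bound to amortize away: from any state there is exactly \emph{one} chain-appending transition (to the uniquely determined vertex $C[i+1]$) and at most $k$ free-vertex transitions, so in the backward formulation each table entry has at most $k+1$ candidate predecessors. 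Together with an $\O(k)$ feasibility check per entry, the table work is directly $\O(k\,2^k n)\cdot\O(k) = \O(k^2 2^k n)$ --- no run-length amortization over stretches of $C$ is needed, and the paper's proof contains none. Your proposed amortization argument, which you admit you have not carried out, is solving a problem that does not exist.

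The real subtlety in reaching $\O(k^2 2^k n + m + |\pi|)$, which your proposal gets wrong, is the adjacency representation: you invoke a ``precomputed adjacency/cost matrix'' with $\O(1)$ queries while charging only $\O(m)$ for it, but building a full matrix costs $\O(n^2)$ time and space, which would dominate the claimed bound on sparse graphs (adjacency lists, conversely, do not support $\O(1)$ queries). The paper's fix exploits the structure of the transitions just described: every edge ever queried either has an endpoint in $S$ or is a chain-consecutive pair $C[i-1]C[i]$, so it suffices to store for each vertex $v$ a size-$(k+1)$ array $A(v)$ of edge costs to $S$ and to $v$'s chain predecessor, built in $\O(kn+m)$ time (and $kn \leq k^2 2^k n$ since $k \geq 1$). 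A second, smaller slip: your feasibility table indexed by (free vertex, prefix length) cannot be fully precomputed, because whether a free vertex $u$ may be appended depends also on the current subset --- its $\pi$-predecessors inside $S$ must lie in $Z$. The paper handles this by splitting the test into the scalar check $\xi(u)\leq i$ ($\O(1)$, precomputable) and the per-entry check $\Pi_\prec(u)\setminus Z = \emptyset$ ($\O(k)$, since both sets have size at most $k$), which is exactly what the $\O(k)$ budget per entry absorbs. With these two corrections your argument becomes the paper's proof; as written, it does not establish the claimed bound.
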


\begin{proof}
We first compute a maximum chain $C$ of $\pi$ which can be done in $\O(|\pi| + n)$ time~\cite[pages 133--134]{golumbic2004algorithmic}. Similarly as in \cref{algo:width}, we write $C[i]$ to describe the $i$-th element of the chain $C$, i.e., $C[i] \prec_\pi C[j]$ if and only if $i < j$. Let $S$ be the set of vertices of $G$ that are not contained in $C$. It holds that $|S| = k$. For every vertex $v \in V(G)$, we construct an array $A(v)$ that contains for all vertices $w \in S$ the weight of the edge between $v$ and $w$ or $\infty$ if no such edge exists. If $v = C[i]$ for some $i$, then $A(v)$ also contains an entry for the (possibly non-existing) edge to $C[i-1]$ . Furthermore, we have a list $\Pi_\prec(v)$ that contains all $x \in S$ with $x \prec_\pi v$. For every element $v \in S$, we define the variable $\xi(v)$ that contains the maximal index $j$ for which $C[j] \prec_\pi v$, i.e., the first $j$ elements of the chain $C$ are smaller than $v$ in $\pi$ and all other vertices of $C$ are not smaller than $v$ in $\pi$. Note that we can compute these arrays, lists, and variables in $\O(k n + m +|\pi|)$ time.

Now, the algorithm works as follows. We have a vector $M$ that has an entry for every tuple $(Z, i, u)$ with $Z \subseteq S$, $i \in \{0,\dots,|C|\}$ and $u \in Z \cup \{C[i]\}$. This entry is the minimum cost of an ordered path $\P$ of $G$ fulfilling the following properties (or $\infty$ if no such path exists):

\begin{enumerate}[(i)]
  \item $\P$ contains the $j$-th element of $C$ if and only if $j \leq i$,\label{cond:distance1}
  \item $\P$ contains a vertex $v \in S$ if and only if $v \in Z$,\label{cond:distance2}
  \item $\lambda(\P)$ is a prefix of a linear extension of $\pi$,\label{cond:distance3}
  \item the last element of $\lambda(\P)$ is $u$.\label{cond:distance4}
\end{enumerate} 

First, we set all entries to $\infty$. We fill in the entries of $M$ inductively according to the value $i + |Z|$. For $i + |Z| = 1$, we have two cases. The $M$-value of entry $(\emptyset, 1, C[1])$ is set to $0$ if and only if $\Pi_\prec(C[1]) = \emptyset$. The $M$-values of entries $(\{u\}, 0, u)$ with $u \in S$ are set to $0$ if and only if $\Pi_\prec(u) = \emptyset$ and $\xi(u) = 0$. 

Now assume that we have computed all $M$-values of tuples $(Z, i, u)$ with $i + |Z| = \ell$. Let $(Z, i, u)$ be an entry with $i + |Z| = \ell + 1$. We first check whether $\Pi_\prec(u) \setminus Z = \emptyset$ and, if $u \in Z$, whether $\xi(u) \leq i$. If this is not the case, then the entry for $(Z, i, u)$ is set to $\infty$. Otherwise, we distinguish two cases. If $u \in Z$, then we check for all entries $(Z \setminus \{u\}, i, v)$ whether $uv \in E(G)$. If this is the case, then we update the entry $M[(Z, i, u)]$ with $M[(Z \setminus \{u\}, i, v)] + w(uv)$ if this value is smaller than the old value. If $u = C[i]$, then we check for all entries $(Z, i-1, v)$ whether $uv \in E(G)$. Again, we update $M[(Z, i, u)]$ accordingly if this is the case. We omit the proof that the algorithm works correctly as it follows along the same lines as the proof of \cref{thm:width}.

For the running time, we first observe that for any entry of $M$, we can check whether $\Pi_\prec(u) \setminus Z = \emptyset$ in $\O(k)$ time as both $\Pi_\prec(u)$ and $Z$ have size at most $k$. We can check in $\O(1)$ time whether $\xi(u) \leq i$. Furthermore, we have to check at most $k+1$ entries of $M$ and for any of those entries we can do these checks in $\O(1)$ time using the array $A(u)$. Thus, the algorithm needs $\O(k)$ time per entry of $M$. Since $M$ has $\O(k \cdot 2^k \cdot n)$ entries, the overall running time of the procedure is $\O(k^2 \cdot 2^k \cdot n)$. Combining this with the costs of computing $C$, $A$, and $\Pi_\prec$, we get the running time $\O(k^2 \cdot 2^k \cdot n + m + |\pi|)$.
\end{proof}

\section{Outerplanar Graphs}\label{outer}

As we have seen in \cref{thm:hamnp}, the POHPP is \NP-complete for any graph class that contains arbitrarily large balanced complete bipartite graphs. Thus, we have to focus on classes that do not fulfill this condition. One of the best-known examples of such classes are planar graphs. These graphs are interesting in application, e.g., Dial-a-Ride and Pick-up and Delivery, since road networks are often planar. However, the classical Hamiltonian path problem is \NP-complete on planar graphs~\cite{itai1982hamilton}. There are some subclasses of planar graphs where the Hamiltonian path problem can be solved in polynomial time, which makes them candidates for a polynomial-time algorithm for the POHPP. Here, we focus on outerplanar graphs, where the Hamiltonian path problem can be solved in linear time~\cite{arnborg1991easy,bodlaender1998partial} but the number of Hamiltonian paths may be exponential~\cite{biswas2012hamiltonian}. Thus, the POHPP is not trivial on this class. Nevertheless, we will present a quadratic-time algorithm for the (Min)POHPP for any partial order.

First, we show that we only have to consider 2-connected graphs. This is possible as the problem on arbitrary graphs of a graph class is linear-time reducible to the problem on the 2-connected graphs of this class. Here, linear-time reducible means that any algorithm for the 2-connected case with a running time at least $\Omega(n + m + |\pi|)$ can be used to solve the general case within the same time bound.

\begin{theorem}\label{thm:2-connected}
Given a hereditary graph class $\G$, the MinPOHPP on $\G$ is linear-time reducible to the POHPP on the class of 2-connected graphs in $\G$.
\end{theorem}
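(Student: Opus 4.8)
The plan is to reduce the general problem on a (possibly disconnected by cut vertices) graph $G \in \G$ to a collection of problems on its blocks, each of which is $2$-connected and, since $\G$ is hereditary, again lies in $\G$. First I would compute the block-cut tree $\mathcal{T}$ of $G$ in linear time. The key structural observation is that any Hamiltonian path of $G$ must traverse the blocks in an order consistent with $\mathcal{T}$: each block is entered and left only through cut vertices, and a cut vertex shared by two consecutive blocks on the path is visited exactly once, so it is the endpoint of the subpath inside one block and the startpoint of the subpath inside the next. Consequently a Hamiltonian path of $G$ exists only if $\mathcal{T}$ is itself a path (a \emph{caterpillar}-like condition: the blocks and cut vertices lie on a single line), and every cut vertex has degree at most $2$ in the block structure in the sense that it belongs to at most two blocks on the Hamiltonian path. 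This reduces the global ordering to a linear sequence of blocks $B_1, \dots, B_r$ joined at cut vertices $c_1, \dots, c_{r-1}$.

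Next I would translate the partial order $\pi$ into constraints on the individual blocks. Within each block $B_j$ we must find an ordered Hamiltonian path whose two prescribed endpoints are the cut vertices shared with $B_{j-1}$ and $B_{j+1}$ (or free endpoints for $B_1$ and $B_r$), and whose internal ordering extends $\pi$ restricted to $V(B_j)$. The subtlety is that $\pi$ may impose precedence constraints \emph{between} vertices in different blocks. Here I would use the fact that the blocks are totally ordered along the path: once the block order is fixed, any constraint $u \prec_\pi v$ with $u \in B_a$, $v \in B_b$ and $a < b$ is automatically satisfied, while such a constraint with $a > b$ is unsatisfiable and lets us reject immediately; only intra-block constraints (and constraints involving the shared cut vertices) remain to be checked locally. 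Thus the partial order restricted to each block, augmented with the endpoint requirements coming from the incident cut vertices, yields an equivalent $2$-connected instance. Since the endpoint conditions are themselves expressible as precedence constraints (forcing a cut vertex to be minimal or maximal within its block), these instances are genuine POHPP instances on $2$-connected graphs in $\G$.

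For the minimization version I would sum the optimal costs over the blocks; because the blocks share only the single cut vertices and the path decomposes edge-disjointly across blocks, the total cost is exactly the sum of the per-block path costs, so optimality is preserved. The orientation of the block order (which endpoint is ``first'') may need to be tried in both directions, but with the cut vertices fixed there are only $\O(1)$ relevant choices per block boundary, and the global orientation is pinned down (up to reversal) by $\pi$, so this does not blow up the running time. I would then assemble the per-block solutions by concatenating at the shared cut vertices.

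The main obstacle I anticipate is bookkeeping the linear-time bound rather than the correctness of the reduction. One must verify that constructing the block-cut tree, distributing $\pi$ and the cost function across blocks, checking that $\mathcal{T}$ is a path with the right degree conditions, and recombining the solutions can all be done in $\O(n + m + |\pi|)$ total time, so that invoking the $2$-connected algorithm on each block (whose sizes sum to $\O(n+m)$) stays within the stated bound. In particular, partitioning the $|\pi|$ relation pairs among the blocks and detecting cross-block violations efficiently requires care, since naively testing each pair against the block order could be costly; I would handle this by labelling each vertex with its block index along $\mathcal{T}$ and scanning $\pi$ once to separate intra-block from inter-block constraints and to reject on any backward inter-block pair. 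The hereditary assumption is used only to guarantee each block remains in $\G$, so that the $2$-connected subroutine is applicable.
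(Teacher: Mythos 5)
Your proposal matches the paper's proof in all essentials: both use the block-cut tree (which must be a path for a Hamiltonian path to exist), try the two traversal orientations of the block sequence, reject on backward inter-block pairs of $\pi$, restrict $\pi$ to each block while forcing the shared cut vertices to be first/last via added precedence constraints, and recover the optimum by summing the per-block minimum costs, with a single scan of $\pi$ keeping the overhead within $\O(n+m+|\pi|)$. Your extra attention to distributing the relation pairs among blocks in linear time is a detail the paper treats only briefly, but it is the same reduction.
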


\begin{proof}
We consider the block-cut tree $\T$ of $G$. It is easy to see that $\T$ is a path if $G$ has a Hamiltonian path. Let $(B_1, \ldots, B_k)$ be this path where the $B_i$ are the blocks of $G$. There are only two options how these block can be traversed by a Hamiltonian path, i.e., either in increasing or in decreasing order. We explain the procedure for the increasing order, the decreasing order can be solved analogously by swapping the block ordering. We first check for all tuples $(x,y) \in \pi$ with $x \in B_i$ and $y \in B_j$ whether $i \leq j$. If this is not the case, then there is no solution of the MinPOHPP that traverses the blocks in increasing order. Now, for every $B_i$ let $\pi_i$ be the restriction of $\pi$ to $B_i$. For every $i \in \{2,\dots,k-1\}$, we construct the partial order $\pi'_i$ by adding all the tuples to $\pi_i$ that force the cut vertex in $B_{i-1} \cap B_i$ to be the first vertex and the cut vertex in $B_i \cap B_{i+1}$ to be the last. The partial order $\pi'_1$ forces the cut vertex in $B_1$ to be the last vertex. Similarly, $\pi'_k$ forces the cut vertex of $B_k$ to be the first vertex. Now we solve the MinPOHPP for all blocks $B_i$ using $\pi'_i$. It is easy to see that the minimum cost ordered Hamiltonian path $\P$ of $G$ that fulfills the constraints of $\pi$ and traverses the blocks in increasing order consists of minimum cost ordered Hamiltonian paths for each instance $(B_i, \pi'_i)$. Thus, solving these instances is enough to solve the problem for the whole graph $G$. Note that the blocks of $G$ are also elements of $\G$ since $\G$ is hereditary.

The block-cut tree $\T$ can be found in linear time~\cite{hopcroft1973algorithm}. It is easy to see that the total size of all the instances $(B_i, \pi'_i)$ is in $\O(n + m + |\pi|)$. Thus, the whole procedure solves the MinPOHPP on $G$ within the same time bound as the algorithm needs for the 2-connected case.
\end{proof}

An important property of Hamiltonian paths in planar graphs is given in the following lemma.

\begin{lemma}\label{lemma:outerplanar-interval}
Let $G$ be a planar graph and let $C$ be a face of a plane embedding of $G$. Furthermore, let $(v_0, \dots, v_k)$ be the cyclic ordering of the vertices on $C$ and let $P$ be a prefix of a Hamiltonian path of $G$. Then $V(C) \cap V(P) = \emptyset$ or there exist $q,r \in \{0,\dots,k\}$ with $q \leq r$ such that either $V(C) \cap V(P) = \{v_q,\dots,v_r\}$ or $V(C) \cap V(P) = \{v_r,\dots,v_k,v_0,\dots,v_q\}$.
\end{lemma}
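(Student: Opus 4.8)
The plan is to argue by contradiction, reducing the claim to the impossibility of two vertex-disjoint paths whose endpoints interleave on a facial cycle. Write the Hamiltonian path as $H = (h_1, \dots, h_n)$, so that $P = (h_1, \dots, h_m)$ and the \emph{suffix} $Q = (h_{m+1}, \dots, h_n)$ are both connected subpaths of $H$ with $V(P) \cap V(Q) = \emptyset$. Assume $V(C) \cap V(P)$ is neither empty nor a cyclic arc of the sequence $(v_0, \dots, v_k)$. Labelling each $v_i$ according to whether it lies in $V(P)$, the labels then form at least two maximal ``in''-blocks around $C$, so one can pick vertices $a, c \in V(P)$ from two different in-blocks and vertices $b, d \notin V(P)$ lying in the two gaps between them; thus $a, b, c, d$ occur in this cyclic order on $C$, with $a, c \in V(P)$ and $b, d \in V(Q)$. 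Taking the subpath $P_{ac}$ of $P$ between $a$ and $c$ and the subpath $Q_{bd}$ of $Q$ between $b$ and $d$ gives two paths of $G$ that are vertex-disjoint, hence, since they are subgraphs of a fixed crossing-free embedding, disjoint as point sets in the plane.

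For the topological argument I would pass to the sphere $S^2$, so that there is no distinguished outer face and the facial cycle $Z = v_0 v_1 \cdots v_k$ bounds two closed disks: the (empty) face $\overline{C}$ and a disk $\overline{D}$ containing all of $G$ except the points of $Z$. In particular both $P_{ac}$ and $Q_{bd}$ lie in $\overline{D}$. Since the open disk $C$ contains no point of $G$, I can freely draw simple auxiliary arcs through it. I draw a crosscut $\alpha$ of $C$ from $a$ to $c$ (interior in the open disk $C$); because $P_{ac}$ meets $\alpha$ only in $\{a,c\}$, the union $\gamma := P_{ac} \cup \alpha$ is a Jordan curve, which by the Jordan curve theorem separates $S^2$ into two regions. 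Note that $b$ and $d$ lie on neither $P_{ac}$ (they are not in $V(P)$) nor $\alpha$, so $b, d \notin \gamma$.

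It remains to show that $\gamma$ separates $b$ from $d$. To this end I draw a second crosscut $\delta$ of the empty disk $C$ from $b$ to $d$; since $a, b, c, d$ appear in this cyclic order on $\partial C = Z$, the chords $\alpha$ and $\delta$ can be chosen to cross exactly once and transversally. As the interior of $\delta$ lies in $C$, it is disjoint from $P_{ac} \subseteq \overline{D}$, so $\delta$ meets $\gamma$ in exactly one transversal point; an odd number of crossings forces $b$ and $d$ into different regions of $S^2 \setminus \gamma$. On the other hand $Q_{bd}$ is a connected curve joining $b$ and $d$ that avoids $\gamma$ entirely: it is disjoint from $P_{ac}$ by the point-disjointness above, and disjoint from $\alpha$ since $Q_{bd} \subseteq \overline{D}$ while the interior of $\alpha$ lies in the empty disk $C$ and $a, c \notin V(Q)$. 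Hence $b$ and $d$ lie in the same region of $S^2 \setminus \gamma$, a contradiction. Therefore $V(C) \cap V(P)$ is a cyclic arc, which is exactly one of the two forms in the statement. I expect the main obstacle to be precisely the step that $P_{ac}$ may run through further vertices of $C$, so that it is not itself a crosscut of the face; this is what forces the detour through the auxiliary Jordan curve $\gamma$ and the parity-of-crossings argument, rather than a naive appeal to a crosscut separating the two boundary arcs of $Z$. Throughout I tacitly use that $C$ is bounded by the cycle $Z$, which holds in the $2$-connected setting relevant here.
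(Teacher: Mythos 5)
Your proof is correct, and it is organized differently from the paper's. The paper argues by a minimal counterexample: it takes a \emph{shortest} violating prefix $P$, so that the violation is localized at the last vertex $v_j$ of $P$, relabels so that $V(P)\cap V(C)=\{v_0,\dots,v_i,v_j\}$ with $i+1<j<k$, and then observes that the subpath of $P$ between $v_i$ and $v_j$ (closed up through the empty face) separates the two nonempty gaps $A=\{v_{i+1},\dots,v_{j-1}\}$ and $B=\{v_{j+1},\dots,v_k\}$, so that $G[V(G)\setminus V(P)]$ is disconnected and no suffix path can cover it. You instead argue directly from an arbitrary violating prefix: you characterize non-arcs by an interleaved quadruple $a,b,c,d$ and show that the two vertex-disjoint paths $P_{ac}$ (in the prefix) and $Q_{bd}$ (in the suffix), whose endpoints interleave on the facial cycle, cannot coexist in the embedding. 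The topological core is the same in both proofs --- a graph path closed up by a chord through the empty face yields a separating Jordan curve --- but you make it fully explicit where the paper compresses it into one sentence (``runs completely outside of $C$ and separates $A$ from $B$''): you pass to the sphere to dispose of outer-face issues, you handle the genuine subtlety that $P_{ac}$ may revisit further vertices of $C$ and hence is not itself a crosscut, and you replace the informal separation claim by a crossing-parity argument with a second chord $\delta$. What the paper's extremal choice buys is brevity (the intersection pattern is forced into a clean normal form, and the conclusion is just disconnection of the complement); what your version buys is rigor in the topology and independence from any minimality assumption. Your closing caveat that the face must be bounded by a cycle is well taken, but it applies equally to the paper's proof and is already built into the lemma's hypothesis that the vertices on $C$ carry a cyclic ordering; in the application the graph $G^*$ is $2$-connected, so this is harmless.
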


\begin{proof}
Assume for contradiction that the claim is not true for $C$. Then, let $P$ be the shortest prefix of a Hamiltonian path of $G$ that does not fulfill the claim. Due to the choice of $P$, the last vertex of $P$, say $v_j$, must be an element of $C$. The choice of $P$ implies that the prefix $P \setminus \{v_j\}$ fulfills the claim and without loss of generality we may assume that $V(P) \cap V(C) = \{v_0, \ldots, v_i, v_j\}$ with $i + 1 < j < k$. Let $A = \{v_{i+1}, \ldots, v_{j-1}\}$ and $B = \{v_{j+1}, \ldots, v_{k}\}$. The subpath of $P$ between $v_i$ and $v_j$ runs completely outside of $C$ and separates $A$ from $B$. Thus, the graph induced by $V(G) \setminus V(P)$ is not connected and $P$ cannot be a prefix of a Hamiltonian path of $G$.
\end{proof}

Any 2-connected outerplanar graph has a unique Hamiltonian cycle and the outerplanar embedding of the graph has this cycle as its outer face~\cite{syslo1979characterizations}. This Hamiltonian cycle can be found in linear time~\cite{mitchell1979linear}. Therefore, in the following we assume that the outerplanar embedding of the graph is given and that the vertices are numbered cyclically on the outer face from $0$ to $n-1$ in clockwise direction. We identify the respective number with the vertex. Furthermore, we use the operators $\oplus$ and $\ominus$ as the addition and subtraction modulo $n$, i.e., $a \oplus b \equiv a + b \mod n$ and $a \ominus b \equiv a - b \mod n$. An important ingredient of our algorithm is the efficient checking of the minimality of the vertex in the partial order restricted to the unchosen vertices. This can be done in constant time by encoding the elements that are smaller than a vertex $v$ in $\pi$ via an interval. The variable $f_\pi(v)$ contains the first vertex on $C$ after $v$ in clockwise direction that is smaller than $v$ in $\pi$. The variable $\ell_\pi(v)$ contains the last vertex with this property. If $v$ is minimal in $\pi$, then both $f_\pi(v)$ and $\ell_\pi(v)$ are equal to~$v$.

Our algorithm uses dynamic programming (see \cref{algo:outer}). We consider tuples $(a,b,\omega)$ where $a,b \in \{0,\ldots,n - 1\}$ and $\omega \in \{1,2\}$. The numbers $a$ and $b$ represent the interval $[a,b] = \{a,a\oplus 1, \ldots, b\ominus1, b\}$ of the outer face of $G$, i.e., it contains all the vertices that are visited if we go from $a$ to $b$ on $C$ in clockwise direction. The value $\omega$ describes whether vertex $a$ (if $\omega = 1)$ or vertex $b$ (if $\omega =2$) is the last vertex of the path. Again we use a vector $M$ which has one entry from $\Q \cup \{\infty\}$ for every tuple. The entry of tuple $(a,b,\omega)$ contains the minimal costs of an ordered path in $G$ that contains all the vertices of the interval $[a,b]$, ends in $a$ (if $\omega = 1)$ or $b$ (if $\omega =2$) and is a prefix of a linear extension of $\pi$. If no such path exists, then the entry contains the value $\infty$. The entries of $M$ are filled inductively starting with the tuples whose intervals contain exactly one vertex. Here, we only have to check whether the respective vertex is minimal in $\pi$ which can be done by simply checking whether $f_\pi(v) = \ell_\pi(v) = v$ (see line~\ref{line:outer1}). Thus, we may assume that the entries of all tuples whose interval contains $i - 1$ elements are filled correctly. Now assume that we want to compute the entry of tuple $(a,b,\omega)$ whose interval contains $i$ elements, i.e., $i = (b \ominus a) + 1$ (see \cref{fig:outer} for an illustration). First assume $\omega = 1$, i.e., the vertex $a$ should be the last vertex of the respective path. There are two possible predecessors of $a$ in the path. Either it is the neighbor $x$ of $a$ on $C$ that follows on $a$ in clockwise direction or it is vertex $b$. For both options, we have to check whether the entry of $(x,b,\psi)$ in $M$ is $< \infty$, where $\psi = 1$ if $x$ is the predecessor of $a$ in the path and otherwise $\psi = 2$. In the second case, we also have to check whether the vertices $a$ and $b$ are adjacent. The case $\omega = 2$ works analogously. 

\begin{figure}[t]
\centering
\begin{tikzpicture}[vertex/.style={inner sep=2pt,draw,circle}, path/.style={decoration={snake, amplitude=0.3mm}, decorate}, edge/.style={-}, noedge/.style={dotted}, scale=0.8]
\footnotesize

\def\centerarc[#1](#2)(#3:#4:#5)
{ \draw[#1] ($(#2)+({#5*cos(#3)},{#5*sin(#3)})$) arc (#3:#4:#5); }

\begin{scope}[xshift=0cm]
\draw (0,0) circle (2cm);

\def\w{2.3};
\def\a{20};
\foreach \x in {-4,-3,-2,-1,0,1,2,3,4,5} {\node[vertex, fill=white] (\x) at (${sin(\a*\x)}*(2,0)+{cos(\a*\x)}*(0,2)$) {};}

\def\z{-3};
\node[vertex, fill=gray, label={[label distance=0.0cm]-70:$a$}] (\z) at (${sin(\z*\a)}*(2,0)+{cos(\z*\a)}*(0,2)$) {};
\def\z{-2};
\node[vertex, fill=white, label={[label distance=0.0cm]-70:$x$}] (\z) at (${sin(\z*\a)}*(2,0)+{cos(\z*\a)}*(0,2)$) {};
\def\z{4};
\node[vertex, fill=white, label={[label distance=0.0cm]-130:$b$}] (\z) at (${sin(\z*\a)}*(2,0)+{cos(\z*\a)}*(0,2)$) {};

\draw[dashed] (-3) -- (4);

\centerarc[thick,](0,0)(10:150:2.6)
\centerarc[thick,dotted](0,0)(10:130:2.35)
\end{scope}

\begin{scope}[xshift=7cm]
\draw (0,0) circle (2cm);

\def\w{2.3};
\def\a{20};
\foreach \x in {-4,-3,-2,-1,0,1,2,3,4,5} {\node[vertex, fill=white] (\x) at (${sin(\a*\x)}*(2,0)+{cos(\a*\x)}*(0,2)$) {};}

\def\z{-3};
\node[vertex, fill=white, label={[label distance=0.0cm]-70:$a$}] (\z) at (${sin(\z*\a)}*(2,0)+{cos(\z*\a)}*(0,2)$) {};
\def\z{3};
\node[vertex, fill=white, label={[label distance=0.0cm]-170:$y$}] (\z) at (${sin(\z*\a)}*(2,0)+{cos(\z*\a)}*(0,2)$) {};
\def\z{4};
\node[vertex, fill=gray, label={[label distance=0.0cm]-130:$b$}] (\z) at (${sin(\z*\a)}*(2,0)+{cos(\z*\a)}*(0,2)$) {};

\draw[dashed] (-3) -- (4);

\centerarc[thick,](0,0)(10:150:2.6)
\centerarc[thick,dotted](0,0)(30:150:2.35)
\end{scope}
\end{tikzpicture}
\caption{One step in \cref{algo:outer}. The solid interval represents the interval $[a,b]$ for which we want to compute the entry in~$M$. The left graph shows the case where $\omega$ is $1$ and the right graph shows the case where $\omega$ is $2$. The vertices that should be last in the subpath are filled gray. The dotted intervals represent the respective interval for which the entry of $M$ is checked. In both cases, we check whether the edge $ab$ exists.} \label{fig:outer}
\end{figure}

\begin{algorithm2e}[t]
\small
    \KwIn{2-connected outerplanar graph $G$ with outer face $(0,\ldots,n-1)$, partial order $\pi$ on $V(G)$, cost function $c : E(G) \to \Q$} 
    \KwOut{Minimum cost of an ordered Hamiltonian path $\P$ of $G$ where $\lambda(\P)$ is a linear extension of $\pi$, or $\infty$ if no such path exists}
    \Begin{
        $S \leftarrow \{(a,b,\omega)~|~a,b \in \{0,\ldots,n-1\},~\omega \in \{1,2\}\}$\;
        \ForEach{$v \in V(G)$} {
          $f_\pi(v) \leftarrow$ first vertex $u$ in order $(v\oplus1, \ldots, v\ominus1, v)$ with $(u,v) \in \pi$\;
          $\ell_\pi(v) \leftarrow$ last vertex $u$ in order $(v,v\oplus1, \ldots, v\ominus1)$ with $(u,v) \in \pi$\;
        }
        \ForEach{$(a,b,\omega) \in S$}{
          \lIf{$a = b$ and $f_\pi(a) = \ell_\pi(a) = a$}{$M(a,b,\omega) \leftarrow 0$}\label{line:outer1}
          \lElse{$M(a,b,\omega) \leftarrow \infty$}
        }
        \For {$i \leftarrow 2$ \KwTo $n$} {
			\ForEach{$(a,b,\omega) \in S$ with $(b\ominus a) + 1 = i$} {
			  $x \leftarrow a \oplus 1$\;
			  $y \leftarrow b \ominus 1$\;
			  \If{$\omega = 1$ and $f_\pi(a) \in [a,b]$ and $\ell_\pi(a) \in [a,b]$}{
				$M(a,b,\omega) \leftarrow \min\{M(a,b,\omega), M(x,b,1) + c(ax)\}$\;\label{line:outer3}
				\If{$ab \in E(G)$}{
				  $M(a,b,\omega) \leftarrow \min\{M(a,b,\omega), M(x,b,2) + c(ab)\}$\;\label{line:outer2}
				}
			  } \ElseIf{$\omega = 2$ and $f_\pi(b) \in [a,b]$ and $\ell_\pi(b) \in [a,b]$} {
				  $M(a,b,\omega) \leftarrow \min\{M(a,b,\omega), M(a,y,2) + c(by)\}$\;
				  \If{$ab \in E(G)$}{
					$M(a,b,\omega) \leftarrow \min\{M(a,b,\omega), M(a,y,1) + c(ab)\}$\;
				  }
			  }
			}
		}
		\Return $\min_{v \in \{0, \ldots, n-1\}} M(v, v \ominus 1, 1)$\;
	}
    \caption{MinPOHPP for outerplanar graphs}\label{algo:outer}
\end{algorithm2e}

\begin{theorem}
\Cref{algo:outer} solves the MinPOHPP on a 2-connected outerplanar graph with $n$ vertices in $\O(n^2)$ time.
\end{theorem}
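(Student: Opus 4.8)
The plan is to verify the dynamic-programming invariant announced just before \cref{algo:outer} --- namely that each entry $M(a,b,\omega)$ stores the minimum cost of an ordered path whose vertex set is exactly the clockwise interval $[a,b]$, whose last vertex is $a$ if $\omega=1$ and $b$ if $\omega=2$, and whose ordering is a prefix of a linear extension of $\pi$ (with $\infty$ if no such path exists) --- and then to read off the running time. I would prove the invariant by induction on the interval size $i=(b\ominus a)+1$. The base case $i=1$ is handled in \cref{line:outer1}: the one-vertex path on $a$ is a prefix of a linear extension of $\pi$ exactly when $a$ is minimal in $\pi$, and this is precisely the condition $f_\pi(a)=\ell_\pi(a)=a$.

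The heart of the argument is the inductive step, and here \cref{lemma:outerplanar-interval} does the decisive work. Since $G$ is $2$-connected outerplanar, its outer face is the unique Hamiltonian cycle and every vertex lies on it, so \cref{lemma:outerplanar-interval} applies with $C$ the outer face: the vertex set of any prefix of a Hamiltonian path is a cyclic interval. Applying this to a path and to that same path with its last vertex deleted shows that the last vertex of an ordered path whose vertex set is an interval must be one of the two endpoints of that interval --- which is exactly what the parameter $\omega\in\{1,2\}$ records. Consider $\omega=1$, so the sought path $\P$ ends in $a$; deleting $a$ leaves a path $\P'$ on $[a,b]\setminus\{a\}=[x,b]$ (where $x=a\oplus1$) whose last vertex, again by the lemma, is an endpoint of $[x,b]$, i.e.\ either $x$ or $b$. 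Hence the predecessor of $a$ in $\P$ is $x$ or $b$: the first uses the cycle edge $ax$, which always exists, and the second uses the chord $ab$, whose presence is tested before \cref{line:outer2}. This is exactly why the recurrence needs only the two candidates examined in \cref{line:outer3,line:outer2} (and symmetrically for $\omega=2$), giving $\O(1)$ transitions per state.

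It remains to check that prepending $a$ to $\P'$ preserves the prefix-of-linear-extension property and that taking the minimum yields the optimum. For the former, since $\P'$ already satisfies the property by induction, we only need $a$ to be minimal in $\pi$ restricted to the still-unvisited vertices $V(G)\setminus[x,b]$, i.e.\ that every $u\prec_\pi a$ lies in $[a,b]$. By the definition of $f_\pi(a)$ and $\ell_\pi(a)$ as the first and last elements below $a$ encountered clockwise from $a$, all elements below $a$ lie on the clockwise arc from $f_\pi(a)$ to $\ell_\pi(a)$, so testing $f_\pi(a),\ell_\pi(a)\in[a,b]$ certifies this in $\O(1)$ time; this is the guard used in \cref{algo:outer}. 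The two directions of the induction (any optimal path for $(a,b,\omega)$ restricts to an optimal path for the reduced state, and any valid reduced path extends back) then follow by the standard exchange argument, exactly as in the proof of \cref{thm:width}, so the $\min$ in \cref{line:outer3,line:outer2} is correct. Finally, a Hamiltonian path corresponds to the full interval $[v,v\ominus1]$ ending in its endpoint $v$, so ranging over all $v$ in the return statement $\min_v M(v,v\ominus1,1)$ recovers the optimum over all possible end vertices.

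For the running time I would observe that the outer face (Hamiltonian cycle) is computable in linear time, the values $f_\pi(v)$ and $\ell_\pi(v)$ in $\O(n)$ each (hence $\O(n^2)$ in total, within budget), and adjacency queries in $\O(1)$ via an adjacency matrix built in $\O(n^2)$. There are $2n^2$ tuples in $S$, and by the previous paragraph each is processed with a constant number of table look-ups, comparisons, and a single edge test, i.e.\ $\O(1)$ per tuple, for a total of $\O(n^2)$. The step I expect to require the most care is not the arithmetic but the structural claim underpinning the constant transition count: pinning down, via \cref{lemma:outerplanar-interval}, that peeling the last vertex off an interval-shaped prefix forces the predecessor to be one of only two vertices. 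Once that is in hand, both the correctness and the $\O(n^2)$ bound follow routinely.
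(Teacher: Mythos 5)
Your overall architecture---the DP invariant, the induction on interval size, the minimality guard via $f_\pi$ and $\ell_\pi$, and the $\O(n^2)$ accounting---matches the paper's proof. But there is a genuine gap at exactly the step you single out as the heart of the argument: you apply \cref{lemma:outerplanar-interval} in $G$ to the witness path $\P$ of a DP state and to $\P$ with its last vertex deleted. The lemma's hypothesis is that the path in question is a prefix of a \emph{Hamiltonian path of $G$}, and a minimum-cost path realizing an arbitrary tuple $(a,b,\omega)$ need not extend to any Hamiltonian path of $G$ (the table is filled for every interval, including ones from which no completion exists), so the lemma does not apply to it. Worse, your intermediate blanket claim---that the last vertex of an ordered path whose vertex set is an interval must be an endpoint of that interval---is false for arbitrary paths: in the $6$-cycle $0,1,\dots,5$ with the chord between $1$ and $3$ (a $2$-connected outerplanar graph), the path $(1,3,2)$ has vertex set equal to the interval $[1,3]$ but ends at the internal vertex $2$. (This is consistent with the lemma: $\{1,3\}$ is not a cyclic interval, so $(1,3,2)$ is not a prefix of any Hamiltonian path of that graph.) What your recurrence actually needs is only that the \emph{second-to-last} vertex of a path covering $[a,b]$ and ending at $a$ is $x$ or $b$; this is true, but it requires an argument you have not supplied.

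The paper closes precisely this hole with an auxiliary construction: let $G^*$ be the subgraph induced by $[a,b]$ with the edge $ab$ added if absent. Then $G^*$ is again $2$-connected outerplanar, its outer face carries the same cyclic order as the corresponding arc of $G$, and the DP path $\P$ \emph{is} a Hamiltonian path of $G^*$; applying \cref{lemma:outerplanar-interval} to the prefixes of $\P$ inside $G^*$ forces the second-to-last vertex to be a neighbor of $a$ on the outer face of $G^*$, i.e.\ $x$ or $b$. Alternatively, you could repair your proof without $G^*$ by running the completeness direction only along the prefixes of an actual optimal Hamiltonian path $H$ of $G$ (to which the lemma does apply verbatim, and for which the endpoint property of each prefix follows by comparing consecutive prefixes), showing by induction on the prefix length that the corresponding $M$-entry is at most the cost of that prefix, and combining this with your soundness direction. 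But as written, your induction quantifies over all tuples, and for those the direct appeal to \cref{lemma:outerplanar-interval} is unjustified.
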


\begin{proof}
We prove the following claim. For any tuple $(a,b, \omega) \in S$, the respective $M$-value is the minimal cost of an ordered path $\P$ of $G$ fulfilling the following properties (or $\infty$ if no such path exists):

\begin{enumerate}[(i)]
  \item $\P$ consists of the vertices $a, a \oplus 1, \ldots, b \ominus 1, b$,\label{cond:outer1}
  \item $\lambda(\P)$ is a prefix of a linear extension of $\pi$,\label{cond:outer2}
  \item if $\omega = 1$, then the last element of $\lambda(\P)$ is $a$, otherwise it is $b$.\label{cond:outer3}
\end{enumerate}

For all tuples $(a,a,\omega)$, this claim holds since their entries in $M$ are set to 0 if $a$ is a minimal element of $\pi$ and to $\infty$ otherwise (see line~\ref{line:outer1} in \cref{algo:outer}). Thus, we may assume that the claim holds for all tuples $(a,b, \omega)$ where the set $[a,b] = \{a, a \oplus 1, \ldots, b \ominus 1, b\}$ has size $i-1$. Now let $(a,b, \omega)$ be a tuple where the set $[a,b]$ has size $i$. 

First assume that the entry of $(a,b, \omega)$ in $M$ is $< \infty$. We consider the case that $\omega = 1$. The case $\omega = 2$ can be shown analogously. Since the entry in $M$ was set to a value $< \infty$ in line~\ref{line:outer3} or \ref{line:outer2}, $f_\pi(a) \in [a,b]$ and $\ell_\pi(a) \in [a,b]$. Thus, any vertex $v \in V(G)$ with $v \prec_\pi a$ is an element of the interval $[a,b]$. Furthermore, the entry of $(x,b,1)$ is $< \infty$ or the the entry of $(x,b,2)$ is $< \infty$ with $x = a \oplus 1$. Thus, due to the induction hypothesis, there is an ordered path $\P$ fulfilling the conditions \ref{cond:outer1} till \ref{cond:outer3} for one of the two tuples. Hence, $\P$ contains the elements of the interval $[x,b]$ and $\lambda(\P)$ is a prefix of a linear extension of $\pi$.

If $M(x,b,1)$ is $< \infty$, then $\P$ ends in $x$. Since the edge $xa$ is part of the outer face of $G$, the path that is constructed from $\P$ by appending $a$ to the end is an ordered path containing the vertices of $[a,b]$. Since all the vertices that are smaller than $a$ in $\pi$ are contained in $[a,b]$, this path be can be extended to a linear extension of $\pi$. However, if $M(x,b,1) = \infty$, then $M(x,b,2) < \infty$ and $ab \in E(G)$. Thus, the ordered path $\P$ ends in $b$ and we can add the edge $ab$ at the end of $\P$. Hence, there is an ordered path fulfilling all three conditions for $(a,b,\omega)$.

Now assume there is a path fulfilling all conditions for the tuple $(a,b,\omega)$. Let $\P$ be the path with minimal cost. Again we only prove the case $\omega = 1$, the other case follows analogously. As $\omega = 1$, the path $\P$ ends in vertex $a$. Let $\P'$ be the subpath of $\P$ without $a$. Let $x = a \oplus 1$. The path $\P'$ contains exactly the elements of the interval $[x,b]$. Furthermore, it can be extended to a linear extension of $\pi$ because $\P$ can be extended. 

We claim that $\P'$ either ends in $b$ or in $x$. To show this, we consider the subgraph of $G$ that is induced by the vertices of the interval $[a,b]$. We add the edge $ab$ to this graph if it is not already present. We call the resulting graph $G^*$. It is easy to see that $G^*$ is a 2-connected outerplanar graph and the vertices on the outer face of $G^*$ have the same cyclic ordering as in $G$. Furthermore, $\P$ is a Hamiltonian path of $G^*$. Hence, \cref{lemma:outerplanar-interval} implies that the vertices of every prefix of $\P$ appear consecutively on the outer face of $G^*$. In particular, this holds if we remove the last vertex of $\P$, i.e, vertex $a$, and the second last vertex of $\P$. Hence, the second last vertex of $\P$ must be a neighbor of $a$ on the outer face of $G^*$. The two neighbors on the outer face are $x$ and $b$. Therefore, $\P'$ ends either in $x$ or in $b$. In both cases, $\P'$ is the minimum cost path fulfilling the properties for tuple $(x,b,1)$ or $(x,b,2)$, respectively, since otherwise we could replace $\P'$ in $\P$ with this minimum cost path and improve the cost of $\P$. Due to the induction hypothesis, the entry $M(x,b,1)$ or $M(x,b,2)$ contains the cost of $\P'$. In both cases, \cref{algo:outer} has set $M(a,b,\omega)$ to the cost~of~$\P$.

Finally, we consider the running time bound. We use an adjacency matrix of $G$ containing the cost of each edge. This matrix can be constructed in $\O(n^2)$ time. To compute the values $f_\pi(v)$ and $\ell_\pi(v)$ we have to iterate through $\pi$ only once and this can be done in $\O(n^2)$ time. There are $\O(n^2)$ many tuples in set $S$. For each of those tuples, we have to check a constant number of entries of $M$ and whether the values $f_\pi(v)$ and $\ell_\pi(v)$ of some vertex $v$ are within some interval. This can both be done in constant time. Furthermore, we have to check the existence of a particular edge. Since we use an adjacency matrix, this is also possible in constant time. Hence, the total running time is bounded by $\O(n^2)$.
\end{proof}

Using \cref{thm:2-connected}, we can extend this result to all outerplanar graphs.

\begin{theorem}
Given an outerplanar graph with $n$ vertices, the MinPOHPP can be solved in $\O(n^2)$ time.
\end{theorem}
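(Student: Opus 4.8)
The plan is to combine the two preceding results almost directly, so the proof will be short. First I would observe that the class of outerplanar graphs is hereditary: deleting a vertex from an outerplanar graph leaves an outerplanar graph, since one may restrict any outerplanar embedding to the remaining vertices. Hence \cref{thm:2-connected} applies with $\G$ the class of outerplanar graphs, and it reduces the MinPOHPP on an arbitrary outerplanar graph in linear time to the MinPOHPP on the blocks $B_1, \dots, B_r$ of $G$, which are themselves $2$-connected outerplanar graphs. On each block I would then invoke the algorithm of \cref{algo:outer}, which solves the MinPOHPP on a $2$-connected outerplanar graph with $n_i = |V(B_i)|$ vertices in $\O(n_i^2)$ time.

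The only thing left to check is that running this quadratic routine on every block, together with the linear-time overhead of the reduction, still yields a total running time of $\O(n^2)$ rather than something larger. The key observation is structural: as shown in the proof of \cref{thm:2-connected}, whenever $G$ admits a Hamiltonian path its block-cut tree is a path, so the blocks pairwise intersect only in cut vertices and each of the $r-1$ cut vertices lies in exactly two consecutive blocks. Counting the vertices of the blocks with multiplicity therefore gives $\sum_i n_i = n + (r-1) \leq 2n$. Using $\sum_i n_i^2 \leq \left(\sum_i n_i\right)^2$, the total time spent on the blocks is $\sum_i \O(n_i^2) = \O\!\left(\left(\sum_i n_i\right)^2\right) = \O(n^2)$. (When the block-cut tree is not a path, the reduction detects this in linear time and reports that no Hamiltonian path exists, so this case is dominated by the above.) Adding the $\O(n + m + |\pi|) = \O(n^2)$ cost of the reduction itself -- recall that $m = \O(n)$ for outerplanar graphs and $|\pi| = \O(n^2)$ in general -- yields the claimed bound.

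I expect the genuine (though mild) obstacle to be precisely this last summation: a priori the per-block quadratic running times could accumulate to more than $\O(n^2)$, and it is exactly the path structure of the block-cut tree, which controls the overlap of the blocks and keeps $\sum_i n_i$ linear in $n$, that prevents this. Everything else is an immediate application of \cref{thm:2-connected} and \cref{algo:outer}.
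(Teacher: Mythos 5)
Your proposal is correct and takes essentially the same route as the paper, which proves this theorem simply by combining \cref{thm:2-connected} with the $\O(n^2)$ algorithm for 2-connected outerplanar graphs (\cref{algo:outer}). The only difference is that you make explicit the block-size accounting ($\sum_i n_i \leq 2n$ because the block-cut tree is a path, hence $\sum_i n_i^2 = \O(n^2)$), a step the paper leaves implicit in its definition of linear-time reducibility, where it is asserted that any $\Omega(n+m+|\pi|)$-time algorithm for the 2-connected case solves the general case ``within the same time bound.''
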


Similar as for \cref{algo:width}, we can easily modify \cref{algo:outer} in such a way that it not only outputs the minimum cost, but also computes the minimum cost Hamiltonian path if some exists.

\section{Further Research}

The results obtained in this paper give rise to many related questions pertaining Hamiltonicity with precedence constraints. The result on outerplanar graphs suggests that a similar algorithm could be obtained for the class of \emph{series-parallel graphs} which form a superclass of outerplanar graphs. As outerplanar graphs have bounded treewidth, another research direction would be the complexity of the POHPP on graphs of bounded treewidth. Furthermore, we have seen that bounding the width of the given poset leads to a polynomial-time algorithm. It seems reasonable to ask whether bounding other parameters of a poset will lead to similar results. Our \NP-completeness result shows that bounding the height of a poset is not effective for graph classes containing complete bipartite graphs. However, no such result is known for bounded \emph{poset dimension}. Note that the trivial poset has dimension 2 and, therefore, POHPP is \NP-complete for any graph class for which the regular Hamiltonian path problem is hard, even if the poset dimension is bounded by a constant $ \geq 2 $. However, for classes such as threshold graphs or chain graphs -- for which Hamiltonian path is solvable in polynomial time -- bounding the poset dimension could be a viable approach. Similarly, one could consider the POHPP parameterized by both a parameter of the partial order and a graph width parameter. An example would be the combination of treewidth and the width of the partial order. Furthermore, it would be interesting to approach the same problems on directed graphs. While directed graphs are already considered in the TSP-PC and the SOP, it might be possible to achieve further results for the POHPP.

By using cyclic orders~\cite{huntington1916independent,huntington1924sets,novak1982cyclically} instead of regular partial orders, we can define the \emph{Partially Ordered Hamiltonian Cycle Problem}: Given a graph $ G = (V,E)$ and a partial cyclic order $ \mathcal{C} \subset V^3 $ on the vertex set of $ G $, is there a Hamiltonian cycle that respects the order $ \mathcal{C} $? This problem appears to be considerably harder to tackle, as the structure of cyclic orders is much more complex than that of partial orders~\cite{fiorini2003extendability,galil1978cyclic}.

\bibliographystyle{plainurl}
\bibliography{hamilton}

\end{document}